\documentclass[twoside,leqno]{article}
\usepackage[letterpaper]{geometry}
\usepackage{ltexpprt}					
\usepackage{cite}                       
\usepackage{caption}                    
\usepackage{subcaption}                 
\usepackage{float}                      
\usepackage{chngcntr}
\usepackage{apptools}
\usepackage{cancel}
\usepackage{graphicx}                   
\usepackage{amsmath}                    
\usepackage{amssymb}                    
\usepackage{amsfonts}                   
\usepackage{mathrsfs}                   
\usepackage{url}                        
\usepackage{color}                      
\usepackage{hyperref}                   
\hypersetup{colorlinks=true,linkcolor=[rgb]{0,0,0},citecolor=[rgb]{0,0,0},urlcolor=[rgb]{0,0,0}}
\usepackage{enumitem}                   
\usepackage{thmtools}                   
\usepackage[parfill]{parskip}
\usepackage{comment}
\usepackage{microtype}

\allowdisplaybreaks

\graphicspath{{./Figs/}}

\newcommand{\floor}[1]{\left\lfloor #1\right\rfloor}

\newcommand{\ang}[1]{\left\langle #1 \right\rangle}
\newcommand{\inner}[2]{\left\langle #1, #2 \right\rangle}
\newcommand{\RE}{\mathbb{R}}

\newcommand{\eps}{\varepsilon}

\newcommand{\ST}{\,:\,}

\newcommand{\etal}{\textit{et al.}}
\newcommand{\bd}{\partial}
\newcommand{\Gradient}{\nabla}
\newcommand{\Hess}{\Gradient^2}
\newcommand{\DF}{d_P}
\newcommand{\DFSq}{D_P}

\newcommand{\sDF}{\widetilde{d}_P}
\newcommand{\sDFSq}{\widetilde{D}_P}
\newcommand{\patch}{\Pi}
\newcommand{\depth}{\Delta_{\patch}}
\newcommand{\shape}{\sigma}
\newcommand{\localF}{f}
\newcommand{\SP}{\kern+1pt}
\DeclareMathOperator*{\diam}{\mathrm{diam}}

\DeclareMathOperator*{\dist}{\mathrm{dist}}
\DeclareMathOperator*{\vol}{\mathrm{vol}}

\DeclareMathOperator*{\radius}{\mathrm{radius}}
\DeclareMathOperator*{\nn}{\mathrm{nn}}
\DeclareMathOperator*{\supp}{\mathrm{supp}}
\DeclareMathOperator*{\rep}{\mathrm{rep}}
\DeclareMathOperator*{\ch}{\mathrm{ch}}
\DeclareMathOperator*{\cell}{\mathrm{cell}}

\DeclareMathOperator{\interior}{int}

\begin{document}


\title{\Large Differentiable Approximations for Distance Queries}

\author{%
    Ahmed Abdelkader\thanks{Research conducted in part while at the University of Texas at Austin.}\\
		Google LLC \\
		Mountain View, CA 
		\and
    David M. Mount\\
		Department of Computer Science and \\
		Institute for Advanced Computer Studies \\
		University of Maryland, College Park 
}

\date{}

\maketitle

\begin{abstract}
The widespread use of gradient-based optimization has motivated the adaptation of various classical algorithms into differentiable solvers compatible with learning pipelines. In this paper, we investigate the enhancement of traditional geometric query problems such that the result consists of both the geometric function as well as its gradient. Specifically, we study the fundamental problem of distance queries against a set of points $P$ in $\mathbb{R}^d$, which also underlies various similarity measures for learning algorithms.

The main result of this paper is a multiplicative $(1+\varepsilon)$-approximation of the Euclidean distance to $P$ which is differentiable at all points in $\mathbb{R}^d \setminus P$ with asymptotically optimal bounds on the norms of its gradient and Hessian, from a data structure with storage and query time matching state-of-the-art results for approximate nearest-neighbor searching. The approximation is realized as a regularized distance through a partition-of-unity framework, which efficiently blends multiple local approximations, over a suitably defined covering of space, into a smooth global approximation. In order to obtain the local distance approximations in a manner that facilitates blending, we develop a new approximate Voronoi diagram based on a simple point-location data structure, simplifying away both the lifting transformation and ray shooting.
\end{abstract}

\noindent\textbf{Keywords:} Differentiable approximations, regularized distance functions, approximate Voronoi diagrams, partition of unity


\newpage

{ 
 \hypersetup{linkcolor=black}
 \tableofcontents
}
\newpage

\section{Introduction} \label{sec:intro}

Gradient-based optimization is a cornerstone of end-to-end learning, enabling performant algorithms to be discovered from the data~\cite{LBH15, GBC16, ACG19}. Despite the demonstrated power of pure data-driven approaches, their performance and resource requirements can be dramatically improved by recognizing (classical) sub-problems with known solutions, e.g., from a suitable solver~\cite{CRB18, WDW19}. Before such solvers can be incorporated into the learning pipeline, they often need to be made \emph{differentiable}.

Several algorithms have already been adapted into differentiable variants for learning, starting with optimization algorithms~\cite{AmK17} for graph~\cite{ZDM23}, convex~\cite{AAB19}, non-linear~\cite{PFM22}, and general combinatorial~\cite{QSY22} problems. These new species of optimization algorithms quickly found applications in computer vision~\cite{GMA23}, computer graphics~\cite{LWF23}, physics simulations~\cite{FFR21}, and control~\cite{GHS21}. Beyond optimization, key algorithms in computer graphics and shape modeling were also adapted, including rendering~\cite{LoB14} and iso-surface extraction~\cite{GRL24}. Such differentiable solvers can significantly increase the generalizability of learned models as well as speed up convergence~\cite{ASA18}, however, their invocations in the \emph{inner loop} of these learning-based pipelines make their performance vitally important~\cite{HAL20}.

In parallel, the algorithm theory community studied a number of notions related to differentiability, including \emph{continuity}~\cite{KuY23a, KuY23b, KuY24} and \emph{stability}~\cite{MSV18, HKM21}. In contrast, we study the efficient approximation of distance functions as routinely used to define various optimization objectives, e.g., for learning~\cite{BIJ23} and robotics~\cite{CLH05}. Specifically, a point set $P$ in $\RE^d$, which might be discrete or continuous, naturally induces a \emph{distance function} or a scalar \emph{distance field}
\[
    d_P: x \mapsto \inf_{s\in P} \|x - s\|,
\]
where $\|\cdot\|$ denotes the Euclidean norm. Given a query point $x \in \RE^d$, the objective is to evaluate both $d_P(x)$ and $\Gradient d_P(x)$, where $\Gradient d_P(x)$ denotes the gradient of $d_P$ with respect to $x$.

Answering either type of query exactly presents a unique set of challenges, especially in dimensions $d \geq 4$. Practically speaking, computing the exact distance $d_P(x)$ takes close to linear time, where the diminishing $O(n^{1-1/d})$ speed-up requires indexing a spatial decomposition encoding the Voronoi diagram of $P$~\cite{LeW77}, which can have a complexity of $\Theta(n^{\floor{d/2}})$~\cite{AMNSW98, HIM12}. Second, $\Gradient d_P(x)$ is not defined when $x$ is a critical point of $d_P$, i.e., when the $d_P(x)$ is realized by more than one point in $P$~\cite{BaE17}. Although generalized gradients can be employed to define a unique \emph{driver} at each point $x$, the resulting \emph{flow} field is itself non-smooth and its structure, known as the \emph{flow complex}, is closely related to that of the Voronoi diagram making it prohibitively large to store and index~\cite{BDG08}.

To address those challenges, we take inspiration from established works on \emph{regularized distance functions}~\cite{Ste70, Fra79}, where the objective is to define an approximating function that is everywhere differentiable.  Beyond their key role in extension theorems, and more broadly harmonic analysis, regularized distances were also applied in the context of geodesic distances~\cite{CLP20, EGS23}.

The closest work to ours is the regularized distance of Stein~\cite{Ste70}. His approach is based on a partition-of-unity over a Whitney decomposition, where the size of each cell is proportional to its distance from $P$. However, Stein's distance can be smaller than the true distance, which deviates from common definitions in computational geometry. Moreover, Stein's distance is bounded above and below by \emph{constant factors}, not accommodating a controllable approximation parameter $\eps$. (This limitation was later addressed by Fraenkel~\cite{Fra79}.) In addition, the Whitney decomposition employed by Stein~\cite{Ste70} is built using hypercubes, which cannot be used to obtain the best storage bounds.

In contrast to the works cited above, our paper concerns computationally-efficient multiplicative distance approximations achieving state-of-the-art storage bounds. While we are not aware of any work on gradient queries, the case for distance queries bears similarity to the well-studied approximate nearest-neighbor (ANN) search problem~\cite{AMM09a, HIM12, HaK15}. A natural approach to approximate $d_P(x)$ is to utilize an ``off-the-shelf'' ANN data structure to find the closest point $p \in P$ and then return $\|q-p\|$. Given an approximation parameter $\eps > 0$, such data structures return an \emph{$\eps$-approximate nearest neighbor} ($\eps$-ANN), or \emph{witness}, which is a point that is within a factor of $1+\eps$ of the true closest distance~\cite{AbM23}. Hence, we are guaranteed that $\|q - p\| \leq (1 + \eps)\DF(q)$.

An unfortunate consequence of this approach is that the resulting distance field encoded by traditional $\eps$-ANN data necessarily structures suffers from \emph{jump discontinuities}, rendering it unfit as a differentiable approximation. We proved the following result in an earlier paper.

\begin{lemma}[Abdelkader and Mount~\cite{AbM23}] \label{impossibility.lem}
If a witness-based distance function for a finite point set $P \subset \RE^d$ is inexact at even one point, it cannot be both continuous and provide a finite bound on relative errors.
\end{lemma}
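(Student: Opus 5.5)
We first fix the notion precisely. A \emph{witness-based} distance function assigns to each query $x \in \RE^d$ a witness $w(x) \in P$ and returns $f(x) = \|x - w(x)\|$. It is \emph{exact} at $x$ if $f(x) = \DF(x)$. A finite bound on relative error means there is a constant $c \geq 1$ with $f(x) \leq c\,\DF(x)$ for all $x$. The plan is to assume $f$ is continuous with such a bound and is inexact at some point $x_0$, and derive a contradiction. The main idea is that continuity forces the witness to stay locally constant wherever it is nonzero-distance, and the relative error bound forces exactness near the sites.

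\textbf{Step 1 (near sites).} For $p \in P$ and $x$ close to $p$, $\DF(x) = \|x-p\|$, which tends to $0$. So $f(x) \leq c\|x-p\| \to 0$. If $x$ is closer to $p$ than $\frac{1}{1+c}$ of the minimum interpoint distance $\delta$ of $P$, any witness $q \neq p$ would give $f(x) \geq \delta - \|x-p\| > c\|x-p\|$. Hence $w(x)=p$ and $f$ is exact in a small ball around each site. At $x=p$ itself, $f(p)\le c\cdot 0$ forces $w(p)=p$.

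\textbf{Step 2 (witness is locally constant where $f$ is continuous).} Consider the finitely many functions $g_q(x)=\|x-q\|$. Since $f(x)\in\{g_q(x)\}$, we partition $\RE^d\setminus P$ by witness. At a point $x$ where $f$ is continuous, if nearby points use witness $q'$ while $x$ uses $q$, then continuity gives $g_q(x)=g_{q'}(x)$, so $x$ lies on a bisector. Inexactness is an open condition: the set $U=\{x: f(x)>\DF(x)\}$ is open because both $f$ and $\DF$ are continuous.

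Next take the connected component $C$ of $U$ containing $x_0$. We want to show $C$ either is unbounded in a way that violates the ratio bound, or has a boundary point that gives a contradiction. At a boundary point $y$ of $C$ we have $f(y)=\DF(y)$, while inside $C$ we have $f=g_q>\DF$.

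The cleaner route is a path argument. Move from $x_0$ along the segment toward its true nearest neighbor $p$, i.e.\ $x_t = x_0 + t(p-x_0)$. Along this segment $\DF(x_t)=\|x_t-p\|$ exactly, since nearest-neighbor balls shrink. By Step 1, $f(x_t)=\DF(x_t)$ for $t$ near $1$, and $f(x_0)>\DF(x_0)$. The function $h(t)=f(x_t)-\|x_t-p\|$ is continuous, positive at $0$ and zero near $1$. Let $t^*$ be the first time $h(t)=0$. On $[0,t^*)$ the value $f(x_t)$ equals some $\|x_t-q\|$ with $q\neq p$ and $\|x_t-q\|>\|x_t-p\|$. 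By continuity, at $t^*$ we get $\|x_{t^*}-q\|=\|x_{t^*}-p\|$, with $q$ drawn from the finitely many sites used as $t\uparrow t^*$, taking a subsequence. But $x_{t^*}$ lies strictly between $x_0$ and $p$, where $p$ is the \emph{unique} nearest site: for $t>0$, the ball around $x_t$ of radius $\|x_t-p\|$ lies inside the ball around $x_0$ and touches it only at $p$. So $\|x_{t^*}-q\|>\|x_{t^*}-p\|$ for every $q\neq p$, a contradiction. The case $t^*=0$ is excluded because $h(0)>0$.

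\textbf{Main obstacle.} The delicate step is this last one: making the limit argument rigorous when the witness may switch among several sites along $[0,t^*)$, and confirming the uniqueness claim for $t>0$ even when $x_0$ itself is equidistant to several sites. If several sites tie at $x_0$, choose any one as $p$; the strict-interior ball argument still gives uniqueness for $t>0$. Finiteness of $P$ makes the subsequence extraction immediate. Note that the relative-error bound is used only to secure exactness near $p$ (Step 1). This is exactly why dropping it allows continuous inexact witness functions such as a constant witness, matching the statement.
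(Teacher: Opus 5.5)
Your proposal is correct and follows the route the paper sketches (deferring the full proof to~\cite{AbM23}). You walk from the inexact point $x_0$ along the segment to its true nearest neighbor $p$ and use the relative-error bound to force exactness at $p$. At the first parameter $t^*$ where the witness must switch to $p$, continuity plus finiteness of $P$ would produce a second site equidistant with what is in fact the unique nearest site of $x_{t^*}$, which is a contradiction. The exploratory Step~2 material (the open set $U$ and its components, and the local-constancy heuristic) is not needed. Step~1 can also be shortened: $f(p) \le c \cdot \DF(p) = 0$ already gives $h(1)=0$, and the same limit argument handles the case $t^*=1$.
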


It is not hard to see why. Intuitively, if the result is inexact at any one query point, then by moving the query point along a line segment to its true nearest neighbor, it can be shown that the witness must jump to the true nearest neighbor at some point along this segment, and a discontinuity must arise at this jump. In this same paper, we also presented a data structure for continuously approximating the distance to a polytope's boundary from points in its interior. This data structure achieved state-of-the-art storage bounds through the application of \emph{anisotropic} covering elements derived from Macbeath regions~\cite{AFM17a}. However, that data structure is based on a single type of smoothing elements (ellipsoids, in particular) to derive progressively-finer coverings of the entire polytope. These ellipsoids were arranged into a hierarchy through the use of a rooted directed-acyclic-graph. In contrast, efficient data structures for geometric query problems often require more flexibility in their design. In particular, they require a greater variety of covering elements combined into more sophisticated arrangements. Furthermore, since the data structure of~\cite{AbM23} only provided bounds on the absolute error, rather than relative error, the corresponding bound on the norm of the Hessian followed in a rather straightforward manner. In contrast, the results of this paper involve a more intricate interplay between efficiency and smoothness for geometric approximations. (See Section~\ref{sec:norm_bounds} for further details.)

In this sense, our work presents a smoothed geometric data structure that combines many of the benefits of prior works~\cite{Ste70,Fra79,AbM23}, and presents a more mature framework for designing smooth data structures enabling efficient differentiable approximations, starting with the fundamental problem of distance to a set of points in Euclidean space. Summarizing, we study the following problem:

\begin{description}
\item[Differentiable Distance-Query Problem:]
Given a point set $P \subset \RE^d$ and an approximation parameter $\eps > 0$, implement a queryable continuous approximation $\widetilde{d}_P: \RE^d \to \RE$ satisfying $d_P(x) \leq \widetilde{d}_P(x) \leq (1+\eps)\cdot d_P(x)$. For an input query $x$, the output is both the distance approximation $\widetilde{d}_P(x)$ as well as its gradient $\Gradient \widetilde{d}_P(x)$.
\end{description}

We will present a solution to this problem when the set in question is a set $P$ of $n$ points in $\RE^d$ (where $d$ is a fixed constant). The space and query time complexities are essentially the same as those of the best known data structures for answering $\eps$-ANN queries. In addition, we obtain optimal bounds~\cite{Ste70,Fra79} on the rate of change of the distance function encoded by the data structure, in terms of the norms of its gradient and Hessian. Throughout, we assume that the dimension $d$ is a constant independent of $n$ and $\eps$.

\begin{theorem}\label{thm:puann}
Given a set $P$ of $n$ points in $\RE^d$ and an approximation parameter $0 < \eps \leq 1/2$, there exists a smooth function $\sDF$ satisfying $\DF(x) \leq \sDF(x) \leq (1+\eps)\cdot\DF(x)$ for all $x \in \RE^d$, which can be evaluated at any point along with its gradient in $O\big( \log \frac{n}{\eps} \big)$ time from a data structure using $O\big(n / \eps^{d/2} \big)$ space.%
\footnote{Formally, the gradient of the Euclidean distance is undefined whenever $x \in P$. Our algorithm actually computes the squared Euclidean distance, and this function has well-defined gradients at all points.}
Further, the norms of the gradient and Hessian of $\sDF$ are asymptotically optimal, satisfying
\[
    \|\Gradient \sDF(x)\| 
        ~ = ~ O(1)
        \qquad \text{and} \qquad
    \|\Hess \sDF(x)\| 
        ~ = ~ O\bigg(\frac{1}{\eps\cdot\DF(x)}\bigg).
\]
\end{theorem}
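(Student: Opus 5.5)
The plan is to build $\sDF$ as a partition-of-unity blend of local approximations defined over a covering of $\RE^d \setminus P$ adapted to an approximate Voronoi diagram (AVD). We work with the squared distance $\DFSq = \DF^2$ throughout and set $\sDF = \sqrt{\sDFSq}$ only at the end. Concretely, we use the standard AVD construction (e.g., Arya--Malamatos--Mount style) based on a compressed quadtree with $O(n/\eps^{d/2})$ cells. Each cell $\patch$ is assigned a small set of representatives $\rep(\patch) \subseteq P$ such that, for every $x$ in a constant-factor expansion $\widehat{\patch}$ of the cell, the lower envelope $\min_{p \in \rep(\patch)} \|x-p\|^2$ already lies within a factor $(1+\eps)$ of $\DFSq(x)$. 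The key design choice is to make each local function $\localF_\patch$ smooth rather than a hard minimum. One option is a soft-min of the paraboloids $\|x-p\|^2$ over $p \in \rep(\patch)$. Alternatively, since a single representative suffices on well-separated cells, we can take $\localF_\patch(x) = (1+c\eps)\|x-p_\patch\|^2$ with $p_\patch$ chosen so that $\|x-p_\patch\| \le (1+\eps)\DF(x)$ on $\widehat{\patch}$. The inflation only helps the lower bound once combined with the exact-min requirement, so it must be checked.

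With the locals fixed, we choose bump functions $\phi_\patch$ supported on the expanded cells, with $\|\Gradient\phi_\patch\| = O(1/r_\patch)$ and $\|\Hess\phi_\patch\| = O(1/r_\patch^2)$, where $r_\patch$ is the cell size. We normalize them to $\psi_\patch = \phi_\patch / \sum \phi$ and define $\sDFSq = \sum_\patch \psi_\patch \localF_\patch$. Every $x$ lies in $O(1)$ expanded cells because quadtree cells have bounded overlap after constant expansion and neighbors have comparable size. A query therefore reduces to point location in the quadtree in $O(\log(n/\eps))$ time, followed by $O(1)$ work to evaluate the neighboring $\psi_\patch$, $\localF_\patch$ and their gradients. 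The approximation bound is immediate, since a convex combination of values each lying in $[\DFSq, (1+\eps)^2\DFSq]$ lies in the same interval. Rescaling $\eps$ then gives $\DF \le \sDF \le (1+\eps)\DF$. Representatives guarantee $\localF_\patch \ge \DFSq$ automatically, so the lower bound is free for these locals.

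For the derivative bounds we use the identities $\sum \Gradient\psi_\patch = 0$ and $\sum \Hess\psi_\patch = 0$ to rewrite the derivatives as
\[
  \Gradient\sDFSq = \sum_\patch \psi_\patch \Gradient\localF_\patch + \sum_\patch \Gradient\psi_\patch\,(\localF_\patch - \DFSq(x)).
\]
Here $|\localF_\patch - \DFSq| = O(\eps\,\DFSq)$ on the support and $\|\Gradient\localF_\patch\| = O(\DF)$. The gradient bound $\|\Gradient\sDFSq\| = O(\DF)$, and hence $\|\Gradient\sDF\| = O(1)$, then follows provided $r_\patch \gtrsim \eps\,\DF$ on the supports. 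The Hessian has three contributions. The term $\sum \psi_\patch\Hess\localF_\patch$ is $O(1)$ for paraboloids, and $O(1/\eps)$ for soft-mins with temperature tuned to the error. The cross terms are $O(\DF/r_\patch)$. The last term, $\sum \Hess\psi_\patch(\localF_\patch - \DFSq)$, is $O(\eps\DFSq/r_\patch^2)$. Converting through $\sDF = \sqrt{\sDFSq}$ costs an extra $\|\Gradient\sDFSq\|^2/\sDFSq^{3/2} = O(1/\DF)$, which yields the target $O(1/(\eps\DF))$ exactly when $r_\patch = \Theta(\eps\,\DF)$ or larger, together with matching blend errors.

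The main obstacle is precisely this scale tension. The AVD cells that achieve $O(n/\eps^{d/2})$ storage are, near Voronoi boundaries, as small as $\Theta(\eps\,\DF)$ or even smaller, while the error of a local approximation over a cell of size $r$ is not automatically $O(\eps \DFSq)$ in the refined sense needed. The Hessian term $\eps\DFSq/r^2$ forces $r \ge \eps\DF$, while storage pushes toward coarse, anisotropic use of the $\eps^{-d/2}$ budget. I would first try to prove a structural lemma for the new AVD: every cell $\patch$ has size $r_\patch = \Omega(\eps\,\DF(x))$ for $x \in \widehat{\patch}$, and on $\widehat{\patch}$ the representative paraboloids have a gap of the form $|\localF_\patch - \DFSq| \le \eps\,\DF\, r_\patch$, so that the last Hessian term becomes $O(\DF/r_\patch) = O(1/\eps)$. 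This is exactly where the new point-location-based AVD, rather than the lifting/ray-shooting one, must be designed to guarantee the needed cell-size lower bound. Optimality of the bounds can be cited from Stein/Fraenkel rather than reproved.
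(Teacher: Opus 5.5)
Your analytic layer is correct and is essentially the paper's own. The convex-combination argument for the two-sided bound, the rewriting of $\Gradient\sDFSq$ using $\sum_\patch \Gradient\psi_\patch = 0$, and the conversion through $\sDF=\sqrt{\sDFSq}$ are exactly Section~\ref{sec:norm_bounds} and its appendix.

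The gap is the data structure, which is the real content of the theorem. You posit a ``standard'' compressed-quadtree AVD with $O(n/\eps^{d/2})$ cells and $O(1)$ representatives per cell, each valid on an expanded cell. No such structure is available, and isotropic cells cannot give it for $d \ge 3$. Near the bisector of two sites at distance $\Theta(\DF)$, a single representative tolerates a cell crossing that bisector by only $O(\eps\,\DF)$, because $\|x-p'\|-\|x-p\|$ grows linearly with the distance from the bisector. A hypercube meeting a generic (non-axis-aligned) bisector in a large piece must cross it deeply. So covering a bisector patch of extent $\Theta(\DF)$ costs $\Omega(\eps^{-(d-1)})$ cells. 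This is why quadtree AVDs with one representative per cell use roughly $n/\eps^{d-1}$ space, and why the earlier $O(n/\eps^{d/2})$ result needed lifting and ray shooting. You notice the tension yourself (cells ``as small as $\Theta(\eps\,\DF)$'', ``anisotropic use of the budget''). However, you leave the resolving structural lemma as something you ``would first try to prove''. As a result, the space and query-time claims, and the $O(1)$ blending cost, are unsupported.

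The missing idea is anisotropic patches whose only short direction is across bisectors. The paper attaches to each separated BBD leaf cell a hierarchy of distance-based Macbeath ellipsoids. These are John ellipsoids of Macbeath regions inside the expanded Voronoi cells $V_{\delta_i}(\nn(x))$, with $\delta_i = 2^i\sqrt{\eps}\,\radius(b)$.
\begin{itemize}
\item \emph{Count.} After lifting, these ellipsoids correspond to disjoint Macbeath regions at depth $\Theta(\eps)$ of $\mathcal{E}(P)\subset\RE^{d+1}$, so there are $O(\eps^{-d/2})$ of them per cell (Lemma~\ref{vor-ecc.lem}).
\item \emph{Query time.} Expansion-containment (Lemma~\ref{vor-exp-con.lem}) gives $O(1)$ children per node over $O(\log 1/\eps)$ levels.
\item \emph{Derivatives.} The smallest semi-axis is $\Omega(\eps\,\DF(x))$ (Lemma~\ref{alpha_min_bound.lem}), while the other axes are much longer. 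Since $\Gradient\shape^{[e]}$ and $\Hess\shape^{[e]}$ are controlled by the smallest semi-axis alone, this is precisely your condition $r_\patch = \Omega(\eps\,\DF)$, stated for anisotropic patches.
\end{itemize}
With local errors $O(\eps\,\DFSq)$, your own bookkeeping already gives $\|\Hess\sDFSq\| = O(\eps\,\DFSq/(\eps\,\DF)^2) = O(1/\eps)$. So the refined gap estimate $|f_\patch - \DFSq| \le \eps\,\DF\,r_\patch$ you planned to prove is unnecessary.

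Two smaller issues remain.
\begin{itemize}
\item ``Neighbors have comparable size'' fails for BBD or compressed-quadtree cells with inner clusters. The paper needs the ring factors $\mu(\shape^{[r]})$ and the separate count in Lemma~\ref{avd-lookup-bound.lem} to keep $\depth = O(1)$.
\item A log-sum-exp soft-min lies below the true minimum, so the lower bound is not automatic for that variant.
\end{itemize}
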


Our approach involves an adaptation of a well-known technique for generating smooth approximations, called the \emph{partition of unity}~\cite{Lee03}. The idea involves smoothly blending a bounded number of local continuous approximations into a continuous whole. The challenge in the context of data structure design is how to do this without compromising the efficiency of the underlying data structure. The blending process involves collecting and combining nearby local approximations. We will employ a combination of elements including a quadtree-like decomposition of space and a hierarchical cover by ellipsoids, which we call an Ellipsoidal (approximate) Voronoi Diagram (or EVD). We develop the EVD data structure by leveraging recent progress on approximate nearest neighbor searching~\cite{AFM17a}, while simplifying away both the lifting transformation and ray shooting. In particular, we utilize an efficient geometric construction derived from Macbeath regions. These constructions can be applied in the non-continuous context and may be of independent interest (see Theorem~\ref{thm:evd_ws}).

The remainder of the paper is organized as follows. We start with an overview and self-contained development of the partition-of-unity technique in Section~\ref{sec:pou}, charting a pathway to establish Theorem~\ref{thm:puann} with the desired optimal bounds on the norms of both the gradient and the Hessian. We then switch gears to distance approximation, where we study the \emph{well-separated} case in Section~\ref{sec:approx-voronoi}. To solve this fundamental restriction of the general problem, we develop a novel ellipsoidal cover based on a distance-based generalization of the Macbeath regions routinely defined with respect to a convex body. We proceed to design of a new $\eps$-ANN data structure based on this new ellipsoidal covering, the EVD data structure, which we then utilize in a standard modular way to obtain an efficient data structure for the general problem. Finally, we describe the adaptations needed to enable blending the local approximations, fulfilling the plan of Section~\ref{sec:pou}. Throughout, a number of the more tedious technical details have been deferred to the appendix.

\section{Continuity Through Local Blending} \label{sec:pou}

As mentioned above, our approach is based on blending a bounded number of local approximations to the distance function. We begin by introducing the partition-of-unity approach (see, e.g., Lee~\cite{Lee03}), and then we will explain how to adapt it to the context of distance computation.

\subsection{Smoothing and Partition of Unity.}

Given an open cover $\Pi = \{\patch_i\}_{i \in I}$ of a domain $X$, a \emph{partition of unity subordinate to $\Pi$} is a collection of smooth functions $\{\phi_i : X \rightarrow \RE\}_{i \in I}$ such that:
\begin{enumerate}
\item[(i)] $0 \leq \phi_i(x) \leq 1$, for all $i \in I$ and $x \in X$,

\item[(ii)] $\supp(\phi_i) \subset \patch_i$ for $i \in I$,

\item[(iii)] the set of supports $\{\supp(\phi_i)\}_{i \in I}$ is locally finite, and

\item[(iv)] $\sum_{i \in I} \phi_i(x) = 1$, for all $x \in X$.
\end{enumerate}
The elements of $\Pi$ are called \emph{patches}.

How is this adapted to the context of distance approximation over a point set $P$? As common in the design of geometric data structures, space is decomposed into regions or cells. Each cell is made small enough that it is possible to compute a local approximation to the distance function, typically involving a single witness from $P$.

\begin{figure}[htbp]
  \centerline{\includegraphics[scale=0.4]{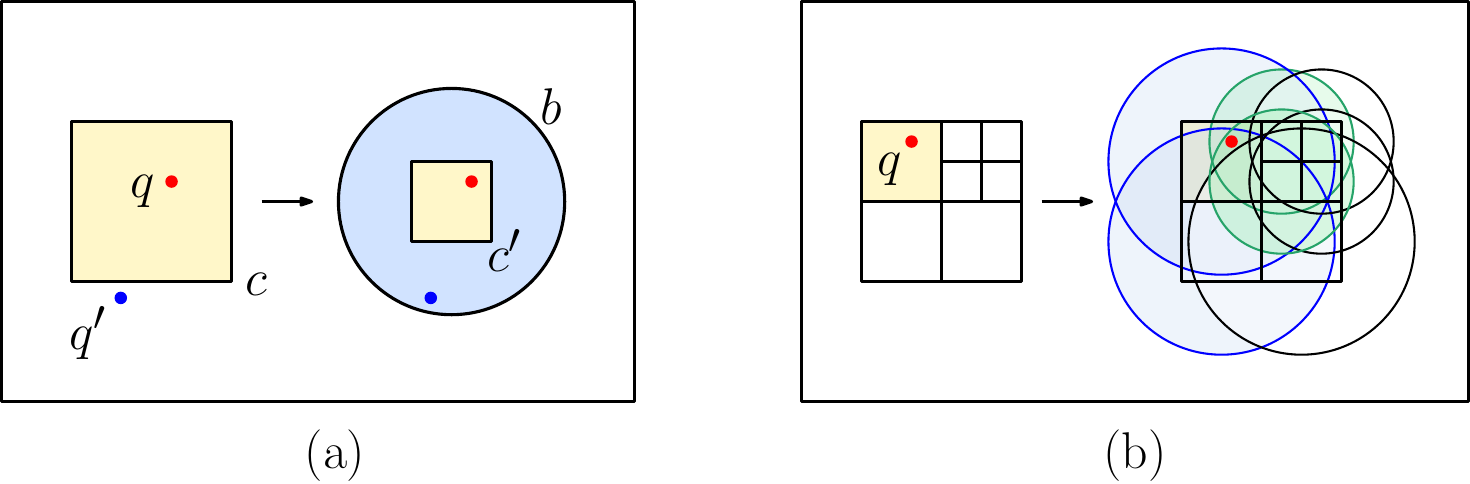}}
  \caption{Blending in the context of a quadtree-based AVD: (a) A leaf cell $c$ is effectively replaced by a patch, the ball $b$, which can be used to answer queries beyond its defining cell such as $q'$, and a smaller cell $c'$ is used to ensure such approximations are valid. (b) Instead of using a single cell to answer the query, local approximations from multiple patches are blended, e.g., two blue and two green.} 
  \label{AVD_blending.fig}
\end{figure}

Our approach to blending is most easily understood in the context of Har-Peled's quadtree-based construction~\cite{Har01, Har11a}. Ignoring the complications due to compression, each cell of the decomposition is a hypercube in $\RE^d$. For each leaf cell, we generate a patch in the form of a Euclidean ball centered about the cell and containing a constant-fraction expansion of the cell (see Figure~\ref{AVD_blending.fig}(a)). Patches from nearby cells will naturally overlap each other, but by controlling the sizes of nearby cells, we can guarantee that no point lies in more than a constant number of patches. Let $\depth$ denote this constant. We engineer the cell sizes so that the nearest-neighbor representative that works for query points within the cell, works as well for all query points within the larger associated patch. Now, all that remains is to blend the local approximations from these various patches, based on the distance function for each cell induced by its representative (see Figure~\ref{AVD_blending.fig}(b)).

Let us delve a little deeper into the details of how this blending is done. First, it will be more convenient to work with squared Euclidean distances $\DFSq = \DF^2$~\cite{Sie99}. (It can be shown that this only affects the approximation parameter $\eps$ by a constant factor.%
\footnote{When distances are squared, the relative error changes from $1+\eps$ to $(1+\eps)^2$. Assuming that $\eps \leq 1$, we have $(1+\eps)^2 = 1 + 2\eps + \eps^2 \leq 1 + 3 \eps$. Therefore, by setting $\eps' = \eps/3$, an $\eps'$-ANN for squared distances is an $\eps$-ANN for standard distances. This adjustment only alters the constant factors hidden in the asymptotic notation.}%
) As above, let $\{\patch_i\}_{i \in I}$ denote the set of patches, and for each $\patch_i$, let us assume that we have a \emph{local approximation} $\localF_i$, such that for all $x \in \patch_i$, $\DFSq(x) \leq \localF_i(x) \leq (1+\eps)\DFSq(x)$. This function can be implemented in many ways. We will assume throughout that each patch $\patch_i$ is associated with a \emph{representative point} of $P$, denoted $\rep(\patch_i)$, and $\localF_i(x)$ is defined to be the squared distance from $x$ to this point,
\begin{equation} \label{eq:v_i}
    \localF_i(x) ~ = ~ \|x - \rep(\patch_i)\|^2.
\end{equation}

As mentioned above, $\phi_i$ is a smooth function whose support is contained within $\patch_i$, and together these functions sum to unity for any point in space. To enforce the unity condition, the functions $\{\phi_i\}$ are defined by normalizing a set of smooth non-negative \emph{weight functions} $\{\psi_i\}$. A global approximation to the squared distance can be obtained from these local approximations as:
\begin{equation}\label{eq:pu}
    \sDFSq(x) 
        ~ = ~ \sum_{i \in I} \phi_i(x) \cdot \localF_i(x), 
                \quad \text{~~with~~} 
    \phi_i(x) 
        ~ = ~ \frac{\psi_i(x)}{\Psi(x)} \text{~~and~~} 
    \Psi(x) 
        ~ = ~ \sum_{i \in I} \psi_i(x).
\end{equation}
Although these sums are ostensibly taken over all the patches, they need only be applied to the at most $\depth$ patches that overlap $x$. The desired smooth distance approximation is then defined as
\begin{equation} \label{sDF.eq}
    \sDF 
        ~ = ~ (\sDFSq)^{1/2}, 
                \quad \text{~~with~~} 
    \|\Gradient \sDF\| 
        ~ = ~ \frac{\|\Gradient \sDFSq\|}{2\sDF}.
\end{equation}

The weight functions $\psi_i$ associated with each patch $\patch_i$ depend on the patch's shape. Eventually, we will consider three different patch shapes, but for the sake of simplicity, let us start with the simplest of these, namely Euclidean balls. Let $c^{[b]}$ and $r$ denote the center and radius of such a ball, respectively. First, for $x \in \RE^d$, define
\[
    \shape^{[b]}(x) 
        ~ = ~ \frac{1}{r^2}\|x - c^{[b]}\|^2.
\]
To restrict the support of the weight function within a ball of radius $r$, we compose $\shape^{[b]}$ with the \emph{bump function}
\begin{equation} \label{eq:bump}
    \mu(s) 
        ~ = ~ \begin{cases}
                \exp\left(\dfrac{1}{s^2 - 1}\right),   & |s| < 1,\\
                0,                                     & \text{otherwise.}
            \end{cases}
\end{equation}
Observe that the weight is highest at the center with $\mu(\shape^{[b]}) = \mu(0) = e^{-1}$, and smoothly decays towards the boundary where $\mu(\shape^{[b]}) = \mu(1) = 0$. Since $\mu \in C_c^\infty(\RE)$ and is non-analytic with vanishing derivatives for $|s| = 1$~\cite{Sie99}, the support is restricted as desired.

In order to establish a positive lower bound $c_\Psi$ on $\Psi(x)$, we will need to ensure sufficient overlap between patches. Using the weight functions $\{\psi_i\}$ and corresponding local approximations $\{\localF_i\}$ associated with the patches $\{\patch_i\}$, as in Eq.~\eqref{eq:v_i}, the smooth distance approximation $\sDFSq$ can be readily computed per Eq.~\eqref{eq:pu}.

\subsection{Designing Covers for Storage Efficiency.} \label{sec:finer}

The simple ball shape function $\shape^{[b]}$ is adequate for data structures based on simple isotropic decompositions (such as grids and non-compressed quadtrees). However, in order to match the best-known space bounds for $\eps$-ANN queries, we will need to develop a cover based on more sophisticated shapes. In particular, we use rings (that is, the set-theoretic difference of two Euclidean balls), ellipsoids, and intersections of these shapes.

The shapes we use are all centrally symmetric, and we use $c^\circ$ to denote the center of the shape. The other relevant parameters are a radius $r$ for balls, a positive-definite matrix $M$ for ellipsoids, and a pair of inner and outer radii $a$ and $b$, respectively, for rings. For a given desired shape, we set the parameters defining the shape functions $\shape^\circ$ such that the interior points of the shape are mapped to $(-1, 1)$ and its boundary points are mapped to $\{-1, 1\}$. We distinguish the type of the shape in the superscript, where the parameters can be associated unambiguously to the corresponding patch $\patch_i$ in the subscript. Generically, we obtain the following functional forms: $\shape^{[b]}$ for balls, $\shape^{[e]}$ for ellipsoids, and $\shape^{[r]}$ for rings.
\begin{align}
    \text{Ball:}~~~
        & \shape^{[b]}(x) 
        ~ = ~ \frac{1}{r^2}\left\| x - c^{[b]} \right\|^2 \label{eq:shape-b} \\
    \text{Ellipsoid:}~~~
        & \shape^{[e]}(x) 
        ~ = ~ \left( x - c^{[e]} \right)^{\intercal}M\left( x - c^{[e]} \right) \label{eq:shape-e} \\
    \text{Ring:}~~~
        & \shape^{[r]}(x) 
        ~ = ~ \dfrac{\left\| x - c^{[r]} \right\|^2 - (b^2+a^2)/2}{(b^2-a^2)/2}. \label{eq:shape-r}
\end{align}

In the proposed data structure, all ellipsoids are restricted within a ball, and some ellipsoids are additionally restricted within a ring. Restricting the support of weight functions to the intersection of shapes can be conveniently achieved by multiplication. For the data structure we develop in this paper, the weight function $\psi_i$ for patch $\patch_i$ takes one of the two forms below. (See Figure~\ref{fig:composition} for an example.)
\begin{equation}\label{eq:psi_i}
    \psi_i(x) 
        ~ = ~ \mu(\shape^{[b]}_i(x))\cdot\mu(\shape^{[e]}_i(x))
    \quad \text{ or } \quad 
    \psi_i(x) 
        ~ = ~ \mu(\shape^{[b]}_i(x))\cdot\mu(\shape^{[r]}_i(x))\cdot\mu(\shape^{[e]}_i(x)).
\end{equation}

\begin{figure}[htbp]
  \centerline{\includegraphics[scale=0.6]{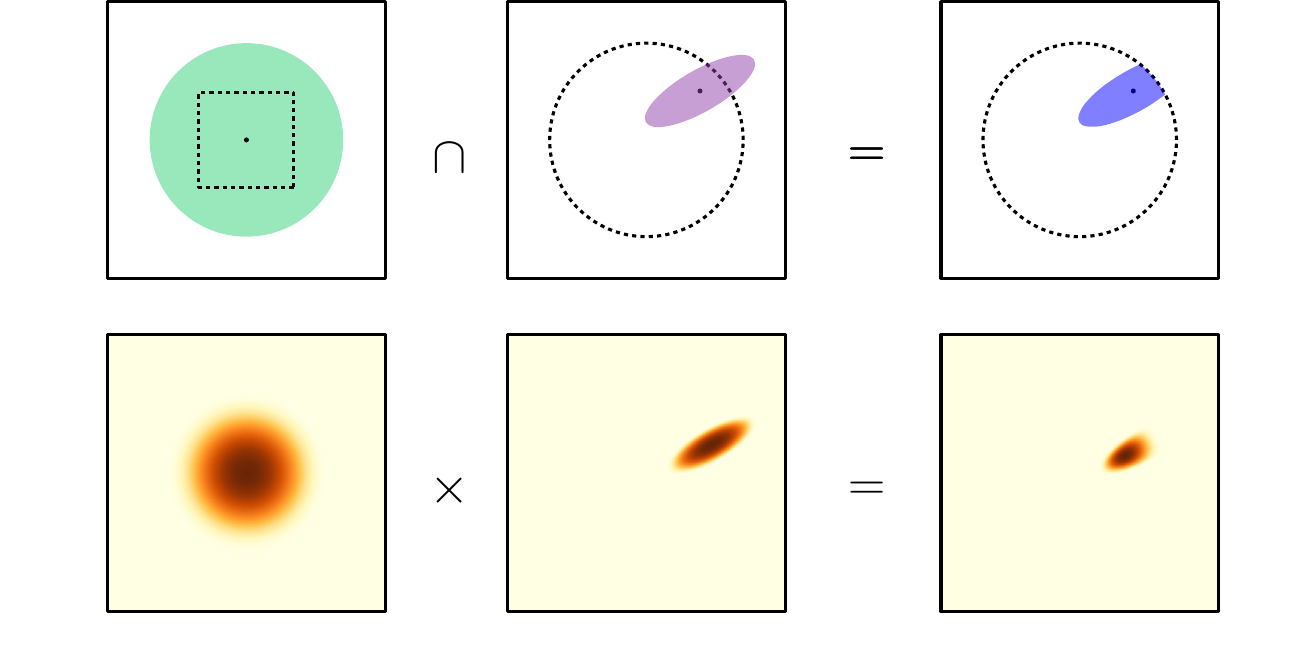}}
  \caption{Shapes (top) and corresponding weight functions (bottom). Restricting support to the intersection (right) by multiplying mollified functional forms for the ball (left) and ellipsoid (middle).}
\label{fig:composition}
\end{figure}

\subsection{Compatible Shape Functions.} \label{sec:shape}

As outlined in the previous section, the patches we employ for the partition-of-unity construction will be realized using the product of a collection of \emph{shape functions} with support in the interior of the patch. To streamline the derivation, we introduce a convenient layer of abstraction before restricting attention to a specific family of shape functions.

Formally, each patch $\Pi_i$ is constructed as the intersection of a set of \emph{component shapes} $\{\mathcal{S}_i^{(k)}\}$, where each component shape is associated with a shape function $\shape_i^{(k)}$. Letting $k_i$ denote the number of component shapes used for the patch $\Pi_i$, we have
\[
    \Pi_i 
        ~ = ~ \bigcap_{k \in [k_i]} \mathcal{S}_i^{(k)}
\]

While the only cases encountered in this paper involve $k_i \in \{2, 3\}$ (recall Eq.~\eqref{eq:psi_i}), it will be convenient to generalize this by defining a uniform constant upper bound $\kappa_\Pi$, such that $k_i \leq \kappa_\Pi$ for all $i \in I$. This allows us to express any weight function as
\begin{equation}\label{eq:generalized_weight}
    \psi_i(x) 
        ~ = ~ \prod_{k \in [k_i]} \mu\left(\shape_i^{(k)}\right), \quad\text{~~with~~} 
    k_i 
        ~ \leq ~ \kappa_\Pi.
\end{equation}

Furthermore, we require all shape functions to satisfy the following definition. The first condition constrains the shape function's values, and the second condition bounds the growth rates of the function's gradient and Hessian.

\begin{Definition}[Compatible Shape Function] \label{def:compat-shape}
Given a distance-query problem for a set $P$ and error bound $\eps > 0$, the shape function $\shape$ associated with a component $\mathcal{S}$ is \emph{compatible} if and only if there exist constants $c_1$ and $c_2$ such that for all $x \in S$:
\begin{enumerate}
\item[$(i)$] $\shape(x) \in (-1,+1), \forall x \in \interior(S)$, and $\shape(x) \in \{-1,+1\}, \forall x \in \bd S$,

\item[$(ii)$] $\| \Gradient \shape(x) \| \leq \dfrac{c_1}{\eps\cdot \DF(x)}$ and $\| \Gradient^2 \shape(x) \| \leq \dfrac{c_2}{\eps^2\cdot \DF^2(x)}$.
\end{enumerate}
\end{Definition}

As we show later, all the shape functions employed in this paper are indeed compatible. Nonetheless, this added layer of abstraction allows us to derive some of the key results without dealing with the idiosyncrasies of each component shape. (Note that similar compatibility abstractions can be defined for other geometric functions with their corresponding upper bounds.)

\subsection{Bounding the Norms of the Gradient and Hessian.} \label{sec:norm_bounds}

The utility of the abstract patch model given in Eq.~\eqref{eq:generalized_weight} comes to light in this section, where we obtain an explicit and compact expression for the gradient of the smooth distance approximation from Eq.~\eqref{eq:pu}, which can be easily evaluated in the course of answering distance-function queries. Moreover, we obtain upper bounds on the norms of both the gradient and the Hessian, whenever the shape functions are compatible. The proof involves a fairly straightforward application of calculus and can be found in Section~\ref{sec:sDFGradient}.

\begin{restatable}{lemma}{sDFGradient}\label{sDF_gradient.lem}
The gradient of smooth distance approximations $\sDFSq$ blended over any set of patches $\{\Pi_i\}_{i \in I}$ realized by any set of compatible shape functions $\{\shape_i^{(k)}\}$ takes the form
\begin{equation} \label{eq:grad_sDF_exp}
    \Gradient \sDFSq(x)
        ~ = ~ 2 \sum_{i \in I} \phi_i(x)\cdot(x-\rep(\patch_i)) + \sum_{i \in I}\sum_{k \in k_i} \zeta_i^{(k)}(x)\cdot\Gradient\shape_i^{(k)}(x),
\end{equation}
where the coefficients are
\begin{equation} \label{eq:shorthand_coefficient}
    \zeta_i^{(k)}(x) 
        ~ = ~ \frac{2}{\Psi(x)}\cdot\frac{\psi_i(x)\cdot \shape_i^{(k)}(x)}{(\shape_i^{(k)}(x)^2-1)^2}\cdot\left(\sDFSq(x) - \localF_i(x)\right),
\end{equation}
with an upper bound $|\zeta_i^{(k)}(x)| = O\left(\eps\cdot\DF^2(x)\right)$, for all $i$ and $k$. In addition, for all points $x \in \RE^d$, the magnitude of the gradients are bounded as
\[
    \|\Gradient \sDFSq(x)\| 
        ~ = ~ O\left(\DF(x)\right) \text{~~ and ~~} 
    \|\Gradient \zeta_i^{(k)}(x) \| 
        ~ = ~ O(\DF(x)).
\]
\end{restatable}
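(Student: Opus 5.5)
The plan is to compute $\Gradient \sDFSq$ directly from Eq.~\eqref{eq:pu}, using the chain rule on the bump function, and then to bound each resulting factor with three ingredients: the local approximation guarantee, the compatibility conditions of Definition~\ref{def:compat-shape}, and the lower bound $\Psi(x) \ge c_\Psi$ together with the constant overlap depth $\depth$. The identity behind everything is that for $|s|<1$ we have $\mu'(s) = \mu(s)\cdot\frac{-2s}{(s^2-1)^2}$. Hence, by Eq.~\eqref{eq:generalized_weight},
\[
\Gradient\psi_i(x) ~=~ -2\,\psi_i(x)\sum_{k\in[k_i]} \frac{\shape_i^{(k)}(x)}{(\shape_i^{(k)}(x)^2-1)^2}\,\Gradient\shape_i^{(k)}(x).
\]
Outside the patch, $\psi_i$ and all its derivatives vanish, so the formula extends by zero.

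\textbf{Step 1 (the formula).} Write $\Gradient\sDFSq = \sum_i \phi_i \Gradient\localF_i + \sum_i \localF_i \Gradient\phi_i$. The first sum gives $2\sum_i\phi_i(x-\rep(\patch_i))$. For the second sum, use $\Gradient\phi_i = \Gradient\psi_i/\Psi - \psi_i\Gradient\Psi/\Psi^2$ and $\sum_i \psi_i\localF_i/\Psi = \sDFSq$. This yields $\sum_i (\localF_i - \sDFSq)\Gradient\psi_i/\Psi$. Substituting the expression for $\Gradient\psi_i$ above, the sign flips to $(\sDFSq - \localF_i)$, and Eq.~\eqref{eq:grad_sDF_exp} follows with the coefficients in Eq.~\eqref{eq:shorthand_coefficient}.

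\textbf{Step 2 (bounding $\zeta_i^{(k)}$).} Only terms with $x \in \patch_i$ are nonzero. On such a patch both $\localF_i(x)$ and $\sDFSq(x)$ lie in $[\DFSq(x),(1+\eps)\DFSq(x)]$, the latter because $\sDFSq$ is a convex combination of the $\localF_j$. Therefore $|\sDFSq-\localF_i| \le \eps\DF^2(x)$.

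Next, split $\psi_i$ into $\mu(\shape_i^{(k)})$ times the remaining $\mu$ factors, each of which is at most $e^{-1}$. The key elementary fact is that $g_m(s) = \mu(s)\,|s|^{a}/(1-s^2)^m$ is bounded on $(-1,1)$ by a constant depending only on $m$ and $a$, since $\exp(1/(s^2-1))$ decays faster than any power of $(1-s^2)$. With $\Psi \ge c_\Psi$, this gives $|\zeta_i^{(k)}| = O(\eps\DF^2(x))$.

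Combining with compatibility, $\|\Gradient\shape_i^{(k)}\| \le c_1/(\eps\DF)$, so each term of the double sum is $O(\DF)$. Also $\|x-\rep(\patch_i)\| \le (1+\eps)\DF(x)$, so the first sum is $O(\DF)$ as well. Since at most $\depth$ patches contain $x$ and $k_i\le\kappa_\Pi$, the sums have $O(1)$ terms, and $\|\Gradient\sDFSq\| = O(\DF)$.

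\textbf{Step 3 (bounding $\Gradient\zeta_i^{(k)}$).} Write $\zeta_i^{(k)}$ as a product of four factors: $2/\Psi$; the smooth factor $h(\shape) = \mu(\shape)\,\shape/(\shape^2-1)^2$; the remaining $\mu$ factors; and $\sDFSq-\localF_i$. Then apply the product rule and bound each of the resulting terms.
\begin{itemize}
\item $\Gradient(1/\Psi) = -\Gradient\Psi/\Psi^2$. From the formula for $\Gradient\psi_j$, each $\|\Gradient\psi_j\|$ is (bounded function of $\shape$) times $O(1/(\eps\DF))$, so $\|\Gradient(1/\Psi)\| = O(1/(\eps\DF))$.
\item $h'(s)$ and $\mu'(s)$ are again bounded functions of type $g_m$, so the gradients of the second and third factors are $O(1/(\eps\DF))$.
\item $\Gradient(\sDFSq-\localF_i)$ is $O(\DF)$ by Step 2 and $\Gradient\localF_i = 2(x-\rep(\patch_i))$.
\end{itemize}
Each term is therefore either $O(1/(\eps\DF))\cdot O(\eps\DF^2) = O(\DF)$ or $O(1)\cdot O(\DF)$, giving $\|\Gradient\zeta_i^{(k)}\| = O(\DF)$.

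The main obstacle is the uniform boundedness of the singular-looking factors $\mu(s)s/(s^2-1)^2$ and their derivatives near $|s|=1$. I would handle this with a single auxiliary claim: for every fixed $m$ and $a$, $\sup_{|s|<1}\mu(s)|s|^a(1-s^2)^{-m} < \infty$. The proof substitutes $t = 1/(1-s^2)$ and uses $t^m e^{-t} = O(1)$. I would also state explicitly the standing assumptions $\Psi \ge c_\Psi$ and bounded depth $\depth$ that the argument relies on.
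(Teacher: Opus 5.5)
Your proposal is correct and follows essentially the same route as the paper. You derive the formula by the quotient and chain rules, then bound $\zeta_i^{(k)}$ using $|\sDFSq - \localF_i| \le \eps\DFSq$, the boundedness of $\mu(s)s/(s^2-1)^2$, and $\Psi \ge c_\Psi$, and bound $\Gradient\zeta_i^{(k)}$ term by term with the product rule, as in the paper's Lemma~\ref{coefficient_gradient_bound.lem}. The small differences are these: you bound $\|\Gradient\sDFSq\|$ from the $\zeta$-form by cancelling against $\sDFSq$, whereas the paper uses $\sum_i \Gradient\phi_i = 0$ to cancel against $\DFSq$, which is the same cancellation idea; and your uniform bound $\sup_{|s|<1}\mu(s)|s|^a(1-s^2)^{-m}<\infty$ replaces the paper's explicit numeric constants and makes the reliance on $c_\Psi$ and $\depth$ explicit.
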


The following lemma bounds the norm of the smooth approximation's Hessian. The proof can be found in Section~\ref{sec:sDFHessianBound}.

\begin{restatable}{lemma}{sDFHessianBound}\label{sDF_Hessian_bound.lem}
The Hessian of smooth distance approximations $\sDFSq$ blended over any set of patches $\{\Pi_i\}_{i \in I}$ realized by any set of compatible shape functions $\{\shape_i^{(k)}\}$ is upper-bounded by
\[
    \|\Gradient^2 \sDFSq(x)\| 
        ~ = ~ O(1 / \eps).
\]
\end{restatable}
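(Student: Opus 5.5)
The plan is to differentiate the gradient expression of Lemma~\ref{sDF_gradient.lem} once more, term by term, and bound each resulting piece using the compatibility conditions (Definition~\ref{def:compat-shape}) together with the bounds already established for $\zeta_i^{(k)}$ and $\Gradient \zeta_i^{(k)}$. By Eq.~\eqref{eq:grad_sDF_exp},
\begin{align*}
    \Hess \sDFSq(x)
        ~ = ~ & 2\sum_{i\in I}\phi_i(x)\, I_d ~+~ 2\sum_{i\in I}(x-\rep(\patch_i))\,\Gradient\phi_i(x)^{\intercal} \\
        & +~ \sum_{i\in I}\sum_{k\in[k_i]}\Big(\Gradient\shape_i^{(k)}(x)\,\Gradient\zeta_i^{(k)}(x)^{\intercal} + \zeta_i^{(k)}(x)\,\Hess\shape_i^{(k)}(x)\Big).
\end{align*}
Since only at most $\depth$ patches contain $x$ and each has $k_i\le\kappa_\Pi$ components, every sum has $O(1)$ nonzero terms. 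It therefore suffices to show that each individual term has norm $O(1/\eps)$.

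\textbf{The first and last groups.} The first group is exactly $2I_d$ by the partition-of-unity property (iv). For the last group, compatibility gives $\|\Gradient\shape_i^{(k)}\|=O(1/(\eps\DF))$. Lemma~\ref{sDF_gradient.lem} gives $\|\Gradient\zeta_i^{(k)}\|=O(\DF)$, so the outer product has norm $O(1/\eps)$. Likewise $|\zeta_i^{(k)}|=O(\eps\DF^2)$ and $\|\Hess\shape_i^{(k)}\|=O(1/(\eps^2\DF^2))$, whose product is again $O(1/\eps)$.

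\textbf{The middle group.} This term needs more care, because it is where cancellation matters. Since $\sum_i\Gradient\phi_i=0$, I would rewrite it as $2\sum_i(\DF\text{-scale vector})\,\Gradient\phi_i^{\intercal}$. Concretely, subtract $x-p^*$ for a fixed reference point, or note that for $x\in\patch_i$ the local approximation guarantee gives $\|x-\rep(\patch_i)\|\le(1+\eps)^{1/2}\DF(x)=O(\DF(x))$. Either way, it remains to show $\|\Gradient\phi_i\|=O(1/(\eps\DF))$. Write $\Gradient\phi_i=\Gradient\psi_i/\Psi-\psi_i\Gradient\Psi/\Psi^2$ and use the positive lower bound $c_\Psi$ on $\Psi$. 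Then
\[
    \Gradient\psi_i ~ = ~ \psi_i\sum_k\frac{-2\shape_i^{(k)}}{((\shape_i^{(k)})^2-1)^2}\,\Gradient\shape_i^{(k)}.
\]

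\textbf{Main obstacle.} The hardest step is controlling the factor $\psi_i\cdot\shape/(\shape^2-1)^2$ near the patch boundary, where the denominator vanishes. I would handle it by the standard observation that $\mu(s)\cdot|s|/(s^2-1)^2$, and also $\mu(s)/(s^2-1)^4$, stay bounded on $(-1,1)$ because $\exp(1/(s^2-1))$ decays faster than any power. Since the other factors $\mu(\shape_i^{(k')})\le e^{-1}$, the whole coefficient is $O(1)$. This is the same fact that presumably underlies the bound on $\zeta_i^{(k)}$ in Lemma~\ref{sDF_gradient.lem}. It gives $\|\Gradient\phi_i\|=O(1/(\eps\DF))$, and multiplying by $O(\DF)$ yields $O(1/\eps)$ for the middle term.

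Summing the $O(1)$ many terms completes the bound $\|\Hess\sDFSq(x)\|=O(1/\eps)$. At points $x\in P$, where $\DF=0$, the bound holds by continuity, or vacuously since nearby patches are arbitrarily small.
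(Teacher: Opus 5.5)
Your proposal is correct and follows essentially the same route as the paper: apply the product rule to the gradient expression in Eq.~\eqref{eq:grad_sDF_exp} and bound the four resulting pieces using $\sum_i \phi_i = 1$, $\|x-\rep(\patch_i)\| = O(\DF(x))$ together with $\|\Gradient\phi_i\| = O(1/(\eps\DF(x)))$, and the compatibility bounds paired with $|\zeta_i^{(k)}| = O(\eps\DF^2(x))$ and $\|\Gradient\zeta_i^{(k)}\| = O(\DF(x))$. Writing the Hessian as an explicit matrix sum instead of the paper's difference-quotient limits, invoking $c_\Psi$ explicitly, and extending the uniform bound to points of $P$ by continuity of the Hessian are only presentational refinements. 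Two small corrections: no cancellation is actually needed in the middle term, and at $x \in P$ you should keep the continuity argument and drop the ``arbitrarily small patches'' alternative, which is not a valid justification.
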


Using Lemmas~\ref{sDF_gradient.lem} and~\ref{sDF_Hessian_bound.lem}, we obtain the following bounds on the square-root of the smooth distance approximation, which can be readily used as a valid (and smooth) approximation to the original distance function. Recalling Eq.~\eqref{sDF.eq}, we have
\[
    \sDF 
        ~ = ~ (\sDFSq)^{1/2}, \text{~~with~~} 
    \|\Gradient \sDF\| 
        ~ = ~ \frac{\|\Gradient \sDFSq\|}{2\sDF} 
        ~ = ~ O(1).
\]

In addition, to obtain a bound on $\Hess \sDF(x)$, we recall that
\[
    \sDF 
        ~ = ~ \sDFSq^{1/2} 
    \implies 
    \Gradient \sDF 
        ~ = ~ \frac{\Gradient \sDFSq}{2\sDFSq^{1/2}} 
    \implies 
    \Hess \sDF 
        ~ = ~ \frac{\Hess \sDFSq}{2\sDFSq^{1/2}} - \frac{1}{4\sDFSq^{3/2}}\cdot\left(\Gradient \sDFSq\right)\left(\Gradient \sDFSq\right)^\intercal.
\]
Hence, we can bound the norm of the Hessian as
\[
    \|\Hess \sDF(x) \| 
        ~ \leq ~ \frac{O(1/\eps)}{2\cdot\DF(x)} + \frac{1}{4\cdot\DF(x)^3} O(\DF(x)^2) 
        ~ =    ~ O\bigg(\frac{1}{\eps\cdot\DF(x)}\bigg).
\]
Stein~\cite{Ste70} and Fraenkel~\cite{Fra79} established that these bounds on the norms of the gradient and Hessian are asymptotically optimal.

\section{Well-Separated ANN: Ellipsoidal AVD} \label{sec:approx-voronoi}

In order to apply the results of the previous section to the differential distance-query problem, we will establish the existence of a concise and efficiently searchable partition of unity system for this problem. In this section, we begin with the easier case when the query region is well separated from the points of $P$
\footnote{A closely related construction was recently employed for kernel density estimation~\cite{CKW24}.}
, and the general case will be presented in Section~\ref{sec:quadtree_evd}.

Recall that $P$ denotes the set of points for the distance-query problem, and consider a query region $w \subseteq \RE^d$, which we will take to be a hypercube in $\RE^d$. For non-empty subsets $X, Y \subset \RE^d$, let $\diam(X)$ denote the diameter of $X$, and $\dist(X, Y)$ denote the minimum Euclidean distance from any $x \in X$ to any $y \in Y$. Given $\gamma > 1$, we say that $X$ is \emph{concentrically $\gamma$-separated} from $Y$ if $X$ can be enclosed in a $d$-dimensional ball $b$ such that $Y$ lies entirely outside $\gamma b$. We refer to $X$ as the \emph{inner set} and $Y$ as the \emph{outer set}. When we apply these concepts, one of these sets will be a subset of $P$ while the other will be a cell $w$ of the data structure (see Figure~\ref{separation_lite.fig}).

\begin{figure}[htbp]
  \centerline{\includegraphics[scale=0.40]{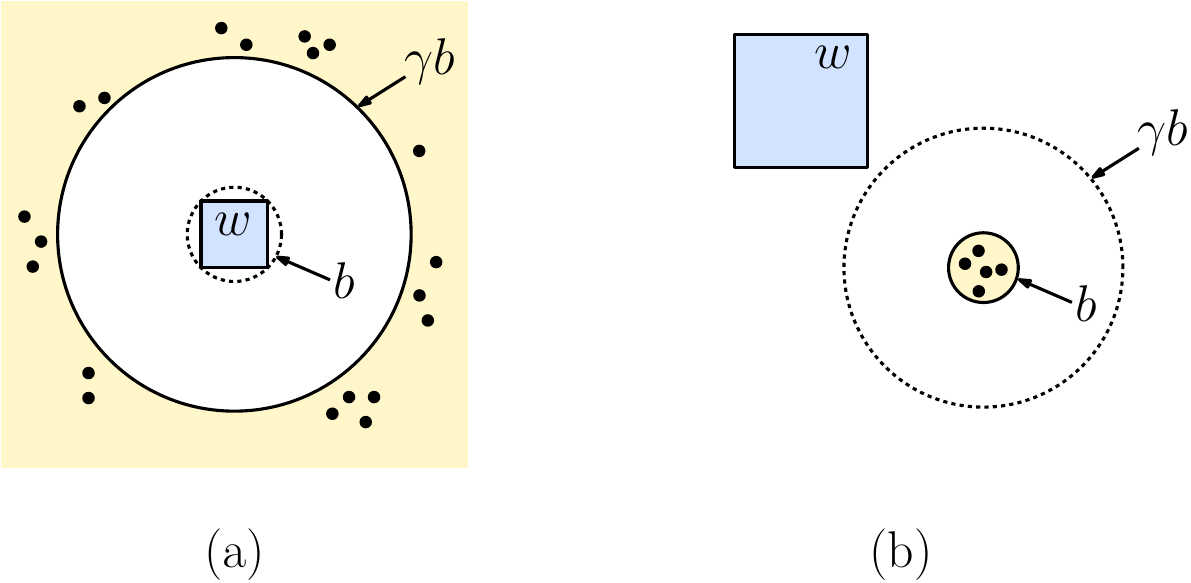}}
  \caption{Concentric $\gamma$-separation, where (a) $w$ is inner, $P$ is outer and (b) $w$ is outer, $P$ is inner.}
  \label{separation_lite.fig}
\end{figure}

In this section, we start by defining an \emph{expanded} Voronoi decomposition, highlighting its relevance for distance approximation when sufficient separation can be ensured. Then, we proceed to define an efficient ellipsoidal cover to approximate the Voronoi diagram. Finally, we analyze the resulting data structure, establishing its correctness and performance bounds.

\subsection{Approximation and Separation.} \label{sec:sep-approx}
Given a discrete point set $P \subset \RE^d$, recall that the \emph{Voronoi cell} of a point $p \in P$, denoted $V(p)$, is the set of points in $\RE^d$ having $p$ as a closest site. Given $\delta \geq 0$, we define the \emph{$\delta$-expanded Voronoi cell} of $p$ to be
\[
    V_{\delta}(p)
	~ = ~ \{ x \in \RE^d : \|x - p\|^2 ~\leq~ \|x - p'\|^2 + \delta^2, ~\forall p' \in P \}
\]
(see Figure~\ref{expanded-voronoi.fig}(a)). It is an easy exercise to show that $V_{\delta}(p)$ is the convex polytope that results by translating each $(p, p')$ bisector by a distance of $\delta^2/2\|p'-p\|$ away from $p$ (see Figure~\ref{expanded-voronoi.fig}(b)).

\begin{figure}[htbp]
  \centerline{\includegraphics[scale=0.40]{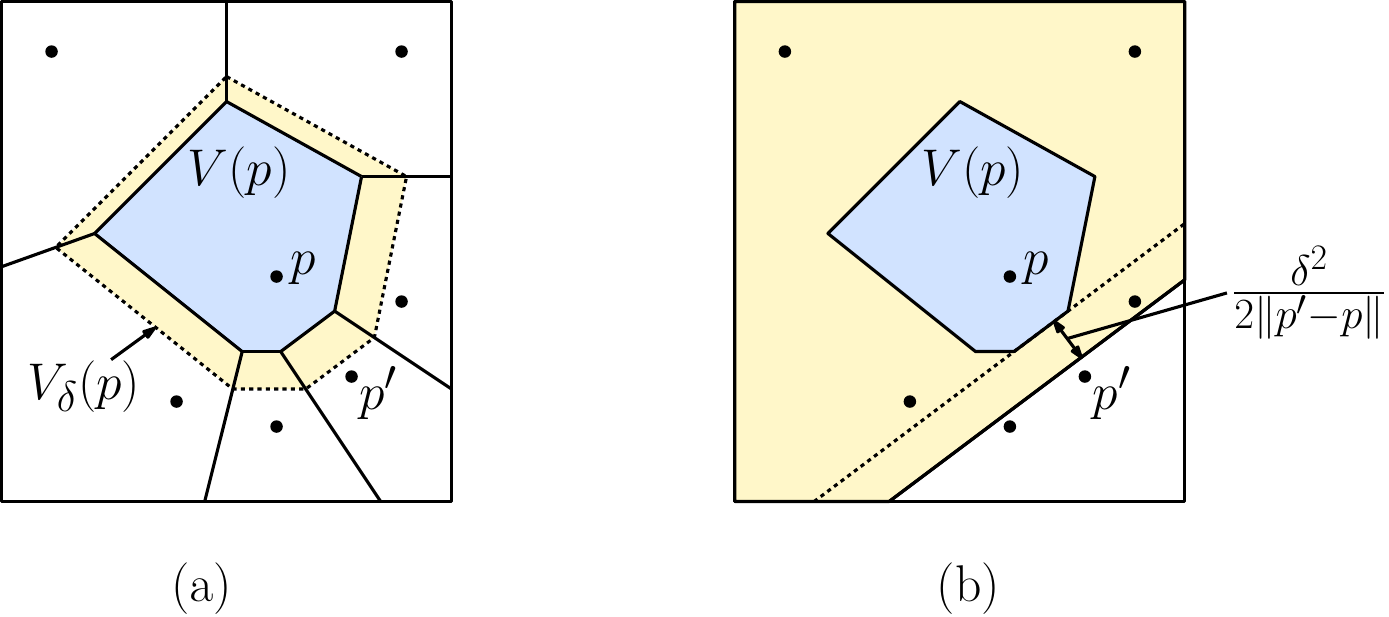}}
  \caption{(a) $\delta$-expanded Voronoi cell and (b) $\delta$-shifted bisector halfspace.} \label{expanded-voronoi.fig}
\end{figure}

While the expanded Voronoi cell is defined in terms of an additive error applied to the squared distance, it can still be used for approximate nearest-neighbor searching, provided that the ratio between the expansion factor and the nearest-neighbor distance is sufficiently small relative to $\sqrt{\eps}$.

\begin{lemma} \label{exp-vor.lem}
Consider a finite set $P \subset \RE^d$, $0 < \eps \leq 1/2$, any site $p \in P$, and $\delta > 0$. Given $q \in \RE^d$, if $q \in V_{\delta}(p)$ where $\delta \leq \sqrt{\eps} \|q - p\|$, then $p$ is an $\eps$-ANN of $q$.
\end{lemma}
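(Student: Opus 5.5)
Let $p^* \in P$ be a true nearest neighbor of $q$, so that $\|q - p^*\| = \DF(q)$. We need to show $\|q - p\| \leq (1+\eps)\,\DF(q)$. The plan is to take the defining inequality of $V_\delta(p)$ with $p' = p^*$, substitute the hypothesis $\delta \leq \sqrt{\eps}\,\|q-p\|$, and rearrange.

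Membership $q \in V_\delta(p)$ gives
\[
    \|q-p\|^2 ~\leq~ \|q-p^*\|^2 + \delta^2 ~\leq~ \DF(q)^2 + \eps\,\|q-p\|^2 .
\]
Since $\eps \leq 1/2 < 1$, we may move the last term to the left and obtain $(1-\eps)\|q-p\|^2 \leq \DF(q)^2$, that is,
\[
    \|q-p\| ~\leq~ \frac{\DF(q)}{\sqrt{1-\eps}} .
\]
If $\DF(q)=0$, this forces $q=p$, and the claim holds trivially.

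It remains to check the elementary inequality $1/\sqrt{1-\eps} \leq 1+\eps$ for $0<\eps\leq 1/2$. This is equivalent to $(1+\eps)^2(1-\eps) \geq 1$. Expanding, $(1+\eps)^2(1-\eps) = 1 + \eps - \eps^2 - \eps^3$, so the condition becomes $\eps(1 - \eps - \eps^2) \geq 0$. This holds whenever $\eps^2 + \eps \leq 1$, which is true for $\eps \leq 1/2$ since $1/4 + 1/2 = 3/4 \leq 1$. This step is the only place the restriction $\eps \leq 1/2$ is used, and it is the only point needing any care. Combining the two displays gives $\|q-p\| \leq (1+\eps)\DF(q)$, so $p$ is an $\eps$-ANN of $q$.

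There is no real obstacle here. The one subtlety is that the hypothesis bounds $\delta$ relative to $\|q-p\|$ rather than $\DF(q)$, and this is exactly what the absorption into the left-hand side handles.
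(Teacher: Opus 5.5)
Your proof is correct and follows the paper's route. The paper likewise applies the defining inequality of $V_\delta(p)$, absorbs $\delta^2 \leq \eps\|q-p\|^2$ into the left-hand side, and uses $1/(1-\eps) \leq (1+\eps)^2$ for $\eps \leq 1/2$; your expansion $(1+\eps)^2(1-\eps) = 1+\eps-\eps^2-\eps^3$ simply spells out the check that the paper calls easy to verify.
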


\begin{proof}
If $q \in V_{\delta}(p)$ then for any $p' \in P$, we have
\[
    \|q - p\|^2 
        ~ \leq ~ \|q - p'\|^2 + \delta^2
        ~ \leq ~ \|q - p'\|^2 + \eps \|q - p\|^2,
\]
and hence $\|q - p\|^2 \leq \|q - p'\|^2/(1-\eps)$. It is easy to verify that for all $\eps \leq 1/2$, $1/(1-\eps) \leq (1 + \eps)^2$, and therefore $\|q - p\|^2 \leq (1 + \eps)^2 \|q - p'\|^2$. Taking square roots reveals that $p$ is an $\eps$-ANN of $q$.
\end{proof}

Based on this lemma, we develop a new approximate Voronoi diagram (AVD) data structure for the \emph{well-separated case}. Consider a query region $w$, that is $\gamma$-separated from the point set $P$ for a suitably large constant $\gamma$ (to be specified below). We approximate the Voronoi subdivision within $w$ by building upon the Macbeath-based approach for approximating convex bodies~\cite{AbM18}. The benefits of this approach in our case are: (1) achieving state-of-the-art storage bounds, and (2) providing a covering of $w$ that facilitates blending.

\subsection{Distance-Based Macbeath Regions.} \label{sec:distance_based_macbeath}

In this section, we explain how to adapt the concept of a Macbeath region to approximate Voronoi diagrams. Macbeath regions are a classical structure that have been employed in the context of convex shape approximation and analysis (see, e.g., \cite{AbM18, AFM23, BaL88, Bar07}). Given a convex body $K$ and a point $x \in K$, the \emph{Macbeath region} at $x$ is defined to be
\[
    M_K(x)
        ~ = ~ K \cap (2 x - K)
        ~ = ~ x + ((K-x) \cap (x-K)).
\]
It is easy to verify that $M_K(x)$ is the intersection of $K$ and the reflection of $K$ around $x$, and hence it is the largest centrally-symmetric body contained in $K$ and centered at $x$. For $\lambda > 0$, the \emph{$\lambda$-scaled Macbeath region} at $x$ is defined to be a central scaling of $M_K(x)$ by a factor of $\lambda$ about $x$, that is,
\[
    M_K^{\lambda}(x) 
        ~ = ~ x + \lambda ((K-x) \cap (x-K))
\]
(see Figure~\ref{macbeath.fig}(a)).  $M_K^{\lambda}(x)$ is just a scaling of this shape by a factor of $\lambda$ about $x$. When $\lambda < 1$, we say $M_K^{\lambda}(x)$ is \emph{shrunken}.

\begin{figure}[htbp]
  \centerline{\includegraphics[scale=0.40,page=1]{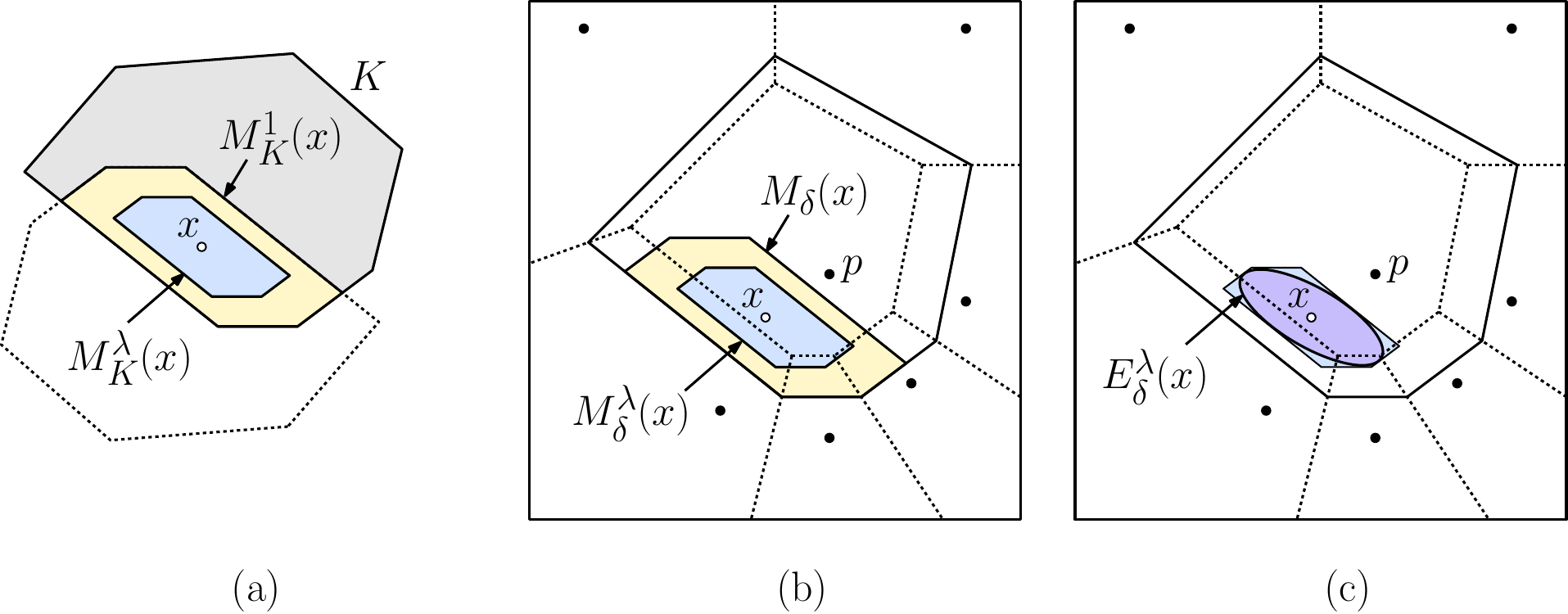}}
  \caption{(a) Macbeath region, (b) distance-based Macbeath region, and (c) Macbeath ellipsoid.} \label{macbeath.fig}
\end{figure}

Given $x \in \RE^d$ and a point set $P$, let $\nn_P(x)$ denote the nearest neighbor of $x$ in $P$, where ties are broken by index. When $P$ is clear from context, we will omit the subscript. Given $\delta > 0$, define the \emph{distance-based Macbeath region}, denoted $M_{\delta}(x)$, to be the Macbeath region of $x$ with respect to the expanded Voronoi cell $V_{\delta}(\nn(x))$ (see Figure~\ref{macbeath.fig}(b)). This is well-defined because $x \in V(\nn(x))$, which is a subset of $V_{\delta}(\nn(x))$. Given $\lambda \geq 0$, define $M_{\delta}^{\lambda}(x)$ to be a central scaling of $M_{\delta}(x)$ by a factor of $\lambda$. Similarly, define the \emph{distance-based Macbeath ellipsoid}, denoted $E_{\delta}(x)$, to be the maximum volume ellipsoid contained within $M_{\delta}(x)$, and define its scaling $E_{\delta}^{\lambda}(x)$ analogously (see Figure~\ref{macbeath.fig}(c)).%
\footnote{In spaces of constant dimension, the maximum-volume ellipsoid contained within a convex body is unique, and can be computed for a convex polytope $K$ in time linear in the number of its bounding halfspaces \cite{ChM96}.}
By John's Theorem~\cite{Bal97} we have
\begin{equation}
  E^{\lambda}_{\delta}(x)
	~ \subseteq ~ M^{\lambda}_{\delta}(x)
	~ \subseteq ~ E^{\lambda\sqrt{d}}_{\delta}(x). \label{eq:john-thm}
\end{equation}

\begin{figure}[htbp]
  \centerline{\includegraphics[scale=0.40,page=2]{Figs/macbeath}}
  \caption{(a) Overlapping Macbeath regions and (b) expansion-containment.} \label{macbeath-2.fig}
\end{figure}

To carry out the design of the \emph{ellipsoidal AVD} (EVD), we establish generalizations of two key properties of classical Macbeath regions for the new distance-based counterparts. The proofs, deferred to Sections~\ref{sec:vor-exp-con} and~\ref{sec:vor-ecc}, apply the well-known Voronoi lifting transformation (see, e.g., ~\cite{Boy98, BCK10}) to relate distance-based Macbeath regions in $\RE^d$ to the classic Macbeath regions of a lifted polytope in $\RE^{d+1}$.

First, we establish the \emph{expansion-containment} property, which ensures that whenever two shrunken Macbeath regions overlap, a constant-factor expansion of one contains the other (see Figure~\ref{macbeath-2.fig}).

\begin{restatable}{lemma}{VorExpCon} \label{vor-exp-con.lem}
Consider a finite set $P \subset \RE^d$ and two scalars $\delta$ and $\lambda$, where $\delta \geq 0$ and $0 < \lambda < 1$. If $x,y \in \RE^d$ such that $M_{\delta}^{\lambda}(x) \cap M_{\delta}^{\lambda}(y) \neq \emptyset$, then for any $\alpha \geq 0$ and $\beta \geq 2\cdot\frac{2 + \alpha(1+\lambda)}{1 - \lambda}$, $M^{\alpha\lambda}(y) \subseteq M^{\beta\lambda}(x)$.
\end{restatable}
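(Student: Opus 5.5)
The plan is to reduce the statement to the classical expansion-containment lemma for Macbeath regions of a convex body (as in \cite{AbM18}). We do this through the lifting transformation, as announced before the statement.

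\textbf{The lifted body.} For $p' \in P$ define the affine function $\ell_{p'}(u) = 2u\cdot p' - \|p'\|^2$. Then $\|u-p'\|^2 = \|u\|^2 - \ell_{p'}(u)$. Let $L(u) = \max_{p'} \ell_{p'}(u) = \|u\|^2 - \DFSq(u)$, which is convex and piecewise linear. Consider the convex polyhedron
\[
    K ~ = ~ \{ (u,z) \in \RE^{d+1} \ST z \geq L(u) - \delta^2 \}.
\]
To each $x \in \RE^d$ associate the lifted point $\hat{x} = (x, L(x))$. With $p = \nn(x)$ we have $\hat{x} = (x, \ell_p(x))$, which lies in $K$ at vertical depth $\delta^2$.

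\textbf{Identifying $M_\delta(x)$ inside the lifted region.} Intersect $K$ with the hyperplane $H_p = \{z = \ell_p(u)\}$. The projection to $\RE^d$ of $K \cap H_p$ is exactly the set of $u$ with $\ell_p(u) \geq \ell_{p'}(u) - \delta^2$ for all $p'$, that is, $V_\delta(p)$. Since $\hat{x} \in H_p$ and reflection through $\hat{x}$ preserves $H_p$, we get
\[
    M_K(\hat{x}) \cap H_p ~ = ~ M_{H_p \cap K}(\hat{x}).
\]
Projecting this affinely (and bijectively on $H_p$) gives $M_\delta(x)$, and the same holds for all scalings $\lambda$. So $M^\lambda_\delta(x)$ is the projection of the central section $M_K^\lambda(\hat{x}) \cap H_p$. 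Next I compare the section with the full region. The reflection of the constraint $z \geq \ell_p(u) - \delta^2$ about $\hat{x}$ is $z \leq \ell_p(u)+\delta^2$, so $M_K(\hat{x})$ lies in the slab $|z - \ell_p(u)| \leq \delta^2$. I then claim that the vertical projection of $M_K(\hat{x})$ onto $\RE^d$ is contained in $M_\delta^{2}(x)$, and of course contains $M_\delta(x)$. The argument for the claim: for a point $(u,z)$ of $M_K(\hat{x})$, its reflection $(2x-u, 2\ell_p(x)-z)$ is also in $M_K(\hat{x})$. Averaging the $K$-constraint at the point with the $K$-constraint at its reflection, and using linearity of $\ell_p$ and convexity of $L$, the offset $z - \ell_p(u)$ cancels. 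This yields a symmetric constraint on $u - x$ with at most twice the slack defining $V_\delta(p)$.

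\textbf{Transferring overlap and containment.} If $M^\lambda_\delta(x) \cap M^\lambda_\delta(y) \neq \emptyset$, I need the lifted regions $M_K^\lambda(\hat{x})$ and $M_K^\lambda(\hat{y})$ to meet. The difficulty is that the two sections lie in different hyperplanes $H_p$ and $H_{p''}$ when $\nn(x) \neq \nn(y)$. Since each lifted region contains a vertical segment of half-length $\lambda\delta^2$ through every point of its section, the two can be made to intersect over the common point $u$. If this fails, I will relax to overlap of projections and absorb the loss into constants. Then the classical lemma gives $M_K^{\alpha\lambda}(\hat{y}) \subseteq M_K^{\beta'\lambda}(\hat{x})$ with $\beta' = \frac{2+\alpha(1+\lambda)}{1-\lambda}$. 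Projecting, and using the sandwich between section and projection from the previous step, the factor $2$ in the hypothesis on $\beta$ accounts for the doubling in the comparison. This yields $M^{\alpha\lambda}_\delta(y) \subseteq M^{\beta\lambda}_\delta(x)$.

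The main obstacle is the section-versus-projection comparison: proving that the projection of the full lifted Macbeath region lies within a factor $2$ of the distance-based one. This is exactly where the mismatch of the hyperplanes $H_p$ and $H_{p''}$ must be controlled. I would first try the averaging argument sketched above. If that fails, I would try a direct argument on the polytopes $V_\delta(p)$ and $V_\delta(p'')$, using that $\ell_p$ and $\ell_{p''}$ differ by at most $\delta^2$ on the common region.
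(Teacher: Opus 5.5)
\textbf{There is a genuine gap, in the transfer step for the case $\nn(x)\neq\nn(y)$.}

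Several parts of your proposal are fine. Identifying $M_\delta(x)$ with the central section $M_K(\hat x)\cap H_p$ is correct. The factor-$2$ sandwich also holds, and your averaging argument proves it. Adding the $K$-constraints at $(u,z)$ and at its reflection gives $L(u)+L(2x-u)\le 2\ell_p(x)+2\delta^2$. Using $L\ge\ell_p$, this gives $L(u)\le\ell_p(u)+2\delta^2$. Convexity of $L$ then puts $(x+u)/2$ and $(3x-u)/2$ in $V_\delta(p)$. Your one-step plan is also correct when $\nn(x)=\nn(y)$. Then both sections lie in one hyperplane, a common point of the projected regions lifts to a common point of the lifted ones, and one application of the classical lemma plus the sandwich gives exactly the stated constant.

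The gap is not the step you flag. It is your claim that each lifted region contains a vertical segment of half-length $\lambda\delta^2$ through every point of its section, which is false. Over $u=x+\lambda(v-x)$ the fibre of $M^\lambda_K(\hat x)$ is exactly $[\ell_p(u)-\lambda(\delta^2-a),\,\ell_p(u)+\lambda(\delta^2-b)]$, where $a=L(v)-\ell_p(v)$ and $b=L(2x-v)-\ell_p(2x-v)$. This fibre collapses to a point when $v$ and $2x-v$ both lie on $\partial V_\delta(p)$.

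The conclusion fails as well. Take $d=1$, $P=\{-3,-1,1,5\}$, $\delta=2$, $x=-1$, $y=5/4$, $\lambda=9/17$.
\begin{itemize}
\item $M_\delta(x)=V_\delta(-1)=[-3,1]$ and $M_\delta(y)=V_\delta(1)=[-1,7/2]$.
\item Hence $M^\lambda_\delta(x)=[-35/17,1/17]$ and $M^\lambda_\delta(y)=[1/17,83/34]$, which meet at $u=1/17$.
\item One checks that the lifted regions project exactly onto these intervals.
\item Their fibres over $u$ are the distinct single points $\ell_{-1}(u)=-19/17$ and $\ell_1(u)=-15/17$, so $M^\lambda_K(\hat x)\cap M^\lambda_K(\hat y)=\emptyset$.
\end{itemize}
Your fallback of relaxing to overlapping projections cannot be fed into the classical lemma. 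That lemma needs two Macbeath regions of the same body to actually intersect.

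\textbf{What the paper does instead.} For $\nn(x)\neq\nn(y)$ it never tries to make the lifted regions at $x$ and $y$ meet; it routes through a third centre.
\begin{itemize}
\item Let $z$ be a common point and $z^0$ its lift to the lower boundary of $\mathcal E$ (your $K$ translated up by $\delta^2$).
\item The lifts $z^\delta_x$ and $z^\delta_y$ of $z$ to $h_\delta(\nn(x))$ and $h_\delta(\nn(y))$ both lie at heights between $(1-\lambda)\delta^2$ and $\delta^2$ above $z^0$. This uses convexity: $L(z)-\ell_{\nn(x)}(z)\le\lambda\delta^2$ for $z\in M^\lambda_\delta(x)$.
\item It places an auxiliary centre $z^*$ at height $(1-\lambda/2)\delta^2$ above $z^0$. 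A suitably scaled Macbeath region at $z^*$ contains both lifts, so at a common scale it meets the regions at $x^\delta$ and at $y^\delta$.
\item It then applies classical expansion--containment twice, $y^\delta\to z^*\to x^\delta$, before projecting.
\end{itemize}
The price is a compounded constant. Your single application therefore justifies the stated constant only when $\nn(x)=\nn(y)$.

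A cheaper repair of your own route also exists. The bicone spanned by the section and the central vertical segment shows that over every $u\in M^\lambda_\delta(x)$ the fibre of $M^{3\lambda/2}_K(\hat x)$ contains $\ell_p(u)\pm\lambda\delta^2/2$. Since $|\ell_p(u)-\ell_{p''}(u)|\le\lambda\delta^2$ on the common region, the $\tfrac32\lambda$-scaled lifted regions do meet. This too changes the constant, and it requires $\lambda<2/3$.
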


Second, we establish a bound on the complexity of the subdivision. The proof follows from existing bounds on the size of Macbeath-based covers (see, e.g., \cite{AFM17c}), together with the close relationship between distance-based and classical Macbeath regions. Remarkably, the bound depends only on $\eps$ and the separation parameter $\gamma$ and is independent of the number of sites.

\begin{restatable}{lemma}{rsVorECC}\label{vor-ecc.lem}
Consider a point set $P$ and a query region $w$ in $\RE^d$ which are concentrically $2$-separated. Let $b$ be the ball containing the inner set. For any positive constant $\lambda$ and positive scalar $\gamma = O(1/\eps)$, let $X$ be a maximal set of points lying within $w \cap \gamma b$ such that the ellipsoids $E^{\lambda}_{\delta}(x)$ are pairwise disjoint, where $\delta = \gamma \sqrt{\eps} \cdot \radius(b)$. Then $|X| = O\big(\big(1/(\gamma\eps))^{d/2}\big)$.
\end{restatable}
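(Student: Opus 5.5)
The plan is to lift the problem to $\RE^{d+1}$ and reduce it to a classical Macbeath-region packing bound for a convex polytope. Use the standard lifting $x \mapsto \hat x = (x, \|x\|^2)$. Each site $p$ becomes the tangent hyperplane $h_p(x) = 2\inner{p}{x} - \|p\|^2$. Then $\|x-p\|^2 = \|x\|^2 - h_p(x)$, so the nearest site of $x$ is the one maximizing $h_p(x)$.

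Let $U$ be the upper envelope of the $h_p$, and let $K$ be the convex region $\{(x,z) : z \geq U(x)\}$, which is bounded above by the paraboloid when we restrict attention near $w$. The condition $x \in V_\delta(p)$ reads $h_p(x) \geq U(x) - \delta^2$. So the expanded cell $V_\delta(\nn(x))$ is the vertical projection of the slab between the facet hyperplane of $\nn(x)$ on $\partial K$ and a copy of it shifted down by $\delta^2$.

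The first step is to show that the distance-based Macbeath region $M_\delta(x)$ is comparable to the projection of a classical Macbeath region. Concretely, take the polytope $K_\delta$ obtained by intersecting $K$ with a suitable cap of vertical depth $\delta^2$, or more precisely the body lying below the paraboloid $z = \|x\|^2$ and above $U - \delta^2$. Then $M_\delta(x)$ should be sandwiched, up to constant scaling, between the projections of $M_{K'}(\hat y)$ for a point $\hat y$ at vertical depth $\Theta(\delta^2)$ above the envelope. I would mirror the argument already used for Lemma~\ref{vor-exp-con.lem}, taking that correspondence as given. The upshot is that disjoint shrunken ellipsoids $E^\lambda_\delta(x)$ map to disjoint constant-shrunken Macbeath regions of the lifted body, all at a fixed cap width $\Theta(\delta^2)$.

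Next I would invoke the known bound on Macbeath-based covers (\cite{AFM17c}, or the economical cap covering of \cite{AbM18}). For a convex body of diameter $D$, a maximal disjoint family of shrunken Macbeath regions centered at points at depth $\Theta(\Delta)$ has size $O((D/\Delta)^{(d'-1)/2})$, with $d' = d+1$ the ambient dimension. Hence the count is $O((D/\Delta)^{d/2})$.

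It remains to normalize. Scale so that $\radius(b) = 1$. Then the relevant portion of $U$ over $w \cap \gamma b$ has vertical extent, i.e. width, of order $\gamma^2$ by 2-separation: distances to $P$ are $\Theta(\gamma)$ in the outer region and $\Theta(1)$ to $O(\gamma)$ in general. After an affine normalization, the horizontal diameter is $O(\gamma)$ and the vertical range is $O(\gamma^2)$. The depth is $\Delta = \delta^2 = \gamma^2\eps$. A careful affine rescaling makes the lifted piece fat, with diameter $\Theta(\gamma^2)$ in the normalized sense. This gives $(\gamma^2/(\gamma^2\eps))^{d/2}$, which is too weak. So the anisotropy must be used: in the horizontal directions the curvature of the paraboloid yields caps of width $\sqrt{\Delta}$, compared to a horizontal extent of $O(\gamma)$.

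I would therefore split the horizontal extent at scale $\gamma$ and use the relative ratio $\Delta/\text{width}$. Cap width $\delta^2$ against an envelope whose relevant local width is $\gamma\cdot\radius$-scaled gives $(1/(\gamma\eps))^{d/2}$, with the restriction $\gamma = O(1/\eps)$ ensuring the count is at least constant.

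The main obstacle is this normalization. It means correctly identifying the effective body (the envelope piece over $w \cap \gamma b$, in either separation case) and its width-to-depth ratio so that the exponent $d/2$ comes out with base $1/(\gamma\eps)$. It also requires verifying that the lifted Macbeath regions for different $x$ are taken with respect to a single convex body, so that the packing bound applies. I would first try working directly with the body $\{(x,z) : U(x) - \delta^2 \le z \le U(x)\}$ restricted over $\gamma b$, apply a volume argument with caps of volume $\Theta(\delta^{d+2}/\gamma^{?})$, and fit the exponent.
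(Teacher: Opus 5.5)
Your overall plan is the paper's. You lift the sites, use Lemma~\ref{macbeath-related.lem}(b) with John's theorem to get $E^{\lambda}_{\delta}(x)\supseteq M^{\lambda/(2\sqrt{d})}_{\mathcal{E}}(x^{\delta})^{\downarrow}$, and conclude that disjoint ellipsoids give disjoint shrunken Macbeath regions of the single polyhedron $\mathcal{E}=\{z\ge U\}$, centered at points at vertical depth $\delta^2$; your single-body worry is therefore automatic. You then apply Lemma~\ref{ecc.lem} in dimension $d+1$. The gap is the normalization, which is the only step that produces the base $1/(\gamma\eps)$.

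Your claim that, with $\radius(b)=1$, the envelope $U$ has vertical extent of order $\gamma^2$ is false. $U(x)=\max_p\big(2\inner{p}{x}-\|p\|^2\big)$ is a maximum of affine functions with gradients $2p$. When $P$ is the inner set and $b$ is centered at the origin, $\|p\|\le 1$, so over the cylinder with cross-section $2\gamma b$ the lower boundary of $\mathcal{E}$ varies by only $O(\gamma)$. The quantity of order $\gamma^2$ is $\DF^2(x)=\|x\|^2-U(x)$, the gap up to the paraboloid, and the paraboloid does not belong in the body. First, the region below $z=\|x\|^2$ is not convex, so Lemma~\ref{ecc.lem} would not even apply. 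Second, $M_{\mathcal{E}}(x^{\delta})$ lies in the slab of height $2\delta^2$ above $h(\nn(x))$ (see the proof of Lemma~\ref{macbeath-related.lem}), so nothing far above the envelope matters. Your appeal to anisotropy, to paraboloid curvature giving caps of width $\sqrt{\Delta}$, to splitting the horizontal extent, and to ``fitting the exponent'' does not repair this: the lower boundary of $\mathcal{E}$ is polyhedral, and no derivation is given.

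The fix, as in the paper, is as follows; write $r=\radius(b)$.
\begin{itemize}
\item Intersect $\mathcal{E}$ with the vertical cylinder over $2\gamma b$, and close it off by a horizontal hyperplane $\Theta(\gamma r^2)$ above the highest point of the envelope there. Truncation only shrinks Macbeath regions, so disjointness survives.
\item The hypothesis $\gamma=O(1/\eps)$, with a suitably small constant, is exactly what makes $2\delta^2=2\gamma^2\eps r^2$ a small fraction of $\gamma r^2$. So the relevant Macbeath regions fit below the cap, and $\Delta\le\kappa/(12d)$ after scaling.
\item Scale horizontally by $\Theta(1/(\gamma r))$ and vertically by $\Theta(1/(\gamma r^2))$. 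This puts the body in $\kappa$-canonical form with $O(1)$ boundary slopes, so vertical depth and cap width agree up to constants. The depth becomes $\Delta=\Theta(\delta^2/(\gamma r^2))=\Theta(\gamma\eps)$.
\item Lemma~\ref{ecc.lem} then gives $O(\Delta^{-d/2})=O\big((1/(\gamma\eps))^{d/2}\big)$ directly.
\end{itemize}

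You also leave the case where $w$ is the inner set untreated. There the horizontal extent is $O(r)$ but the slopes are $O(\gamma r)$. A genuinely anisotropic scaling ($1/r$ horizontally, $1/(\gamma r^2)$ vertically) is needed to reach the same $\Delta=\Theta(\gamma\eps)$. This step uses that the relevant sites lie within $O(\gamma r)$ of the center. The paper assumes this, and the application supplies it via $\radius(b_w)\ge c\,\DF(x)$ in Lemma~\ref{sep-props.lem}. Farther sites give steeper tangent hyperplanes and a smaller normalized depth, so this is a real hypothesis rather than a worst case.
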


\subsection{The EVD Data Structure.} \label{sec:evd}

Consider a query region $w \subseteq \RE^d$, and a set of points $P$, such that $w$ and $P$ are concentrically 2-separated (recall Figure~\ref{separation_lite.fig}). Let $b$ denote the Euclidean ball containing the inner set so that the other set lies outside $2 b$. Our structure will involve collections of ellipsoids arranged into levels, where each level covers a certain region of interest. For $i \geq 0$, define $\gamma_i = 2^i$, $b_i = \gamma_i b$, and $\delta_i = \gamma_i \sqrt{\eps} \cdot \radius(b)$. Define $w_i = w \cap b_{i+1}$ to be the portion of the query region handled by level $i$ of the structure. Observe that when $w$ is the inner set, $w_i = w$, for all $i$ (see Figure~\ref{evd-setup.fig}(a)). On the other hand, if $w$ is the outer set, the sets $w_i$ are the intersection of $w$ with concentric Euclidean balls of exponentially increasing radii (see Figure~\ref{evd-setup.fig}(b)). While there is no \textit{a priori} bound on how large $w$ can be, we may restrict attention to the portion of $w$ that lies within the ball $(3/\eps) b$.

\begin{figure}[htbp]
  \centerline{\includegraphics[scale=0.40]{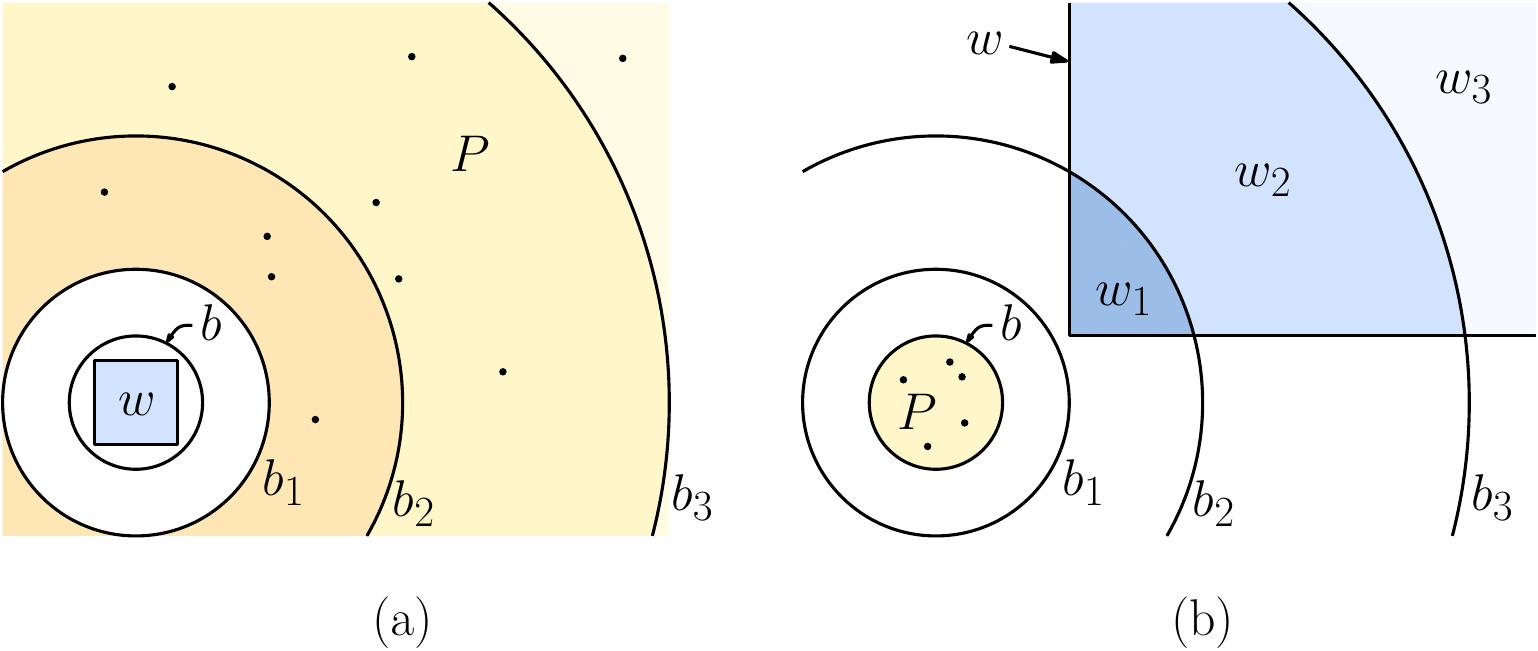}}
  \caption{EVD structure when $w$ and $P$ are 2-separated: (a) $w$ is the inner set and (b) $P$ is the inner set.} \label{evd-setup.fig}
\end{figure}

For any point $x \in w_i$, consider the ellipsoids $E'_i(x) = E^{1/2}_{\delta_i}(x)$ and $E''_i(x) = E_{\delta_i}^{\lambda_1}(x)$, where $\lambda_1 = 1/(16 \sqrt{d} + 1)$. Let $X_i$ be any maximal set of points, all lying within $w_i$, such that the ellipsoids $E''_i(x)$ are pairwise disjoint. Clearly, the smaller ellipsoids $E''_i(x)$ form a packing of $w_i$, and it follows (essentially from the expansion-containment property of Lemma~\ref{vor-exp-con.lem}) that the larger ellipsoids $E'_i(x)$ form a cover of $w_i$ (see, e.g., ~\cite{AbM18}). See Figure~\ref{ellipsoid-cover.fig}.

\begin{figure}[htbp]
  \centerline{\includegraphics[scale=0.40]{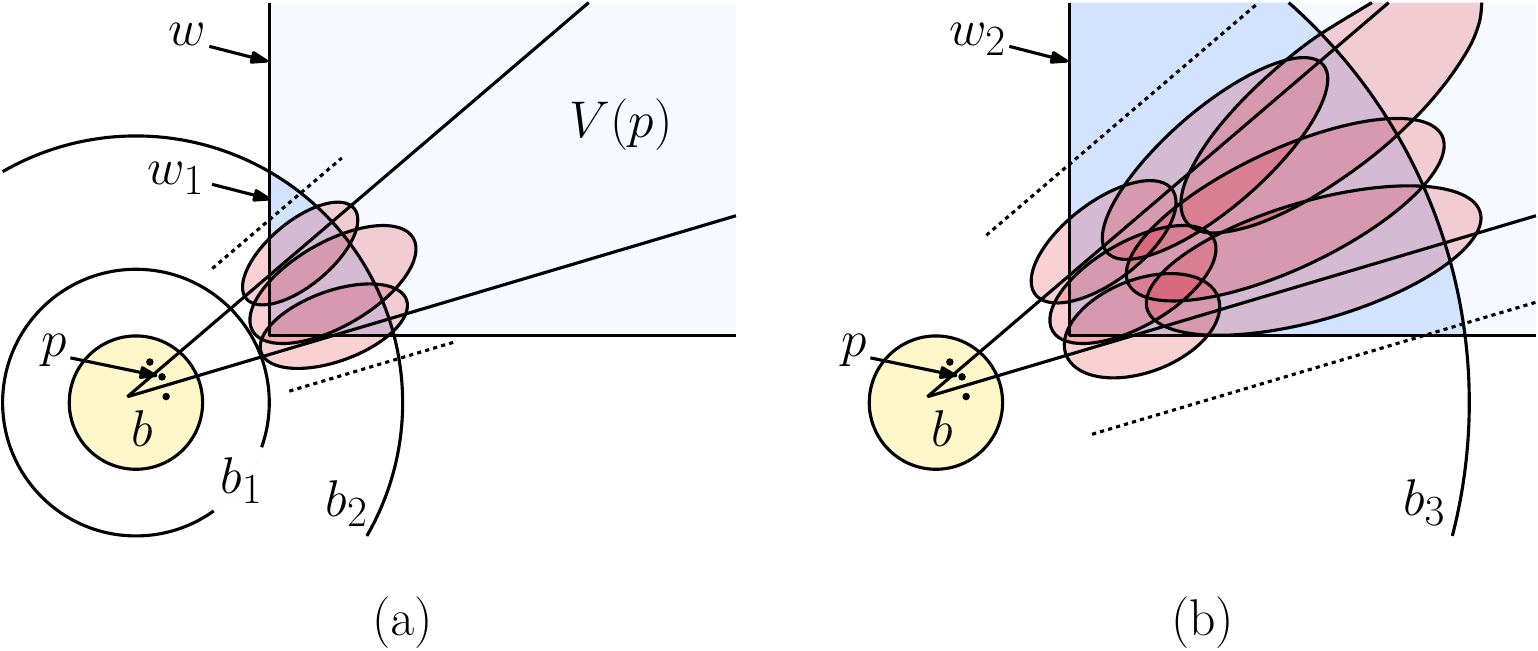}}
  \caption{(a) The ellipsoid cover for $w_2 \cap V(p)$ and (b) the ellipsoid cover for $w_3 \cap V(p)$.} \label{ellipsoid-cover.fig}
\end{figure}

There are two stopping criteria. If $w$ is the inner set, we stop at any level $\ell$ such that $|X_{\ell}| = 1$ (a single ellipsoid covers the entire query region). If $w$ is the outer set, we stop at the first level $\ell$ such that $|X_{\ell}| = 1$ and $(w \cap (3/\eps) b) \subseteq b_{\ell}$ (a single ball covers the relevant portion of the query region). We show below that such an $\ell$ exists in either case.

The data structure is a rooted directed acyclic graph (DAG), where the levels are indexed from $\ell$ (the root) down to 1 (the leaves). The nodes of level $i$ of the DAG correspond 1--1 to the points of $X_i$. We also identify $X_i$ with level $i$. Each $x \in X_i$ is associated with: 
\begin{itemize} \setlength{\itemsep}{3pt}\setlength{\parsep}{2pt}%
\item a \emph{cell}: $\cell(x) = E'_i(x)$,

\item a \emph{representative site}: $\rep(x) = \nn(x)$,

\item \emph{children}: for $i \geq 2$, $\ch(x)$ consists of the nodes $y \in X_{i-1}$ such that $\cell(x) \cap \cell(y) \neq \emptyset$.
\end{itemize}

A nearest-neighbor query for a point $q \in w$ is answered by a simple descent of the EVD. By the remarks made above, we may assume that $q$ lies within $w \cap (3/\eps)b$, since otherwise we may report any point of $P$ as an $\eps$-approximate nearest neighbor. Therefore, $q$ lies within the root's cell. To answer the query, we descend the DAG level by level, visiting any child of the current node whose cell contains the query point until no such child exists (or the leaf level is reached). Let $x$ be the last node visited by the search. Then, $\rep(x)$ is a valid $\eps$-approximate nearest neighbor of $q$.

Intuitively, each successive level of the DAG from the root down involves an ellipsoid cover based on exponentially smaller values of $\delta_i$, implying that the representatives of the ellipsoids containing the query point at successive levels provide successively better approximations to the nearest neighbor. The following lemma shows that when the search procedure terminates, the representative of the last node visited will be an $\eps$-ANN of the query point.

\begin{restatable}{lemma}{rsCorrectness}\label{ann-correctness.lem}
Given a point set $P$ and query region $w$ that are concentrically 2-separated and $0 < \eps \leq 1/2$, the above search algorithm returns an $\eps$-ANN among the points of $P$ for any query point $q \in w$.
\end{restatable}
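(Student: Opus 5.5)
The plan is to show that whenever the search stops at a node $x \in X_i$ with $q \in \cell(x) = E'_i(x)$, the representative $\rep(x) = \nn(x)$ is valid for $q$ via Lemma~\ref{exp-vor.lem}. For this we need two facts:
\[
    q \in V_{\delta_i}(\nn(x)) \qquad\text{and}\qquad \delta_i \leq \sqrt{\eps}\,\|q - \nn(x)\|.
\]
The first fact is immediate. We have $E'_i(x) = E^{1/2}_{\delta_i}(x) \subseteq M^{1/2}_{\delta_i}(x) \subseteq M_{\delta_i}(x)$ by Eq.~\eqref{eq:john-thm}, and $M_{\delta_i}(x)$ is contained in the expanded Voronoi cell $V_{\delta_i}(\nn(x))$ by the definition of the distance-based Macbeath region. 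So every point of every cell lies in the expanded cell of that node's representative. The work is therefore to control the ratio $\delta_i/\|q-\nn(x)\|$ at the terminal level.

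I will first check that the descent is well defined, namely that some child cell always contains $q$ while $q$ lies in the relevant region $w_{i-1}$. The cells $E'_{i-1}(y)$, $y\in X_{i-1}$, cover $w_{i-1}$: this is the covering claim derived from Lemma~\ref{vor-exp-con.lem} together with the maximality of $X_{i-1}$. If $q \in w_{i-1}$, the covering cell meets $\cell(x)$ at $q$, so it is a child of $x$. Consequently the search fails to descend only in two situations. Either the leaf level $i=1$ is reached, or $q \notin w_{i-1} = w \cap b_i$; the latter can happen only when $w$ is the outer set.

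The ratio bound then splits into these two cases, using $\delta_i = 2^i\sqrt{\eps}\,\radius(b)$.

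\textbf{Case 1: the search stops with $q \notin b_i$.} Here $P$ is the inner set and lies in $b$, while $\|q - c_b\| \geq 2^i\radius(b)$. The triangle inequality gives $\|q-\nn(x)\| \geq (2^i-1)\radius(b) \geq 2^{i-1}\radius(b)$. This yields $\delta_i \leq 2\sqrt{\eps}\,\|q-\nn(x)\|$.

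\textbf{Case 2: the search reaches the leaf level.} Then $\delta_1 = 2\sqrt{\eps}\,\radius(b)$. By 2-separation, $\|q - p\| \geq \radius(b)$ for every $p\in P$ in both configurations. This again gives $\delta_1 \leq 2\sqrt{\eps}\,\|q-\nn(x)\|$.

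The main obstacle is that both cases lose a constant factor (here $2$), so $\delta_i \leq \sqrt{\eps}\,\|q-\nn(x)\|$ fails as literally stated. I would handle this by running the construction with $\eps/4$ in place of $\eps$ in the definition of $\delta_i$, or equivalently by invoking Lemma~\ref{exp-vor.lem} with $4\eps$. Either change only affects constants, consistent with the convention adopted for squared distances.

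A further subtlety arises if $q$ lies outside $(3/\eps)b$. There any site is an $\eps$-ANN directly: $\radius(b)$ is at most an $O(\eps)$ fraction of the distance from $q$ to $b$, so any two sites' distances to $q$ differ by a factor of at most $1+\eps$. I also need to verify that the root cell contains $q$, which follows from the stopping criteria. With these in place, Lemma~\ref{exp-vor.lem} completes the proof.
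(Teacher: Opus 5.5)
Your proof is correct and follows the same route as the paper. The cell satisfies $\cell(x)\subseteq V_{\delta_i}(\nn(x))$, the covering property forces termination either at the leaf level or with $q\notin b_i$, the triangle inequality then bounds $\|q-\nn(x)\|$ from below, and Lemma~\ref{exp-vor.lem} finishes.

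The factor-$2$ loss you found is real relative to the definition in Section~\ref{sec:evd}. The paper's proof silently takes $\delta_i = \gamma_i\sqrt{\eps}\,\radius(b)/2$, which is exactly your $\eps/4$ rescaling. Your proposed alternative of invoking Lemma~\ref{exp-vor.lem} with $4\eps$ is not equivalent: it only gives a $4\eps$-ANN and requires $\eps \leq 1/8$.
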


\begin{proof}
Consider any query point $q \in w$. As observed earlier, if $w$ is the outer set and $q$ lies outside $(3/\eps) b$, then we may report any point of $P$ as $q$'s $\eps$-ANN. Otherwise, $q$ lies within the cell of the root of the $\eps$-EVD. The search procedure descends the DAG until encountering a node that has no child whose cell contains $q$. Let $i \geq 1$ denote this node's level, let $x \in X_i$ be the point associated with this node, and let $p = \rep(x) = \nn(x)$. Because $q \in \cell(x)$ and $\cell(x) = E^{1/2}_{\delta_i}(x) \subseteq V_{\delta_i}(p)$, we have $q \in V_{\delta}(p)$. 

The nodes of $X_{i-1}$ cover $w_{i-1} = w \cap b_i$, and thus $q$ lies outside $b_i$. (When $i = 1$, this is trivially true by concentric 2-separation.) Let $r = \radius(b)$. By the triangle inequality, $\|p - q\| \geq (\gamma_i - \gamma_0) r \geq \gamma_i r/2$. By definition, $\delta_i = \gamma_i \sqrt{\eps} r/2$, and hence 
\[
	\frac{\delta_i}{\|p - q\|} 
		~ \leq ~ \frac{\gamma_i \sqrt{\eps} r/2}{\gamma_i r/2}
		~  =  ~ \sqrt{\eps}. 
\]
By Lemma~\ref{exp-vor.lem} and the fact that $q \in V_{\delta}(p)$, $p$ is an $\eps$-ANN of $q$, as desired.
\end{proof}

The query time depends on the product of the number of levels $\ell$ and the maximum out-degree of each node. In the following lemma, we show that the number of levels is $O(\log (1/\eps))$. Intuitively, this holds due to the doubling of radii at each level, combined with the fact that once the distance exceeds $O(\radius(b)/\eps)$, a single representative suffices. The technical details are deferred to Section~\ref{sec:evd-height}.

\begin{restatable}{lemma}{rsRootLevel}\label{root-level.lem}
Given a point set $P$ and query region $w$ that are concentrically 2-separated and $0 < \eps \leq 1/2$, the $\eps$-EVD structure has $O(\log (1/\eps))$ levels. 
\end{restatable}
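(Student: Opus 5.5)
We have to show that the stopping level $\ell$ exists and is $O(\log(1/\eps))$. Both stopping conditions require $|X_\ell| = 1$, so the main step is to show that once $\gamma_i$ reaches $\Theta(1/\eps)$, a single ellipsoid $E''_i(x)$ already meets every other candidate ellipsoid. By maximality this forces $|X_i| = 1$. We split into the two cases of concentric separation.

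\textbf{Case 1: $w$ is the inner set.} Here $w \subseteq b$ and $P$ lies outside $2b$. Let $r = \radius(b)$. At level $i$ we have $\delta_i = \gamma_i\sqrt{\eps}\, r$. Fix $x \in w$ and $p = \nn(x)$, so that $\|x-p\| \geq r$.

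The first task is to show that $V_{\delta_i}(p)$ contains a large ball around $x$. For any $p' \in P$ and any $y$ with $\|y-x\| \leq \rho$, we need
\[
\|y-p\|^2 - \|y-p'\|^2 \leq \delta_i^2 .
\]
Since $\|x-p\| \leq \|x-p'\|$, expanding gives
\[
\|y-p\|^2 - \|y-p'\|^2 \leq 2\rho\,\|p-p'\| ,
\]
which is the linear change of the bisector function. The difficulty is that $\|p-p'\|$ is not bounded in general. However, only sites $p'$ whose translated bisector comes near $w$ matter, and for sites far away the shift $\delta^2/(2\|p-p'\|)$ is measured at distance. So the cleaner route is the following: the slack of the $(p,p')$ constraint at $x$, divided by the gradient norm $2\|p-p'\|$, equals the distance from $x$ to the shifted hyperplane, and this distance is at least $\delta_i^2/(2\|p-p'\|)$.

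This lower bound is still not uniform in $p'$. I would therefore instead invoke the size bound of Lemma~\ref{vor-ecc.lem}. By that lemma, $|X_i| = O\big((1/(\gamma_i\eps))^{d/2}\big)$. The lemma allows $\gamma = O(1/\eps)$, and taking $\gamma_i = c/\eps$ for a large enough constant $c$ makes the hidden bound less than $2$, so $|X_i| = 1$.

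To make this rigorous rather than asymptotic, I would reopen the constant in Lemma~\ref{vor-ecc.lem}. Alternatively, I would argue directly via the lifting: the Macbeath regions of the lifted polytope at depth $\delta_i^2$ contain all of $w$ once $\delta_i^2 \geq c\,r^2/\eps \cdot \eps$. At that depth the relevant cap has width comparable to $\diam(w)^2$. Hence $\gamma_i = O(1/\eps)$ suffices, and since $\gamma_i = 2^i$ this gives $\ell = O(\log(1/\eps))$.

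\textbf{Case 2: $P$ is the inner set.} Now $w$ is the outer set. Restrict attention to $w \cap (3/\eps)b$; the second stopping condition $(w\cap(3/\eps)b) \subseteq b_\ell$ holds as soon as $2^\ell \geq 3/\eps$, that is, for some $\ell = \log(3/\eps) + O(1)$. For $|X_\ell| = 1$ we use the same reasoning. All sites lie in $b$, so the bisector normals satisfy $\|p-p'\| \leq 2r$, and the shifted bisectors move by at least
\[
\frac{\delta_\ell^2}{4r} = \frac{\gamma_\ell^2\, \eps\, r}{4} .
\]
With $\gamma_\ell \geq 3/\eps$ this shift is at least $\gamma_\ell \cdot \tfrac{3r}{4}$. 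This is comparable to the radius of $b_{\ell+1}$ and therefore ensures that $V_{\delta_\ell}(p)$ contains all of $w_\ell$, or at least that the Macbeath region at a suitable $x$ is large. Enlarging $\ell$ by a constant number of additional levels turns the ``comparable'' into a containment with room for the $\lambda_1$-scaling and the John factor $\sqrt d$. The $\sqrt d$ enters through Eq.~\eqref{eq:john-thm}.

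\textbf{Main obstacle.} The main obstacle is turning $V_{\delta}(p) \supseteq w_\ell$ into the statement that a single $E''$ ellipsoid blocks all others. That statement requires the Macbeath region $M_{\delta_\ell}(x)$ to be big relative to $w_\ell$, not merely to contain it. My plan is to use the constant slack in $\ell$ to guarantee that $V_{\delta_\ell}(p)$ contains a ball about $x$ of radius a large constant times $\diam(w_\ell)$. Central symmetry then gives a Macbeath region containing that ball, and John's theorem places the scaled ellipsoid over all of $w_\ell$. Maximality of $X_\ell$ then forces $|X_\ell| = 1$.
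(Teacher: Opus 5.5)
Your Case~2 is sound and is essentially the paper's argument. Since all sites lie in $b$, every shifted $(p,p')$ bisector is at distance at least $\delta_\ell^2/(4r)$ from every $y\in w_\ell$. This shift, divided by the radius of $w_\ell$, grows like $\gamma_\ell\eps$. So after $\log_2(1/\eps)+O(1)$ levels, central symmetry and John's theorem place $w_\ell$ inside $E''_\ell(y)$ for every $y\in w_\ell$, which forces $|X_\ell|=1$.

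Case~1, however, has a genuine gap.

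\emph{The appeal to Lemma~\ref{vor-ecc.lem} fails.} An $O(\cdot)$ bound can never certify $|X_i|\le 1$. Reopening its constant does not help either. That lemma rests on Lemma~\ref{ecc.lem}, which requires the normalized depth $\Delta=\Theta(\gamma\eps)$ to be at most $\kappa/12d$. It is therefore valid only when $\gamma\eps$ is a small constant, which is the opposite of your regime $\gamma_i=c/\eps$ with $c$ large, and there it yields only $|X_i|=O(1)$.

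\emph{The lifting alternative fails for the same reason.} It has exactly the nonuniformity you had already flagged. The depth at which a lifted Macbeath region projects over all of $w$ scales with the relative facet slopes, i.e.\ like $\|p-p'\|\cdot\diam(w)$, not like $\diam(w)^2$.

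\emph{The missing step} is to bound $\|p-p'\|$ only for the sites that matter, as the paper does. Any $p'$ that is the nearest neighbor of some point of $w$ satisfies $\|p'-x\|\le\|p-x\|+\diam(w)$. So when $\|p-x\|\le\gamma_i r$, you get $\|p-p'\|\le 4\gamma_i r$ and a shift of at least $\delta_i^2/(8\gamma_i r)=\Omega(\gamma_i\eps r)$. This exceeds any constant multiple of $r$ once $\gamma_i=\Theta(1/\eps)$.

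To control every facet of $V_{\delta}(p)$, not only those of relevant sites, use that the unshifted bisector is at distance
$(\|y-p'\|^2-\|y-p\|^2)/(2\|p-p'\|)\ge(\|y-p'\|-\|y-p\|)/2$ from $y$. A site with $\|y-p'\|\ge\|y-p\|+2R$ therefore cannot cut the ball of radius $R$ about $y$, while every other site has $\|p-p'\|<2\|y-p\|+2R$.

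Note that this step needs $\|p-x\|=O(\gamma_\ell r)$. Without such a bound, the required level grows like $\tfrac12\log_2(\|p-x\|/(\eps r))$. The bound does hold in the AVD application, where $\DF=O(\radius(b_w))$. With this uniform shift bound in hand, your Case~2 finishing argument goes through unchanged.
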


The following lemma bounds the number of children. This follows in a straightforward manner from the expansion-containment properties of the distance-based Macbeath regions and a simple packing argument. The proof is presented in Section~\ref{sec:ann-child-bound}.

\begin{restatable}{lemma}{rsChildBound}\label{ann-child-bound.lem}
Each node of the $\eps$-EVD has at most $\big(14\sqrt{d}\big)^d = O(1)$ children.
\end{restatable}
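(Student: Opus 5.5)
The plan is a packing argument carried out in the affine frame of the parent's ellipsoid. Fix a node $x \in X_i$ with $i \geq 2$, and let $y \in X_{i-1}$ be a child. By definition, $\cell(x) \cap \cell(y) \neq \emptyset$, that is, $E^{1/2}_{\delta_i}(x) \cap E^{1/2}_{\delta_{i-1}}(y) \neq \emptyset$. The first task is to compare Macbeath regions across levels. Since $\delta_{i-1} \leq \delta_i$, we have $V_{\delta_{i-1}}(p) \subseteq V_{\delta_i}(p)$ for every site $p$, and hence $M_{\delta_{i-1}}(y) \subseteq M_{\delta_i}(y)$. Both regions are taken with respect to the expanded cell of the same site $\nn(y)$, because $y$ is fixed.

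Consequently, $E^{1/2}_{\delta_{i-1}}(y) \subseteq M^{1/2}_{\delta_{i-1}}(y) \subseteq M^{1/2}_{\delta_i}(y)$. Using $E \subseteq M$ from Eq.~\eqref{eq:john-thm}, the level-$\delta_i$ regions $M^{1/2}_{\delta_i}(x)$ and $M^{1/2}_{\delta_i}(y)$ therefore intersect. We then apply Lemma~\ref{vor-exp-con.lem} with $\delta = \delta_i$ and $\lambda = 1/2$, choosing $\alpha\lambda = \lambda_1$ (so $\alpha = 2\lambda_1$). This gives $M^{\lambda_1}_{\delta_i}(y) \subseteq M^{\beta/2}_{\delta_i}(x)$ with $\beta = 2(2+\alpha\cdot 3/2)/(1/2) = 8 + 12\lambda_1$. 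The expansion-containment lemma handles regions whose nearest sites differ, which is exactly why it is stated for general $x,y$.

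Next comes the volume comparison. The packing ellipsoids $E''_{i-1}(y) = E^{\lambda_1}_{\delta_{i-1}}(y)$ are pairwise disjoint for the children $y$. The difficulty is that they live at level $\delta_{i-1}$, while the containment above is at level $\delta_i$. I would bridge the two levels with a shrinking lemma: $M_{\delta_{i-1}}(y)$ contains a constant-factor scaling of $M_{\delta_i}(y)$. Since $\delta_{i-1} = \delta_i/2$ and squared expansions are additive, shrinking the slack in each halfspace by a factor of $4$ scales the centrally symmetric region about $y$ by at least $1/4$. This works because each constraint is linear in $z$ with slack proportional to $\|y-p\|^2 - \|y-p'\|^2 + \delta^2$, the first part of which is nonpositive.

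This step is the main obstacle, and it is where I expect the constant $14\sqrt d$ to arise. If the factor comes out differently, I would instead take $\alpha$ so that $\alpha\lambda$ absorbs it. With the shrinking lemma in hand, each $E''_{i-1}(y)$ contains $E^{\lambda_1/4}_{\delta_i}(y)$ up to John factors. All of these lie inside $M^{\beta/2}_{\delta_i}(x) \subseteq E^{\sqrt d\,\beta/2}_{\delta_i}(x)$.

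Finally, we compare volumes. Mapping $E_{\delta_i}(x)$ to the unit ball, each child ellipsoid $E^{\lambda_1/4}_{\delta_i}(y)$ has volume at least $(\lambda_1/(4\sqrt d))^d$ times that of its own John ellipsoid relative to $M$. Relating $M_{\delta_i}(y)$ to $M_{\delta_i}(x)$ once more through expansion containment in the reverse direction gives comparable volumes. Dividing the volume of the container by the minimum child volume bounds the number of children by $(c\sqrt d)^d$. Tracking $\lambda_1 = 1/(16\sqrt d+1)$ through the computation should yield a constant of roughly $14\sqrt d$, and in any case $O(1)$.
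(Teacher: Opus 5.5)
Your proposal follows the paper's proof. You intersect the two cells, lift $y$'s region to level $\delta_i$ by monotonicity, apply Lemma~\ref{vor-exp-con.lem} at level $\delta_i$, compare volumes across the two levels (the paper uses Lemma~\ref{delta-containment.lem}(b) for this), convert with John's theorem, and pack. This correctly proves that the out-degree is $O(1)$, and in two places your bookkeeping is more careful than the written proof.

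\emph{The level factor.} Your shrinking step $M^{1/4}_{\delta_i}(y)\subseteq M_{\delta_{i-1}}(y)$ is right. With $p=\nn(y)$, the slack of the $p'$-constraint at $y$ is $\delta^2+\|y-p'\|^2-\|y-p\|^2$, and the second part is nonnegative. So halving $\delta$ shrinks the slack by up to a factor of $4$ (e.g.\ when $y$ is equidistant from $p$ and $p'$). The level-change factor is therefore $4^{-d}$, not $2^{-d}$.

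\emph{Both directions of containment.} You apply expansion--containment twice. First, $M^{\lambda_1}_{\delta_i}(y)\subseteq M^{4+6\lambda_1}_{\delta_i}(x)$ places all the disjoint packing ellipsoids in one container. Second, $M^{1/2}_{\delta_i}(x)\subseteq M^{7}_{\delta_i}(y)$ lower-bounds each child's volume. The paper records only the second inclusion and leaves the container implicit.

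What does not hold up is your closing prediction that the constant comes out near $14\sqrt{d}$. Carried through, your chain reads
\[
  \vol\bigl(E''_{i-1}(y)\bigr)
  ~\ge~ \vol\bigl(M^{\lambda_1/\sqrt{d}}_{\delta_{i-1}}(y)\bigr)
  ~\ge~ \vol\bigl(M^{\lambda_1/(4\sqrt{d})}_{\delta_i}(y)\bigr)
  ~\ge~ \Bigl(\frac{\lambda_1}{56\sqrt{d}}\Bigr)^{d}\,\vol\bigl(M_{\delta_i}(x)\bigr).
\]
All these ellipsoids lie in $M^{4+6\lambda_1}_{\delta_i}(x)$, so the number of children is at most
\[
  \Bigl(\frac{56\sqrt{d}\,(4+6\lambda_1)}{\lambda_1}\Bigr)^{d}
  ~=~ \bigl(56\sqrt{d}\,(64\sqrt{d}+10)\bigr)^{d}.
\]
Any such packing count carries the factor $(1/\lambda_1)^d=(16\sqrt{d}+1)^d>(14\sqrt d)^d$, because the packing elements are $\lambda_1$-scaled. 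So this method cannot land at $14\sqrt{d}$.

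The paper's displayed chain reaches $(14\sqrt{d})^d$ only because of two steps. It starts from $\vol(E''_{\delta_{i-1}}(y))\ge\vol(M^{1/\sqrt{d}}_{\delta_{i-1}}(y))$, which ignores the scale $\lambda_1$ of $E''$; since $E^{\lambda_1}\subseteq M^{\lambda_1}$ and $\lambda_1\sqrt{d}<1$, that inequality is false. It also uses the factor-$2$ level comparison.

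Re-choosing $\alpha$ cannot absorb the level factor either, since $\alpha$ only sets the size of the container. State your conclusion as a bound of the form $(c\,d)^d=O(1)$ rather than the explicit $(14\sqrt{d})^d$.
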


Finally, we bound the total space used by the data structure. This follows by applying Lemma~\ref{vor-ecc.lem} to each of the successive levels. The sequence forms a geometric series, and hence the total complexity is dominated by the complexity of the leaf level. The proof is also presented in Section~\ref{sec:ann-child-bound}.

\begin{restatable}{lemma}{rsAnnSpaceBound}\label{ann-space-bound.lem}
Given a point set $P$ and query region $w$ that are concentrically 2-separated and $0 < \eps \leq 1/2$, the total storage required by the $\eps$-EVD is $O(1/\eps^{d/2})$.
\end{restatable}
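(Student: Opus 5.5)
The plan is to bound the storage level by level and then sum. The total storage of the $\eps$-EVD is the sum over levels $i = 1, \ldots, \ell$ of the number of nodes $|X_i|$, times the per-node cost. Each node stores its cell (an ellipsoid, $O(1)$ space in constant dimension), its representative site, and its list of children. By Lemma~\ref{ann-child-bound.lem} the child list has $O(1)$ entries. So the storage is $O\big(\sum_{i=1}^{\ell} |X_i|\big)$, and it suffices to bound this sum by $O(1/\eps^{d/2})$.

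To bound each $|X_i|$, invoke Lemma~\ref{vor-ecc.lem} with $\gamma = \gamma_{i+1} = 2^{i+1}$ and $\lambda = \lambda_1$.
\begin{itemize}
\item The points of $X_i$ lie in $w_i = w \cap b_{i+1} = w \cap \gamma_{i+1} b$.
\item The ellipsoids $E^{\lambda_1}_{\delta_i}(x)$ are pairwise disjoint by construction.
\item The lemma is stated for $\delta = \gamma\sqrt{\eps}\,\radius(b)$, while we use $\delta_i = \gamma_i \sqrt{\eps}\,\radius(b)$. These differ only by a factor of $2$, which can be absorbed either by applying the lemma with $\gamma = \gamma_i$ on a region at most twice as large, or by noting that the packing bound is insensitive to constant-factor changes in $\gamma$.
\item $X_i$ may not be maximal with respect to the larger region, but any disjoint family extends to a maximal one, so the lemma's bound still applies.
\end{itemize}
We also need $\gamma = O(1/\eps)$. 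This holds because, by Lemma~\ref{root-level.lem} and the stopping rule (the relevant region lies within $(3/\eps)b$), $2^{\ell} = O(1/\eps)$. The result is
\[
|X_i| ~=~ O\big((1/(2^{i}\eps))^{d/2}\big).
\]

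Summing over $i \ge 1$ gives $O(\eps^{-d/2}) \cdot \sum_{i \geq 1} 2^{-id/2}$. The series is geometric with ratio $2^{-d/2} < 1$, so it is $O(1)$, and the total is $O(1/\eps^{d/2})$. The number of levels, $O(\log(1/\eps))$ by Lemma~\ref{root-level.lem}, does not add a logarithmic factor because of this geometric decay.

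The main obstacle is the middle step: checking that Lemma~\ref{vor-ecc.lem}'s hypotheses genuinely match level $i$. This covers three things.
\begin{itemize}
\item The $\gamma$ versus $\gamma_{i+1}$ indexing and the corresponding $\delta$ scaling.
\item The requirement $\gamma = O(1/\eps)$ at every level.
\item The case where $w$ is the inner set, so $w_i = w$ for all $i$. There the bound for level $i$ is still stated in terms of $w \cap \gamma b$, which only helps, since $w \subseteq b$.
\end{itemize}
If the constant-factor mismatch in $\delta$ cannot be absorbed directly, the fallback is to re-derive the packing bound via John's theorem~\eqref{eq:john-thm}: changing $\delta$ by a factor of $2$ changes the Macbeath regions by at most a constant scaling, so the packing count changes by only a constant factor.
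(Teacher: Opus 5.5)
Your proposal is correct and follows the same route as the paper. The bounded out-degree from Lemma~\ref{ann-child-bound.lem} reduces storage to $\sum_i |X_i|$, Lemma~\ref{vor-ecc.lem} gives $|X_i| = O\big((1/(2^i\eps))^{d/2}\big)$ at each level, and the geometric series sums to $O(1/\eps^{d/2})$.

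You are more careful than the paper, which silently absorbs the factor-2 mismatch between $\delta_i$ and the lemma's $\delta$ and never checks $\gamma = O(1/\eps)$. Note that this last condition comes from the proof of Lemma~\ref{root-level.lem} (the construction stops by the level where $2^{\ell} = \Theta(1/\eps)$), not from its $O(\log(1/\eps))$ statement alone.
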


Even without smoothing, we have shown that, in the well-separated case, $\eps$-ANN queries can be efficiently processed by a simple descent in a DAG structure based on a hierarchy of ellipsoids.

\begin{theorem} \label{thm:evd_ws}
Given a query region $w \subseteq \RE^d$ and a set of points $P$, such that $w$ and $P$ are concentrically 2-separated, the EVD data structure can answer $\eps$-approximate nearest neighbor queries, for an approximation parameter $0 < \eps \leq 1/2$ and any $q \in w$ against $P$ with
\[
    \hbox{Query time:~} O\kern-2pt \left(\log 1/\eps\right)
	\quad\hbox{and}\quad
    \textrm{Space:~} O\kern-2pt \left(1 / \eps^{d/2} \right).
\]
\end{theorem}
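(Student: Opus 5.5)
This theorem is an assembly of the four lemmas just established for the EVD, so the plan is to combine them rather than prove anything new. \emph{Correctness} comes directly from Lemma~\ref{ann-correctness.lem}. For any $q \in w$, there are two cases. If $w$ is the outer set and $q \notin (3/\eps)b$, we report an arbitrary site. Checking this membership takes $O(1)$ time, since we store the center and radius of $b$. Otherwise, the descent of the DAG ends at a node whose representative is an $\eps$-ANN of $q$. One point needs care before Lemma~\ref{ann-correctness.lem} applies: the root's cell must contain $q$. This is ensured by the stopping criterion, which at the root level $\ell$ gives $|X_\ell| = 1$ and a single cell $E'_\ell$ covering $w_\ell \supseteq w \cap (3/\eps) b$. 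That the larger ellipsoids $E'_i$ cover $w_i$ at every level follows from maximality of $X_i$ together with expansion-containment (Lemma~\ref{vor-exp-con.lem}) and the John containment \eqref{eq:john-thm}, as noted in the construction.

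For the \emph{query time}, the descent visits one node per level. At each visited node it scans that node's children, testing each child's cell for containment of $q$. A containment test is the evaluation of one quadratic form, $(q-c)^\intercal M (q-c) \leq 1$, which costs $O(d^2) = O(1)$. By Lemma~\ref{ann-child-bound.lem}, each node has at most $(14\sqrt d)^d = O(1)$ children, so each level costs $O(1)$. By Lemma~\ref{root-level.lem}, there are $O(\log(1/\eps))$ levels, giving total query time $O(\log 1/\eps)$.

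For the \emph{space}, Lemma~\ref{ann-space-bound.lem} bounds the number of nodes by $O(1/\eps^{d/2})$. Each node stores three things: an ellipsoid (a center and a $d\times d$ matrix), a pointer to its representative site, and at most $O(1)$ child pointers, again by Lemma~\ref{ann-child-bound.lem}. Each node therefore uses $O(1)$ words, and the total storage is $O(1/\eps^{d/2})$.

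There is no substantial obstacle here; the work lies in the cited lemmas. The only step needing attention is making sure the search always starts inside the root cell and that ``no child contains $q$'' is handled consistently with the covering property. These are exactly the hypotheses under which Lemma~\ref{ann-correctness.lem} was proved, so once the root cell is shown to contain $q$ the argument is complete.
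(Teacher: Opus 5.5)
Your proposal is correct and follows the paper's own assembly of the result. Correctness comes from Lemma~\ref{ann-correctness.lem}, with the root cell containing $q$ by the stopping criterion. Query time is the $O(\log 1/\eps)$ levels of Lemma~\ref{root-level.lem} times the $O(1)$ out-degree of Lemma~\ref{ann-child-bound.lem}. Space is the node count of Lemma~\ref{ann-space-bound.lem} times $O(1)$ words per node.
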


\section{General ANN: Quadtree-EVD} \label{sec:quadtree_evd}

In this section, we integrate the separation-specialized EVD data structure from Section~\ref{sec:evd} into a well-known AVD data structure to obtain a new data structure for general $\eps$-ANN queries. Our data structure will match state-of-the-art bounds on its query time and storage. The resulting data structure, which we call the \emph{Quadtree-EVD}, can be applied in a context where continuity is not needed. In Section~\ref{sec:blending} we will show how to incorporate blending into this structure in order to yield the desired continuity guarantees on the encoded distance field.

Before presenting our data structure, we begin by recalling the AVD data structure presented in \cite{AMM09a}. It employs a height-balanced variant of a quadtree, called a \emph{balanced box decomposition tree}, or BBD tree~\cite{ArM00}. A BBD tree is a hierarchical structure for storing a point set $P$. Each node of the tree is associated with a geometric region, called its \emph{cell}. Each cell is either a quadtree box or the set-theoretic difference of two quadtree cells, an \emph{outer box} and an optional \emph{inner box}. 

The construction subdivides space so that the leaf cells satisfy basic separation properties with respect to the point set $P$. For each cell $w$, let $b_w$ denote a tight enclosing ball of radius $\diam(w)/2$ whose center coincides with the center of $w$'s outer box. For any leaf cell $w$ of the AVD, the sites can be partitioned into three subsets, any of which may be empty. First, a single site may lie within $w$. Second, a subset of sites, called the \emph{outer cluster}, is well-separated from the cell. Finally, there may be a dense cluster of points, called the \emph{inner cluster}, that lie within a ball $b'_w$ that is well-separated from $w$. The next lemma encapsulates the essential AVD separation properties. It follows from the construction given in \cite[Lemma 6.1]{AMM09a}. For completeness, a proof is presented in Section~\ref{sec:avd-blending}.

\begin{figure}[htbp]
  \centerline{\includegraphics[scale=0.40]{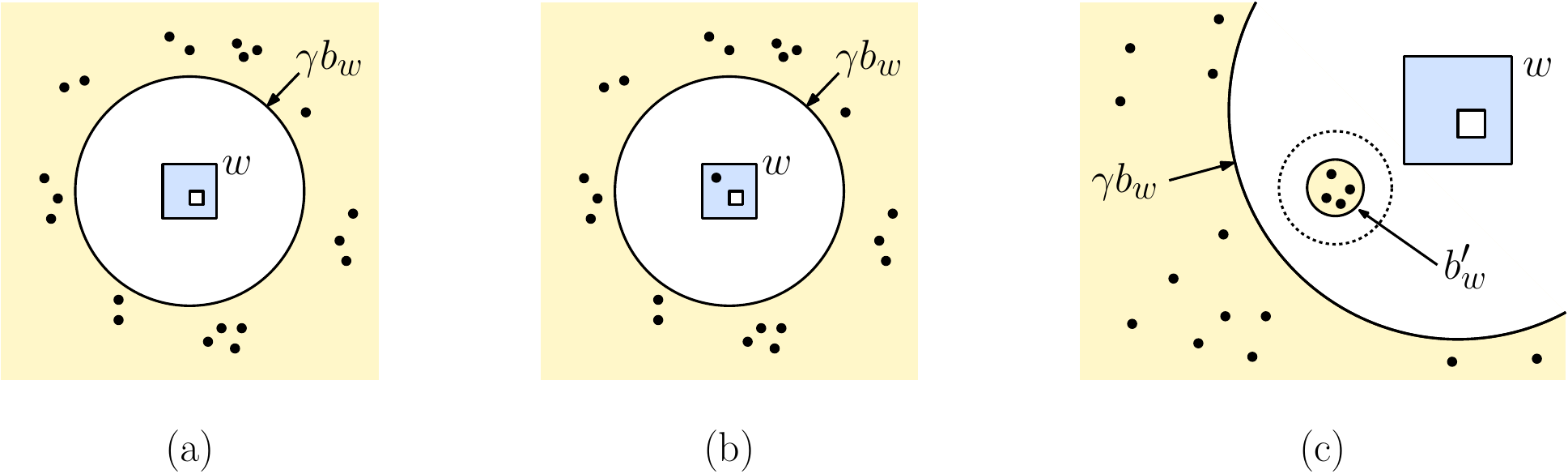}}
  \caption{AVD Separation properties.}
  \label{separation.fig}
\end{figure}

\begin{restatable}{lemma}{separationProps}\emph{[AVD separation properties]}\label{sep-props.lem}
Given a set $P$ of $n$ points in $\RE^d$ and a constant $\gamma \geq 5$, there exists a BBD tree with $O(n)$ nodes storing the points of $P$, where each leaf cell $w$ satisfies at least one of the following separation properties:
\begin{enumerate}
\item[$(a)$] $P \cap \gamma b_w = \emptyset$, implying that $w$ is concentrically $\gamma$-separated from $P$ (see Figure~\ref{separation.fig}(a)).

\item[$(b)$] $\gamma b_w$ contains exactly one point of $P$, which lies within $w$. This point is the nearest neighbor of any query point in $w$ (see Figure~\ref{separation.fig}(b)).

\item[$(c)$] There exists a ball $b'_w$ contained within $\gamma b_w$ that is concentrically $2$-separated from $w$, and the nearest neighbor of any point within $w$ lies within $P \cap b'_w$ (see Figure~\ref{separation.fig}(c)).
\end{enumerate}
Furthermore, there exists a constant $c$ such that for any $x \in w$, $\radius(b_w) \geq c \cdot d_P(x)$. The tree can be constructed in $O(n \log n)$ time, and the cell containing a given query point can be found in $O(\log n)$ time.
\end{restatable}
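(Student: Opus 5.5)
We plan to obtain this lemma by re-deriving the leaf classification of the AVD construction of~\cite[Lemma 6.1]{AMM09a}. That construction is built from a BBD tree for $P$, subdivided further so that every leaf falls into one of the three separation cases. Specifically, we start from a BBD tree for $P$ with $O(n)$ nodes, $O(\log n)$ height and $O(n\log n)$ construction time~\cite{ArM00}. In that tree each leaf cell contains at most one point of $P$, and each cell is a quadtree box or the difference of an outer and an inner box.

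Next, we refine each leaf $w$ by a constant amount. We look at the points of $P$ inside $\gamma b_w$. If there are none, $w$ satisfies (a). If there is exactly one and it lies in $w$, then every $x \in w$ satisfies
\[
\|x-p\| \leq \diam(w) \leq \|x-p'\| \quad \text{for every other } p' \in P,
\]
because $p'$ lies outside $\gamma b_w$ with $\gamma \geq 5$. Hence $p$ is the nearest neighbor and (b) holds. In the remaining case, $\gamma b_w$ meets points outside $w$.

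For this remaining case we use a structural fact about BBD trees: nonempty regions near a leaf are either contained in a small inner box or lie far away. If the leaf is the difference of an outer box and an inner box $B$, the points near $w$ sit inside $B$. When $B$ is small relative to $w$, we take $b'_w$ to be the ball around $B$. Then $w$ lies outside $2b'_w$, $b'_w \subset \gamma b_w$, and the nearest neighbor of any $x \in w$ lies in $B$. The last claim holds because points outside $\gamma b_w$ are farther than points in $B$, which are within $O(\diam w)$ of $x$. This gives (c).

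Otherwise, the neighboring points are at distance comparable to $\diam(w)$. Here we subdivide $w$ by a uniform grid of side $\diam(w)/c'$ for a constant $c'$ depending on $\gamma$ and $d$. This adds only $O(1)$ leaves per original leaf. Each subcell either becomes empty-separated at its own scale, giving (a), or isolates a single point, giving (b). Summed over leaves, the node count stays $O(n)$, and point location remains $O(\log n)$ because each leaf adds only a constant-depth refinement.

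The final property, $\radius(b_w) \geq c\cdot \DF(x)$, follows case by case:
\begin{itemize}
\item In (b) and (c), a nearest site lies within $\gamma b_w$, so $\DF(x) \leq (\gamma+1)\radius(b_w)$.
\item In (a), the same holds as long as the leaf's parent cell, which has comparable size, contains a point of $P$. This is a property of BBD leaves produced by splitting nonempty cells.
\end{itemize}

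We expect the main obstacle to be the case analysis for (c). The constants must be tracked so that the ball around the inner cluster is simultaneously $2$-separated from $w$ and contained in $\gamma b_w$. Shrinking and fattening steps in BBD trees make these size relations only hold up to constants, so $c'$ must be chosen large enough to absorb them.
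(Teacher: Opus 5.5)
Your handling of cases (a) and (b) is fine. The gap is the step where you refine each leaf of an ordinary BBD tree by a constant amount. The structural fact you invoke is not a property of BBD trees. You assume that points near a leaf either sit in a small inner box of that leaf or lie at distance comparable to $\diam(w)$, but a split can separate points that are arbitrarily close.

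Take $p_1=(1/2-\eta,1/4)$ and $p_2=(1/2+\eta,1/4)$. The split at $x=1/2$ puts them in adjacent leaves of diameter $\Theta(1)$, one point each. After your grid refinement of side $\diam(w)/c'$, the subcell $s\ni p_1$ has $p_2\in\gamma b_s$ once $\eta\ll\diam(w)/c'$, so (a) and (b) fail. Case (c) fails too: the query point $p_1\in s$ is its own nearest neighbor, so any admissible $b'_s$ meets $s$, and $s$ cannot lie outside $2b'_s$.

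Your shrink case $w=O\setminus B$ breaks for the same reason. The leaf $w$ is bounded by $\partial B$, and the points of $B$ can lie arbitrarily close to $\partial B$, so $w$ is not outside $2b'_w$ for any ball containing them. Also, points of $\gamma b_w$ in neighboring cells need not lie in $B$ at all.

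The refinement actually needed reaches down to the scale of the nearby point configuration, which is unrelated to $\diam(w)$. A single leaf can even need an unbounded number of new cells: if many points sit just across one of its facets, a Whitney-type refinement toward that facet is forced. So neither the separation properties nor the $O(n)$ size bound can come from a constant per-leaf refinement.

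What is missing is a multiscale refinement driven by the point set and paid for globally. The paper builds it from a $\sigma$-WSPD with $\sigma>4$. For each of its $O(n)$ pairs, with length $\ell$ and center $z$, it generates the $O(1)$ quadtree boxes of diameter $\ell/c_1$ that meet the ball of radius $c_2\ell$ about $z$. It then builds, as in~\cite{AMM09a}, a BBD tree whose leaves refine these $O(n)$ boxes, which gives the size, construction-time, and point-location bounds.

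Separation then follows from a short case analysis on the nearest neighbor $x$ of $w$'s center. In the nontrivial case:
\begin{itemize}
\item Let $y$ be the point of $P\cap(\alpha+2)b_w$ farthest from $x$, and take the WSPD pair separating $x$ and $y$.
\item Its length satisfies $\ell<2\|y-x\|\le 4(\alpha+2)r_w$, where $r_w=\radius(b_w)$, so for $c_1>2(\alpha+2)$ its generated boxes are smaller than $w$.
\item Since $w$ is a leaf, the ball of radius $c_2\ell$ about $z$ cannot meet $w$. Hence $\dist(x,w)>(c_2-1)\ell>\frac{2(c_2-1)}{3}\|y-x\|$.
\item For $c_2\ge\frac{3}{2}\beta+1$, the ball of radius $\|y-x\|$ about $x$ is therefore $\beta$-separated from $w$.
\end{itemize}
The resulting guarantee is that any two points near a leaf are close to each other relative to their distance from it. That is exactly what your construction lacks.
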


The top level of our Quadtree-EVD is the AVD structure described in this lemma for the point set $P$ and $\gamma = 2$. Each leaf cell $w$ of this structure is associated with an additional data structure that answers $\eps$-ANN queries for any point within the cell. Recall that we assume $0 < \eps \leq 1/2$. If property (b) holds, all the points of $w$ share the same Euclidean nearest neighbor, and the associated data structure is trivial. In case (a), the cell $w$ is concentrically $2$-separated from its Euclidean possible nearest neighbors in $P$, and hence we attach an $\eps$-EVD structure for $w$ and $P$. Finally, in case (c) the possible Euclidean nearest neighbors of $w$ lie within the ball $b'_w$ that is concentrically $2$-separated from $w$. Again, we attach an $\eps$-EVD structure for $w$ and $P \cap b'_w$. 

Combining the $O(n)$ space of the AVD with the $O(1/\eps^{d/2})$ space of each EVD, we have a total space bound of $O(n/\eps^{d/2})$. Also, summing the $O(\log n)$ query time for the AVD with the $O(\log 1/\eps)$ query time for the EVD, we obtain an overall query time of $O(\log (n/\eps))$. In summary, we have:

\begin{theorem} \label{thm:evd}
Given a set $P$ of $n$ points in $\RE^d$, an approximation parameter $0 < \eps \leq 1/2$, the Quadtree-EVD data structure can answer $\eps$-approximate nearest neighbor queries with
\[
    \hbox{Query time:~} O\kern-2pt \left(\log n/\eps\right)
	\quad\hbox{and}\quad
    \textrm{Space:~} O\kern-2pt \left(n / \eps^{d/2} \right).
\]
\end{theorem}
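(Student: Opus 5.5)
The proof of Theorem~\ref{thm:evd} is a composition of Lemma~\ref{sep-props.lem} with Theorem~\ref{thm:evd_ws}, so the plan is to verify three things: correctness, space, and query time.

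\textbf{Construction.} First build the BBD tree of Lemma~\ref{sep-props.lem}. It has $O(n)$ nodes and $O(\log n)$ point location. We take $\gamma$ to be a suitable constant; the lemma requires $\gamma \geq 5$, so we use $\gamma=5$ rather than $2$, and concentric $5$-separation implies concentric $2$-separation. For each leaf $w$ we attach an auxiliary structure according to which separation property holds.
\begin{itemize}
\item In case (b), store the unique site in $w$. By the lemma it is the exact nearest neighbor of every query in $w$.
\item In case (a), $w$ is the inner set enclosed by $b_w$, and $P$ lies outside $\gamma b_w \supseteq 2b_w$. We build the $\eps$-EVD of Section~\ref{sec:evd} for the pair $(w,P)$.
\item In case (c), $P\cap b'_w$ is the inner set, and it is concentrically $2$-separated from $w$. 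We build the $\eps$-EVD for $(w, P\cap b'_w)$.
\end{itemize}

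\textbf{Correctness.} A query $q$ is located in its leaf $w$, and we then query the attached structure. Case (b) is exact. In case (a), Theorem~\ref{thm:evd_ws} (through Lemma~\ref{ann-correctness.lem}) returns an $\eps$-ANN of $q$ in $P$. In case (c), it returns an $\eps$-ANN of $q$ within $P\cap b'_w$. Since the true nearest neighbor of $q$ lies in $P\cap b'_w$, we have $d_{P\cap b'_w}(q)=\DF(q)$, so this answer is also an $\eps$-ANN with respect to all of $P$.

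\textbf{Complexity.} There are $O(n)$ leaves, and each EVD uses $O(1/\eps^{d/2})$ space independent of how many sites it concerns. The total space is therefore $O(n) + O(n)\cdot O(1/\eps^{d/2}) = O(n/\eps^{d/2})$. The query time is $O(\log n)+O(\log 1/\eps)=O(\log(n/\eps))$.

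\textbf{Main obstacle.} The one point to check is that the EVD bounds really do not depend on the number of sites, so that attaching one to each leaf is affordable. Lemma~\ref{vor-ecc.lem} gives packing sizes depending only on $\eps$ and $\gamma$, and Lemmas~\ref{root-level.lem}--\ref{ann-space-bound.lem} are likewise site-independent, so this holds. One could worry about preprocessing cost in case (a), since it involves $\nn$ computations over all of $P$. The theorem asserts only existence, space, and query time, however, so this cost does not need to be addressed.
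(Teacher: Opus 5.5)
Your proposal is correct and follows the paper's argument exactly. You build the BBD-tree AVD of Lemma~\ref{sep-props.lem}, attach a trivial structure in case (b) and an $\eps$-EVD for $(w,P)$ or $(w,P\cap b'_w)$ in cases (a) and (c), and add the AVD's $O(n)$ space and $O(\log n)$ query time to the $O(1/\eps^{d/2})$ space and $O(\log 1/\eps)$ query time per EVD from Theorem~\ref{thm:evd_ws}. Two of your details are harmless refinements that the paper leaves implicit: your choice $\gamma=5$ in place of the paper's stated $\gamma=2$, which the lemma's hypothesis $\gamma\ge 5$ does not literally permit, and your explicit check that an $\eps$-ANN within $P\cap b'_w$ is also an $\eps$-ANN within $P$.
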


It is noteworthy that this matches the space and query times given by Arya {\etal}~\cite{AFM17a} for $\eps$-ANN queries, but is simpler since it involves a simple descent through a hierarchical structure and avoids the dependence on a separate ray-shooting data structure.

\section{The Smoothed Quadtree-EVD} \label{sec:blending}

The plain Quadtree-EVD presented in Section~\ref{sec:quadtree_evd} is a valid $\eps$-ANN data structure assuming a witness-based approach. However, as mentioned in the introduction, this cannot be used as the basis of a smooth distance approximation. In this section, we show how to enhance the plain Quadtree-EVD data structure to enable blending as described in Section~\ref{sec:pou}, where we seek to eliminate all discontinuities while only incorporating valid distance approximations no greater than a $(1 + \eps)$ factor of the true distance function.

In order to achieve this, each leaf-level node stores a collection of overlapping ellipsoids from the $\eps$-EVD data structures attached to the neighboring quadtree cells. Lemma~\ref{avd-lookup-bound.lem} below shows that this replication increases the space bounds by only a constant factor. The resulting collection of cells and ellipsoids for a given query point provides the patches over which we perform the blending computations in Eq.~\eqref{eq:pu} (recall Figure~\ref{AVD_blending.fig}.)

\subsection{Smoothing the EVD for Blending.} \label{sec:evd_blending}

As described in Section~\ref{sec:evd}, each Macbeath ellipsoid $E'_i(x)$, with the center point $x \in X_i$, is defined within some level $i$ with respect to some leaf quadtree cell $w$. In addition, when $w$ is an outer cell, the ellipsoid $E'_i(x)$ may only be at the leaf level for point $q \in 2^{i+1} b \setminus 2^{i}b$, where $b$ is the Euclidean ball containing the inner set with $w$ lying outside $2b$.

For an outer quadtree cell $w$ with an inner set contained in the ball $b$, and an EVD level $i \geq 1$, we avoid discontinuities at the boundary of $2^i b$, by restricting ellipsoids $E'_i(x)$ within the ring $R_i = (5\cdot2^{i-1})b\setminus(3\cdot2^{i-2})b$. Because $3\cdot2^{i-2} < 2^i < 2^{i+1} < 5\cdot2^{i-1}$, it follows that $R_i$ contains $2^{i+1} b \setminus 2^{i}b$. This allows the blending to extend well into the adjacent rings, but not beyond. To further improve the bound on $c_\Psi$ in Lemma~\ref{positive_Psi.lem} below, we use $\lambda_1 = 1/(32\sqrt{d}+1)$, which is smaller than the value used in Section~\ref{sec:evd}.

We employ the ellipsoid and ring shape functions $\shape^{[e]}$ and $\shape^{[r]}$ from Eqs.~\eqref{eq:shape-e} and~\eqref{eq:shape-r}. For each Macbeath ellipsoid, the associated shape function $\shape^{[e]}$ is defined using the matrix $M$ defining the ellipsoid. For a ring $R_i$, defined with respect to the ball $b$ containing an inner set with radius $r$, we set $a = 5\cdot 2^{i-1}r$ and $b = 3\cdot 2^{i-2}r$. The following lemmas show that these shape functions are compatible. The proofs appear in Section~\ref{sec:compatibility}.

\begin{restatable}{lemma}{ellipsoidCompatibility}\label{ellipsoid-compatibility.lem}
The shape functions $\sigma^{[e]}$ associated with all Macbeath ellipsoids are compatible.
\end{restatable}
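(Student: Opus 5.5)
Property $(i)$ of Definition~\ref{def:compat-shape} is immediate: the ellipsoid is $\{x : (x-c)^\intercal M (x-c) \leq 1\}$ with $M$ positive definite. So $\shape^{[e]}(x) \in [0,1)$ on the interior and equals $1$ on the boundary.

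The work is in property $(ii)$. Direct differentiation of Eq.~\eqref{eq:shape-e} gives
\[
\Gradient \shape^{[e]}(x) = 2M(x-c), \qquad \Hess \shape^{[e]}(x) = 2M .
\]
Write $M = A^\intercal A$. Then $\|\Gradient\shape^{[e]}\| \leq 2\|A\|\cdot\|A(x-c)\| \leq 2\|A\|$ inside the ellipsoid, since $\|A(x-c)\|\leq 1$ there. Also $\|\Hess\shape^{[e]}\| = 2\|M\| = 2\|A\|^2$. Now $\|A\| = 1/\rho_{\min}$, where $\rho_{\min}$ is the smallest semi-axis of the ellipsoid. So both conditions reduce to a single geometric claim: every Macbeath ellipsoid $E = E^{\lambda}_{\delta_i}(x)$ used in the structure (with $\lambda \in \{1/2,\lambda_1\}$) has smallest semi-axis
\[
\rho_{\min} = \Omega(\eps\, \DF(y)) \quad \text{for all } y \in E .
\]
This gives $c_1, c_2$ depending only on $d$.

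To prove the claim, I would show that $M_{\delta_i}(x)$ contains a Euclidean ball of radius $\Omega(\eps\,\DF(x))$ centred at $x$. The same then holds for $E$, by John's theorem~\eqref{eq:john-thm} and the constant scaling $\lambda$, losing only a $\sqrt d$ factor. Since $M_{\delta}(x) = x + ((V-x)\cap(x-V))$ with $V = V_{\delta_i}(p)$ and $p=\nn(x)$, it suffices to bound below the distance from $x$ to every bounding hyperplane of $V$. The $(p,p')$ constraint is $2\langle x, p'-p\rangle \leq \|p'\|^2-\|p\|^2+\delta^2$. Its slack at $x$ equals $\|x-p'\|^2-\|x-p\|^2+\delta^2 \geq \delta^2$, because $x$ lies in the true Voronoi cell of $p$. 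Dividing by $2\|p'-p\|$ converts slack to Euclidean distance, so the distance from $x$ to the hyperplane is at least $\delta^2/(2\|p'-p\|)$.

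Two cases bound this quantity.
\begin{itemize}
\item \emph{Far sites.} If $\|p'-p\|$ is large compared with $\DF(x)$, the slack itself is at least $\|x-p'\|^2-\|x-p\|^2 = \Omega(\|p'-p\|^2)$ once $\|p'-p\| \geq 4\DF(x)$. The distance is then $\Omega(\|p'-p\|) = \Omega(\DF(x))$.
\item \emph{Near sites.} If $\|p'-p\| = O(\DF(x))$, the distance is at least $\delta_i^2/O(\DF(x))$. Here $\delta_i = \gamma_i\sqrt{\eps}\,r$, where $r=\radius(b)$.
\end{itemize}

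The main obstacle is relating $\delta_i^2$ to $\eps\,\DF(x)^2$ at the right level. For a point $x \in X_i \subset w_i = w\cap b_{i+1}$ I would use concentric 2-separation. In the outer case, $\DF(x) \leq \|x-\text{center}\| + r \leq (\gamma_{i+1}+1)r = O(\gamma_i r)$. In the inner case, $\DF(x) = O(\gamma_i r)$ is not automatic for small $i$. There I would use that $w \subseteq b$, that the sites lie outside $2b$, and that levels are capped at $O(\log 1/\eps)$ by Lemma~\ref{root-level.lem}, with the root covering the relevant part of $w \cap (3/\eps)b$. This needs care, and restricting to the levels where the relevant points live is what I would try first.

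Given $\DF(x) = O(\gamma_i r)$, we get $\delta_i^2/\DF(x) = \Omega(\eps\gamma_i^2 r^2/(\gamma_i r)) = \Omega(\eps\,\DF(x))$, which proves the claim at the centre $x$. To transfer it from $x$ to arbitrary $y\in E$, I would show $\DF(y) = \Theta(\DF(x))$. This holds because $E \subseteq V_{\delta_i}(p)$ and its diameter is $O(\delta_i/\sqrt{\ldots})$, which is small relative to $\gamma_i r$. Equivalently, Lemma~\ref{exp-vor.lem} gives $\|y-p\| \approx \DF(y)$, with $\|y-p\|$ comparable to $\gamma_i r$ by the separation bound used in Lemma~\ref{ann-correctness.lem}. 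Combining these yields $\|\Gradient\shape^{[e]}(y)\| = O(1/(\eps\DF(y)))$ and $\|\Hess\shape^{[e]}(y)\| = O(1/(\eps^2\DF^2(y)))$.
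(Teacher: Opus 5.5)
Your reduction to a lower bound on the smallest semi-axis is correct, and in one respect it is more careful than the paper's. Writing $M=A^\intercal A$ gives $\|\Gradient\shape^{[e]}\|\le 2/\rho_{\min}$ and $\|\Hess\shape^{[e]}\|=2/\rho_{\min}^2$. Your near/far split then shows that $M_{\delta_i}(x)$ contains the ball of radius $\min\{\DF(x),\,\delta_i^2/(8\DF(x))\}$ about $x$ for \emph{all} bisectors, whereas the paper only examines sites $p'$ with $x\in V_{\delta_\ell}(p')$.

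The gap is in converting this into $\rho_{\min}=\Omega(\eps\,\DF(y))$. What you need is an upper bound $\DF(y)=O(\gamma_i r)$ at the evaluation point $y$, and neither of your suggestions supplies it. In the inner case, none of $w\subseteq b$, $2$-separation, the level cap or the root cover bounds $\DF$ from above. Restricting levels does not help either, since $w_i=w$ at every level and every query descends to the thinnest level-$1$ ellipsoids. In fact, for a stand-alone EVD the statement is false. Suppose $P$ lies at distance $D\gg r$ and two sites $\Theta(D)$ apart have their bisector crossing $w$. Then any $E'_1(x)$ covering a point of that bisector lies in a slab of width at most twice the shift $\delta_1^2/(4D)$. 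Its smallest semi-axis is therefore $O(\eps r^2/D)$, far below $\eps\DF=\Theta(\eps D)$. The missing ingredient lies outside the EVD: the last clause of Lemma~\ref{sep-props.lem}, $\radius(b_w)\ge c\cdot\DF(x)$. It gives $\DF=O(r)$ on $2b_w$, so $\rho_{\min}=\Omega(\eps r)$ suffices there. The paper's proof of Lemma~\ref{alpha_min_bound.lem} uses this implicitly when it asserts $\DF(x)\le 2^\ell r$ ``by construction''.

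Second, your transfer via $\DF(y)=\Theta(\DF(x))$ fails at the top levels of the outer case. A centre $x\in w_i$ may lie just outside $2b$, so $\DF(x)=\Theta(r)$. Yet once $2^i=\Theta(1/\eps)$, every $V_{\delta_i}(p)$ contains a ball of radius $\Theta(r/\eps)$ about the centre of $b$, so $E'_i(x)$ reaches points with $\DF(y)=\Theta(\gamma_i r)$.

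The fix is to keep the stronger centre bound $\rho_{\min}=\Omega(\eps\gamma_i r)$ instead of weakening it to $\Omega(\eps\DF(x))$. The far-site term is at least $\DF(x)\ge r$, which is $\Omega(\eps\gamma_i r)$ only because $\gamma_i\le\gamma_\ell=O(1/\eps)$. This, not the inner case, is where the level cap enters. Then obtain $\DF(y)=O(\gamma_i r)$ from the restriction of level-$i$ patches to the ring $R_i\subseteq(5\cdot 2^{i-1})b$ of Section~\ref{sec:evd_blending}. Compatibility is then verified on the patch (ellipsoid $\cap$ ball $\cap$ ring) rather than on all of $E$, which is all the blending lemmas use.

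The paper handles this step differently. It runs the slack estimate at the evaluation point itself and pushes the resulting thickness to the centre's ellipsoid by expansion-containment (Lemma~\ref{vor-exp-con.lem}) and John's theorem. That avoids comparing $\DF(y)$ with $\DF$ at the centre, but it still needs the same upper bound $\DF(y)=O(\gamma_\ell r)$.
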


\begin{restatable}{lemma}{ringCompatibility}\label{ring-compatibility.lem}
The shape functions $\sigma^{[r]}$ associated with the rings of the EVD are compatible.
\end{restatable}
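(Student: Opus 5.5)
We must show that the ring shape function $\shape^{[r]}$ of Eq.~\eqref{eq:shape-r}, with inner radius $3\cdot 2^{i-2}r$ and outer radius $5\cdot 2^{i-1}r$ (where $r=\radius(b)$ and $b$ is the ball containing the inner set $P$), satisfies both conditions of Definition~\ref{def:compat-shape} on the ring $R_i$. Write $\rho_1 = 3\cdot 2^{i-2} r$ and $\rho_2 = 5\cdot 2^{i-1} r$, so $\rho_1 < \rho_2$; the roles of $a,b$ in Eq.~\eqref{eq:shape-r} just need to be assigned so that the denominator is $(\rho_2^2-\rho_1^2)/2>0$, and swapping them only flips the sign of $\shape^{[r]}$, which affects neither condition. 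Condition $(i)$ is immediate. $\shape^{[r]}$ is an increasing affine function of $\|x-c^{[r]}\|^2$. It equals $-1$ when $\|x-c^{[r]}\|=\rho_1$ and $+1$ when $\|x-c^{[r]}\|=\rho_2$. So it maps the open ring to $(-1,1)$ and the two boundary spheres to $\{-1,+1\}$.

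For condition $(ii)$, I would compute directly:
\[
\Gradient \shape^{[r]}(x) ~=~ \frac{4\,(x-c^{[r]})}{\rho_2^2-\rho_1^2},
\qquad
\Hess \shape^{[r]}(x) ~=~ \frac{4}{\rho_2^2-\rho_1^2}\, I .
\]
Since $\rho_2^2-\rho_1^2 = (25/4 - 9/16)\,4^{i} r^2 = \Theta(4^i r^2)$, and $\|x-c^{[r]}\|\le \rho_2 = O(2^i r)$ on $R_i$, we get
\[
\|\Gradient \shape^{[r]}\| = O\!\left(\frac{1}{2^i r}\right)
\quad\text{and}\quad
\|\Hess\shape^{[r]}\| = O\!\left(\frac{1}{4^i r^2}\right).
\]

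The remaining step, which is the only real content, is to relate $2^i r$ to $\DF(x)$ for $x\in R_i$. For this I need an upper bound $\DF(x) = O(2^i r)$. The set $P$ (or the relevant inner cluster, which contains the true nearest neighbor of points of $w$ by Lemma~\ref{sep-props.lem}(c)) lies inside $b$. Hence for $x\in R_i$, the triangle inequality gives
\[
\DF(x) \le \|x-c^{[r]}\| + r \le \rho_2 + r \le 6\cdot 2^{i} r .
\]
Here I am using that the ring is concentric with $b$, i.e.\ $c^{[r]}$ is the center of $b$.

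It follows that $1/(2^i r) \le 6/\DF(x) \le 6/(\eps\,\DF(x))$, because $\eps\le 1/2<1$. Similarly, $1/(4^i r^2) \le 36/\DF^2(x) \le 36/(\eps^2\DF^2(x))$. This yields constants $c_1,c_2$ depending only on the fixed numerical ratios, and establishes compatibility. Note that the bound is actually stronger than required by a factor of $\eps$; the $\eps$ in Definition~\ref{def:compat-shape} exists to accommodate the thin Macbeath ellipsoids of Lemma~\ref{ellipsoid-compatibility.lem}.

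The only point needing care is the distance comparison when the blending uses distances to all of $P$ rather than to the inner cluster. I would handle this by noting that $\DF(x)$ is at most the distance to any point of the cluster inside $b$, so the upper bound above holds regardless. Only an upper bound on $\DF$ is needed, so no separation argument is required.
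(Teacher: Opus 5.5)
Your proof is correct and has the same skeleton as the paper's: condition $(i)$ because $\shape^{[r]}$ is affine in $\|x-c^{[r]}\|^2$, explicit differentiation, and then a comparison of the ring radii with $\DF(x)$. The one real difference is how you justify that last comparison.

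The paper cites Lemma~\ref{sep-props.lem}, namely $\radius(b_w) \ge c\cdot\DF(x)$, and concludes from it that the ring radii are $\Omega(\DF(x))$. That inequality, however, bounds $\DF$ by the radius of the quadtree cell's ball $b_w$, not by the ring radii. The rings are multiples of the radius of the inner-set ball and need not be comparable to $\radius(b_w)$, so the paper leaves the link to the rings implicit.

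Your bound supplies that link directly on each ring: $\DF(x) \le \|x-c^{[r]}\| + r \le \rho_2 + r = O(2^i r)$. It uses only that the ring is concentric with a ball containing a nonempty subset of $P$. It gives explicit constants and needs none of the AVD separation properties, since only an upper bound on $\DF$ is required.

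Your remark about the paper's assignment $a = 5\cdot 2^{i-1}r$, $b = 3\cdot 2^{i-2}r$ (which reverses the stated inner/outer roles) is also right. It only flips the sign of $\shape^{[r]}$, which affects neither compatibility condition nor $\mu(\shape^{[r]})$, since $\mu$ is even.
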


\subsection{Smoothing the BBD for Blending.} \label{sec:quadtree_blending}

For a given query point $q$, let $W(q)$ denote the set of leaf quadtree cells used for blending, that is,
\[
    W(q) 
        ~ = ~ \{w \ST q \in 2\cdot b_w\}.
\]

\begin{restatable}{lemma}{avdLookup}\label{avd-lookup-bound.lem}
Given any query point $q \in \RE^d$, $|W(q)| = O(1)$.
\end{restatable}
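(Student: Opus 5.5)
The plan is a packing argument: every cell in $W(q)$ has size comparable to $\DF(q)$ (up to constants), and the leaf cells of the BBD tree have pairwise disjoint interiors. We do not know the relation between cell size and distance in general, so we extract it from Lemma~\ref{sep-props.lem}.

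The first step is an upper bound on cell size. For $w\in W(q)$, let $r_w=\radius(b_w)$. Since $q\in 2b_w$, the center $c_w$ satisfies $\|q-c_w\|\le 2r_w$. Every point of $w$ lies within distance $3r_w$ of $q$. By the last clause of Lemma~\ref{sep-props.lem}, $r_w\ge c\,\DF(x)$ for every $x\in w$. Because $\DF$ is $1$-Lipschitz, we get $\DF(q)\le \DF(x)+3r_w\le (1/c+3)r_w$. Hence $r_w=\Omega(\DF(q))$.

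The second step is the matching upper bound $r_w=O(\DF(q))$, which uses the separation cases. Let $p=\nn(q)$, so $\|q-p\|=\DF(q)$.
\begin{itemize}
\item In case (a), $P\cap\gamma b_w=\emptyset$, so $\|p-c_w\|\ge\gamma r_w$. Combined with $\|q-c_w\|\le 2r_w$, this gives $\DF(q)\ge(\gamma-2)r_w$. With $\gamma\ge 5$ this yields $r_w\le\DF(q)/3$. (The paper instantiates $\gamma=2$ at the top level; I will use the lemma's $\gamma\ge5$ version, or rescale the blending factor $2$ accordingly.)
\item In case (b), $w$ contains a single site $s$ and $\gamma b_w$ contains no other site. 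If $p\ne s$, the previous argument applies.
\item In case (c), all sites of $P\cap\gamma b_w$ lie in the ball $b'_w$.
\end{itemize}

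Cases (b) and (c) are the main obstacle: a nearby site can make $\DF(q)$ tiny while the cell is large, so no upper bound $r_w=O(\DF(q))$ is available. I therefore expect to abandon the two-sided comparison and instead use the structure of the BBD tree. Leaf cells are quadtree boxes minus at most one inner box, and the BBD tree is balanced in the sense that neighboring leaves differ in size by a bounded factor \cite{ArM00}. I would argue directly that any leaf $w$ with $q\in 2b_w$ has outer box of size within a constant factor of the size $s_0$ of the leaf containing $q$. The lower bound on $r_w$ in terms of $s_0$ comes from the first step, together with $\DF(q)=\Omega(s_0)$ via the last clause of Lemma~\ref{sep-props.lem} applied to $q$'s own leaf. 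For the upper bound, a leaf much larger than $s_0$ whose doubled ball reaches $q$ would be adjacent at a constant-factor range to a much smaller leaf, contradicting the standard BBD/AVD construction in which cells are refined until they satisfy the separation properties. In that construction, a cell is split whenever its $\gamma$-expansion contains points at inappropriate scale.

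Once all $w\in W(q)$ have outer-box diameter in $[c_1 s_0, c_2 s_0]$ and lie within distance $O(s_0)$ of $q$, the argument finishes by packing. Each leaf cell contains a region of volume $\Omega(s_0^d)$: for the set difference, the BBD stickiness property guarantees that the inner box occupies at most a constant fraction of the outer box. Leaf cells are disjoint and all lie in a ball of radius $O(s_0)$ around $q$. Hence $|W(q)|=O(1)$, with the constant depending only on $d$ and $\gamma$.
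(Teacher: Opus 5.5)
\textbf{There is a genuine gap.} Your Lipschitz lower bound $r_w=\Omega(\DF(q))$ and your case-(a) upper bound are both fine. You also correctly locate the obstacle in cases (b) and (c). The detour you take around that obstacle, however, rests on two false claims.

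First, BBD trees are balanced only in the sense of having $O(\log n)$ depth; nothing makes neighboring leaves comparable in size. A shrink produces a doughnut leaf $w$ (outer box minus inner box) that borders arbitrarily small leaves inside its inner box. Since $b_w$ is centered at the outer box with radius $\diam(w)/2$, it contains that entire inner box. So a large case-(c) leaf whose doubled ball reaches a tiny inner cluster, together with the many tiny leaves around it, is exactly what the construction produces. There is no contradiction to derive.

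Second, the last clause of Lemma~\ref{sep-props.lem} says $\radius(b_w)\ge c\,\DF(x)$. That gives $\DF(q)=O(s_0)$, not $\Omega(s_0)$. The latter is false: take $q$ near the unique site of a case-(b) leaf. So your lower bound $r_w=\Omega(s_0)$ is unsupported as well.

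In fact, with $W(q)=\{w \ST q\in 2b_w\}$ read literally, no size-comparability argument can close the proof. Put one site at distance $R^{-k}$ from a common center for $k=1,\dots,K$, with $R\gg c_1c_2$. This produces nested doughnut leaves whose outer boxes all contain the innermost cluster, so a query there lies in $2b_w$ for unboundedly many $w$.

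The missing idea is the paper's change to the patches of inner-cluster cells. It adds a mollifier factor that nullifies the contribution of $w$ inside an expansion $\varrho b'_w$ of its inner-cluster ball, placed halfway between $\beta b'_w$ and $w$. With this in place, the paper proceeds in two parts:
\begin{itemize}
\item It obtains two-sided comparability $1/\Delta\le\diam(w)/\diam(w')\le\Delta$ for the remaining overlapping pairs directly from the WSPD box generation. A pair with a site outside $\alpha b_w$ whose boxes meet $\gamma b_w$ has length $\ell\ge(\alpha-\gamma)r_w/(1/c_1+2c_2)$, so its boxes cannot be much smaller than $w$.
\item It handles the inner-cluster interactions with two separate packing arguments. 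Let $r=\radius(b'_w)$. Small cells whose blending balls leave $\varrho b'_w$ come from pairs with $r/5\le\ell\le 2r$. The larger cells sharing that cluster lie within distance $7r$ of its center and have diameter at least $2r/5$.
\end{itemize}
Your proposal contains neither the patch modification nor the WSPD-based comparability, and without them your final packing step has nothing to stand on.
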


For any quadtree cell $w \in W$, we employ the ball shape function $\shape^{[b]}$ from Eq.~\eqref{eq:shape-b}, with the center point $c^{[b]}$ set to the center of $w$ and the radius $r$ being twice the radius of $b_w$. Regardless of how the query for $q$ is answered by $w$, we sum the result into blending multiplied by the factor $\mu(\shape^{[b]}(q))$ before normalizing (see Eqs.~\eqref{sDF.eq} and~\eqref{eq:psi_i}).

\begin{restatable}{lemma}{quadtreeCompatibility}\label{quadtree-compatibility.lem}
The shape functions $\sigma^{[b]}$ associated with leaf-level quadtree cells are compatible.
\end{restatable}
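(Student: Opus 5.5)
We verify the two conditions of Definition~\ref{def:compat-shape} directly for the ball shape function $\shape^{[b]}(x) = \|x - c^{[b]}\|^2/r^2$, where $c^{[b]}$ is the center of the leaf cell $w$ and $r = 2\,\radius(b_w)$. The component shape is the ball $\mathcal{S} = 2 b_w$. Condition $(i)$ is immediate from the formula. In the interior of $\mathcal{S}$ we have $0 \le \|x-c^{[b]}\|^2 < r^2$, so $\shape^{[b]}(x) \in [0,1) \subset (-1,1)$. On the boundary sphere $\|x - c^{[b]}\| = r$, so $\shape^{[b]}(x) = 1$.

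For condition $(ii)$ we compute the derivatives explicitly:
\[
    \Gradient \shape^{[b]}(x) ~ = ~ \frac{2}{r^2}\,(x - c^{[b]}), \qquad \Hess \shape^{[b]}(x) ~ = ~ \frac{2}{r^2}\, I.
\]
For $x \in \mathcal{S}$ this gives $\|\Gradient \shape^{[b]}(x)\| \le 2/r$ and $\|\Hess \shape^{[b]}(x)\| = 2/r^2$. It therefore suffices to show $r = \Omega(\eps\,\DF(x))$ for every $x$ in the patch. We will in fact show the stronger bound $r = \Omega(\DF(x))$, which gives $\|\Gradient\shape^{[b]}\| = O(1/\DF(x)) \le O(1/(\eps \DF(x)))$ and $\|\Hess \shape^{[b]}\| = O(1/\DF^2(x)) \le O(1/(\eps^2\DF^2(x)))$, using $\eps \le 1/2 < 1$.

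The main step is to get this lower bound on $r$ in terms of $\DF(x)$ for points $x \in 2b_w$, not just for points of $w$. Lemma~\ref{sep-props.lem} provides a constant $c$ with $\radius(b_w) \ge c\,\DF(y)$ for every $y \in w$. Fix some $y \in w$ and take any $x \in 2 b_w$. Both points lie in $2b_w$, so $\|x - y\| \le 4\,\radius(b_w)$. Since $\DF$ is $1$-Lipschitz,
\[
    \DF(x) ~ \le ~ \DF(y) + \|x-y\| ~ \le ~ \Big(\tfrac{1}{c} + 4\Big)\radius(b_w) ~ = ~ \Big(\tfrac{1}{2c} + 2\Big)\, r.
\]
Hence $r \ge c'\,\DF(x)$ for the constant $c' = (1/(2c) + 2)^{-1}$. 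Substituting this into the derivative bounds yields condition $(ii)$ with constants $c_1 = 2/c'$ and $c_2 = 2/c'^2$.

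The only point needing care is this extension of the separation property from the cell $w$ to its enlarged patch $2b_w$, and the Lipschitz argument above handles it. The constant $c$ is uniform over all leaf cells, so the resulting constants $c_1$ and $c_2$ are uniform as well.
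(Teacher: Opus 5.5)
Your proposal is correct and follows essentially the same route as the paper: differentiate $\shape^{[b]}$ explicitly, bound the gradient and Hessian by $O(1/r)$ and $O(1/r^2)$, and invoke the cell-size bound $\radius(b_w) \ge c\,\DF(x)$ from Lemma~\ref{sep-props.lem}. Your version is slightly more careful in two places: you use the $1$-Lipschitz property of $\DF$ to extend that bound from $w$ to the whole patch $2b_w$, which the paper applies without comment. Your normalization $r = 2\,\radius(b_w)$ also matches Section~\ref{sec:quadtree_blending}, whereas the paper's proof takes $r = \radius(b_w)$, under which the boundary of $2b_w$ would be mapped to $4$ rather than $1$.
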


\subsection{Putting It Together.} \label{sec:putting_it_together}
The smoothed query data structures above have the following properties. Combining Lemma~\ref{avd-lookup-bound.lem} with Lemma~\ref{ann-child-bound.lem}, we establish that we need only blend a constant number of local approximations for any given query point.

\begin{lemma} \label{PU_depth.lem}
The smoothed Quadtree-EVD construction satisfies $\depth = O(1)$.
\end{lemma}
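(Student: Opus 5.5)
To prove Lemma~\ref{PU_depth.lem}, we bound the number of patches whose supports contain an arbitrary query point $q$ by a product of two constants. A patch is either a leaf-quadtree ball (case (b) cells, with trivial local structure) or an ellipsoid of an EVD attached to some leaf cell $w$. In the latter case its weight function includes the factor $\mu(\shape^{[b]})$ for the enlarged ball $2b_w$. Hence any patch with nonzero weight at $q$ comes from a cell $w$ with $q \in 2b_w$, that is, $w \in W(q)$. By Lemma~\ref{avd-lookup-bound.lem}, $|W(q)| = O(1)$. It therefore suffices to show that, for each fixed $w \in W(q)$, only $O(1)$ ellipsoid patches of the EVD attached to $w$ contain $q$.

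Fix such a $w$ and its EVD. First we bound the number of levels $i$ whose patches can contain $q$. If $w$ is the inner set, the patches are restricted only by $2b_w$. In this case we simply use the total number of levels, but that is $O(\log 1/\eps)$, not $O(1)$. So instead we argue within a level and across consecutive levels separately.

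If $w$ is the outer set, each level-$i$ patch is confined to the ring $R_i = (5\cdot 2^{i-1})b \setminus (3\cdot 2^{i-2})b$. A point at distance $t\,\radius(b)$ from the center lies in $R_i$ only if $3\cdot 2^{i-2} < t < 5\cdot 2^{i-1}$, which allows at most three values of $i$.

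For the inner-set case, we handle depth within a single level directly. Within level $i$, the ellipsoids $E''_i(x)$, $x \in X_i$, are pairwise disjoint. If $q \in E'_i(x)$ for several $x$, the shrunken Macbeath regions $M^{1/2}_{\delta_i}(x)$ all contain $q$ and so pairwise intersect. By expansion-containment (Lemma~\ref{vor-exp-con.lem}) together with John's theorem \eqref{eq:john-thm}, all these $E''_i(x)$ then lie in a constant expansion of one Macbeath region. A volume-packing argument, the same as in Lemma~\ref{ann-child-bound.lem}, bounds their number by $(c\sqrt d)^d$.

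For the number of levels in the inner-set case, one option is to blend only ellipsoids of the level at which the query terminates and its neighbors. The alternative is to observe that the construction only blends leaf-level ellipsoids, and that in the inner case the relevant leaf level is the one whose $\delta_i$ matches the scale of $w$ relative to $P$, which is fixed by separation. I expect this step to be the main obstacle, and I would resolve it by following the paper's convention that blended patches are the leaf-level ellipsoids of each EVD.

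Multiplying the pieces gives $\depth \le |W(q)| \cdot O(1) \cdot (c\sqrt d)^d = O(1)$, since $d$ is constant.
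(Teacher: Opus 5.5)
Your proposal is correct and follows the paper's argument. The paper proves this lemma in one line, by multiplying $|W(q)| = O(1)$ from Lemma~\ref{avd-lookup-bound.lem} by the packing bound of Lemma~\ref{ann-child-bound.lem} on how many same-level ellipsoids can contain $q$. Your per-level packing plus the ring restriction $R_i$ is that argument with the level count made explicit, which the paper leaves implicit. The inner-set step you flag as the main obstacle is not one: when $w$ is the inner set, $w_i = w$ for every $i$, so every level covers $w$ and every descent ends at level~1. Hence only level-1 ellipsoids are blended there, so the convention you adopt is exactly how the construction works.
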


Second, we ensure a positive lower-bound $c_\Psi$ on $\Psi$, the sum of blending weights used for normalization. As a consequence of (b), any $z \in \RE^d$ is not only guaranteed to be covered by some ellipsoid $E'_{\delta_k}(x) = E^{1/2}_{\delta_k}(x)$, but that $z \in E^{1/4}_{\delta_k}(x)$. It follows that the function $f^{[e]}$ associated with that ellipsoid $E'_{\delta_k}(x)$ satisfies $f^{[e]}(z) \leq 1/2$. Similarly, as $z$ must belong to some quadtree cell $w$, which is fully contained in its enclosing ball $b_w$, the function $f^{[b]}$ associated with the $2b_w$ satisfies $f^{[b]}(z) \leq 1/4$. Finally, when $z$ belongs to a ring $5\cdot 2^{j+1}b'_w \setminus 3\cdot 2^{j-1}b'_w$ for some cell AVD $w$, with the radii specified in (d), we have $|f^{[r]}(x)| \leq 3/4$ for one of the rings containing $z$ (see Section~\ref{sec:compatibility}). Letting $\patch_i$ denote the patch realized by the intersection of the ellipsoid denoted above by $E'_{\delta_k}(x)$ restricted to the ball $b_w$, and possibly the ring $5\cdot 2^{j+1}b'_w \setminus 3\cdot 2^{j-1}b'_w$, we have $\Psi(z) \geq \psi_i(x) = \mu(\frac{1}{2})\cdot\mu(\frac{1}{4})\cdot\mu(\frac{3}{4}) > 0.009$. Hence, we obtain the following result.

\begin{lemma} \label{positive_Psi.lem}
The smoothed Quadtree-EVD construction ensures that $c_\Psi > 0.009$.
\end{lemma}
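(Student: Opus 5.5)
The plan is to show that every point $z \in \RE^d$ lies in at least one patch $\patch_i$ where each component shape function is bounded well inside $(-1,1)$. Then $\Psi(z) \geq \psi_i(z)$, and since every $\psi_j \geq 0$, this single term gives the lower bound. Because $\mu$ is even and decreasing in $|s|$ on $[0,1)$, it suffices to bound $|\shape^{(k)}_i(z)|$ uniformly for each factor. For a patch realized as ellipsoid $\cap$ ball $\cap$ ring, I would aim for the bounds $|\shape^{[e]}| \leq 1/2$, $|\shape^{[b]}| \leq 1/4$ and $|\shape^{[r]}| \leq 3/4$. With these, $\psi_i(z) \geq \mu(1/2)\mu(1/4)\mu(3/4)$, and a direct numerical evaluation gives $e^{-4/3}\cdot e^{-16/15}\cdot e^{-16/7} \approx 0.0093 > 0.009$. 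When there is no ring factor, the bound only improves, since the missing factor is at most $1$.

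The first step is to fix the quadtree cell. I take the leaf cell $w$ containing $z$ (Lemma~\ref{sep-props.lem}). Then $z \in b_w$, and the blending ball has the same center and radius $2\,\radius(b_w)$, so $\shape^{[b]}(z) = \|z-c\|^2/(4\,\radius(b_w)^2) \leq 1/4$. If $w$ is of type (b), the patch consists of this ball alone and we are done.

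Otherwise the second step is to choose a ring and a level. For an outer cell with inner ball $b$, I pick $i$ with $z \in 2^{i+1}b \setminus 2^i b$; this $i$ is the level at which the query descent terminates. I then check that $z$ lies in the middle portion of $R_i = (5\cdot 2^{i-1})b \setminus (3\cdot 2^{i-2})b$. With $\rho = \|z-c\|/r \in [2^i, 2^{i+1}]$, the ring shape function is affine in $\rho^2$. Its extreme values occur at $\rho = 2^i$ and $\rho = 2^{i+1}$, and a short computation, which scales out the factor $4^i$, shows they have absolute value at most $3/4$. In the inner-set case there is no ring, and any level works.

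The third step, which I expect to be the main obstacle, is the ellipsoid factor: I need some $x \in X_i$ with $z \in E^{1/4}_{\delta_i}(x)$, not merely $z \in E^{1/2}_{\delta_i}(x)$. The $1/4$-scaled ellipsoid gives $\shape^{[e]}(z) \leq (1/4)^2 \leq 1/2$, where the square appears because $\shape^{[e]}$ is quadratic. The argument reruns the standard covering argument with the smaller $\lambda_1 = 1/(32\sqrt d+1)$. By maximality of $X_i$, the ellipsoid $E^{\lambda_1}_{\delta_i}(z)$ meets some $E^{\lambda_1}_{\delta_i}(x)$, and by~\eqref{eq:john-thm} the corresponding scaled Macbeath regions at scale $\lambda_1\sqrt d$ also meet. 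Lemma~\ref{vor-exp-con.lem}, applied with $\alpha = 0$, gives $z \in M^{\beta\lambda_1\sqrt d}_{\delta_i}(x)$ with $\beta = 4/(1-\lambda_1\sqrt d)$. Applying~\eqref{eq:john-thm} once more gives $z \in E^{\beta\lambda_1 d}_{\delta_i}(x)$. The delicate part is the bookkeeping of the $\sqrt d$ factors: I must confirm that the constant $32$ makes $\beta\lambda_1\sqrt d\cdot\sqrt d \leq 1/4$, or else find the correct accounting, for instance by working with the Macbeath region membership directly. If this works, multiplying the three factor bounds completes the proof.
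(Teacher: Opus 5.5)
Your overall plan matches the paper's: lower-bound $\Psi(z)$ by a single patch weight with $\shape^{[e]}(z)\le 1/2$, $\shape^{[b]}(z)\le 1/4$ and $|\shape^{[r]}(z)|\le 3/4$. Your ball step and final arithmetic ($e^{-4/3-16/15-16/7}\approx 0.0092$) are the same as the paper's. \textbf{The ring estimate in your second step is false, however.} In units of $2^{i-2}r$, the ring $R_i$ has radii $3$ and $10$, and your annulus $2^{i+1}b\setminus 2^ib$ is $t\in[4,8]$. There $|\shape^{[r]}|=|2t^2-109|/91$. This equals $19/91$ at $t=8$ but $77/91=11/13\approx 0.846$ at $t=4$: the ring function is affine in $t^2$ and vanishes only at $t\approx 7.4$, near the outer end of your annulus. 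So as $z$ approaches $\partial(2^ib)$ from outside, every level-$i$ patch has weight at most about $\mu(0)^2\mu(11/13)=e^{-2-169/48}\approx 0.004$. Your own factor bounds give only $\approx 0.0027$. Since the ball and ellipsoid factors never exceed $\mu(0)=e^{-1}$, no sharpening of them can recover $0.009$ while the ring is tied to the level where the descent stops.

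The missing idea is the staggering the paper exploits. Consecutive rings overlap (normalized, $R_i$ and $R_{i+1}$ become $(1,10/3)$ and $(2,20/3)$), and on each overlap the better of the two rings has $|\shape^{[r]}|\le 0.72<3/4$. For $\|z-c\|$ slightly above $2^ir$, that better ring is $R_{i-1}$. You must then produce a level-$(i-1)$ ellipsoid containing $z$ with margin, so the ellipsoid's level has to follow the chosen ring rather than the annulus. Your maximality argument does not give this: $X_{i-1}$ is maximal only in $w_{i-1}=w\cap 2^ib$, which excludes $z$. The level-$(i-1)$ packing has to extend over $w\cap R_{i-1}$, which the paper's sketch takes for granted.

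Your third step does close, but not with your $\sqrt d$ bookkeeping. Inflating to scale $\lambda_1\sqrt d$ first yields only $z\in E^{4d/(31\sqrt d+1)}_{\delta_i}(x)$, and that scale exceeds $1/4$ once $d\ge 4$. Instead use $E^{\lambda_1}_{\delta_i}\subseteq M^{\lambda_1}_{\delta_i}$ directly. Lemma~\ref{vor-exp-con.lem} with $\lambda=\lambda_1$ and $\alpha=0$ then gives $z\in M^{4\lambda_1/(1-\lambda_1)}_{\delta_i}(x)=M^{1/(8\sqrt d)}_{\delta_i}(x)\subseteq E^{1/8}_{\delta_i}(x)$ by \eqref{eq:john-thm}. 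Hence $\shape^{[e]}(z)\le 1/16$ relative to $E'_i(x)=E^{1/2}_{\delta_i}(x)$; this is where the choice $\lambda_1=1/(32\sqrt d+1)$ is used.
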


Finally, the ANN problem is routinely solved for a subset of space, e.g., a sufficiently large ball containing all data points, where any point becomes a valid approximate nearest neighbor outside that ball. To ensure continuity for all $\RE^d$, we assign the same representative to all AVD cells at the periphery of this ball. In this manner, only a single distance function is effectively considered for blending at the periphery, and we can continue to evaluate this single function away from the ball without the need for blending while ensuring continuity.

\section{Conclusions} \label{sec:conclusion}
In this paper, we presented an efficient differentiable approximation to the distance field $\DF$ induced by a set of points $P \subset \RE^d$. Given a query point $q$, the proposed data structure computes an approximate value $\sDF(q)$ in addition to its gradient $\Gradient \sDF(q)$. This additional feature enables distance approximation to be integrated into gradient-based optimization pipelines, e.g., for machine learning. The resulting data structure matches state-of-the-art results for approximate nearest-neighbor searching in terms of both the query time and storage complexity, while additionally achieving asymptotically optimal bounds on the norms of both the gradient and the Hessian of the approximate distance function. Our approach incorporates a partition of unity construction into an approximate Voronoi diagram data structure based on a new type of Macbeath regions defined with respect to the convex Voronoi subdivisions.

Beyond distance approximation, the partition of unity framework, with the flexibility to blend over various shape functions, is general and can be applied to other problems in geometric approximation. Examples include penetration depth in collision detection~\cite{ZKM14}, distance oracles in robotics and autonomous navigation \cite{WWL22}, and novel-view synthesis using parametric radiance fields~\cite{FYT22}.

While our approach produces a smooth differentiable approximation, there are other properties of distance fields that would be useful to preserve. One shortcoming of our method is that it can produce spurious local minima in the approximate distance field. An interesting question is whether our approach can be modified to eliminate these minima. In addition, while we bound the value of the approximate distance function $\sDF$ by a multiplicative $(1 + \eps)$ factor of the exact distance function $\DF$, we only bound the norm of the gradient $\|\Gradient \sDF(x)\|$ as $O(1)$, which can be much larger than $\|\Gradient \DF(x)\| = 1$. It would be interesting to investigate improved approximations to the gradient and their implications on the storage and query time.

\section*{Acknowledgments} \label{sec:ack}
The first author would like to thank Jens M. Melenk for a fruitful discussion of this research during his visit at the University of Texas, Austin, and in particular for providing the reference to Stein's classical result~\cite{Ste70}.

\appendix

\section{Deferred Proofs} \label{sec:def-proof}

\subsection{Partition of Unity Derivations.} \label{sec:pu-math}

Recall the we utilize a locally-finite open cover of $\RE^d$ by a collection of patches $\{\patch_i\}_{i \in I}$ with a partition of unity $\{\phi_i\}_{i \in I}$ subordinate to the cover, that is, $\supp(\phi_i) \subseteq \patch_i$. Recall that each patch is associated with its nearest-neighbor representative point from $P$, denoted $\rep(\patch_i)$, and a local approximation function, $\localF_i(x) = \|x - \rep(\patch_i)\|^2$. The approximate distance function is defined as:
\[
    \sDFSq(x) 
        ~ = ~ \sum_{i \in I} \phi_i(x)\cdot \localF_i(x).
\]

In this section, we derive various expressions and bounds on the gradients of the partition of unity and its constituent functions. This culminates in a simple expression of the gradient of the distance approximation $\Gradient \sDFSq(x)$ along with an upper bound on its magnitude. Let us begin by recalling the various elements of the partition of unit approach.

\subsubsection{Functions.}

Recall the bump function $\mu: \RE \rightarrow \RE_{\geq 0}$ from Eq.~\eqref{eq:bump}, with $\supp(\mu) \subseteq [0, 1]$.
\[
    \mu(s) 
        ~ = ~ \begin{cases}
                \exp\left(\dfrac{1}{s^2 - 1}\right),   & |s| < 1,\\
                0,                                     & \text{otherwise.}
            \end{cases}
\]
In what follows, the support of all functions is similarly bounded, and we omit explicit references to the complement of the support. We obtained the weight functions defining the partition of unity over a set of patches $\{\patch_i\}$, where each patch is realized as the intersection of a set of at most $\kappa_\Pi = O(1)$ \emph{component shapes} $\{\mathcal{S}_i^{(k)}\}$ and each component shape is associated with a \emph{compatible shape function} $\sigma_i^{(k)}$ (recall Section~\ref{sec:shape}).

A partition of unity is the set of functions $\bigg\{\phi_i = \dfrac{\psi_i}{\Psi}\bigg\}_{i \in I}$, such that:
\begin{equation} \label{eq:psi_Psi}
    \psi_i(x) 
        ~ = ~ \prod_{k \in k_i} \mu\left(\shape_i^{(k)}(x)\right) \text{~~ and ~~} 
    \Psi(x) 
        ~ = ~ \sum_{i \in I} \psi_i(x).
\end{equation}
%

\subsubsection{Blending Gradients.}

For a given function $s: \RE^d \rightarrow \RE$, it is easily verified that
\[ 
    \Gradient \mu(s) 
        ~ = ~ \exp\left(\frac{1}{s^2 - 1}\right)\cdot\Gradient\left(1+\frac{1}{s^2-1}\right) 
        ~ = ~ \mu(s)\cdot\frac{-2s}{(s^2-1)^2}\cdot\Gradient s.
\]
From the above equations, we have
\begin{align*}
    \Gradient \phi_i(x) 
        & ~ = ~ \frac{1}{\Psi(x)^2}\big(\Gradient \psi_i(x) \cdot \Psi(x) - \psi_i(x)\cdot\Gradient\Psi(x)\big) 
          ~ = ~ \frac{\Gradient \psi_i(x)}{\Psi(x)} - \frac{\psi_i(x)}{\Psi(x)^2}\cdot\sum_j\Gradient\psi_j(x), \\
        & ~ = ~ \frac{1}{\Psi(x)} \bigg(\Gradient \psi_i(x) - \phi_i(x)\sum_j\Gradient\psi_j(x)\bigg), \text{~and}\\
    \Gradient \localF_i(x) 
        & ~ = ~ 2\cdot(x - \rep(\patch_i)).
\end{align*}
Eq.~\eqref{eq:psi_Psi} is the product of exponentials. Differentiating and simplifying yields
\begin{equation} \label{eq:grad_wi}
    \Gradient \psi_i(x) 
        ~ = ~ \psi_i(x) \cdot \sum_{k \in k_i} \dfrac{-2\cdot\shape_i^{(k)}(x)}{\left(\shape_i^{(k)}(x)^2 - 1 \right)^2} \cdot \Gradient\shape_i^{(k)}(x).
\end{equation}

Towards the desired bound on the gradient of the distance approximation $\|\Gradient \sDFSq(x)\|$, we start by bounding the gradient of the weight functions $\|\Gradient \psi_i\|$.

\begin{restatable}{lemma}{blendingGradientBound}\label{blending_gradient_bound.lem}
For any set of compatible shape functions, we have $\|\Gradient \psi_i(x)\| = O(1/(\eps\cdot\DF(x)))$. The same bound holds for $\|\Gradient \phi_i(x)\|$.
\end{restatable}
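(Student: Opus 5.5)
We bound $\|\Gradient\psi_i(x)\|$ directly from Eq.~\eqref{eq:grad_wi}, and then derive the bound for $\phi_i$ from the quotient-rule expression for $\Gradient\phi_i$ computed above. The plan has three steps.

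\textbf{Step 1: one factor at a time.} For a patch with $k_i\le\kappa_\Pi$ component shapes, Eq.~\eqref{eq:grad_wi} gives
\[
    \|\Gradient\psi_i(x)\| ~\le~ \sum_{k\in k_i} \psi_i(x)\cdot\frac{2|\shape_i^{(k)}(x)|}{\bigl(\shape_i^{(k)}(x)^2-1\bigr)^2}\cdot\|\Gradient\shape_i^{(k)}(x)\|.
\]
Since there are $O(1)$ terms, it suffices to bound each summand.

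\textbf{Step 2: taming the singular factor.} This is the step I expect to be the main obstacle. The factor $1/(s^2-1)^2$ blows up as $|s|\to1$, i.e., near the boundary of the component shape. It has to be absorbed by the exponential decay of $\mu$. Write $\psi_i = \mu(\shape_i^{(k)})\cdot\prod_{k'\neq k}\mu(\shape_i^{(k')})$. The remaining factors are at most $e^{-1}\le 1$, so it is enough to show that
\[
    g(s) ~=~ \mu(s)\cdot\frac{2|s|}{(s^2-1)^2}
\]
is bounded on $(-1,1)$. Substitute $t = 1/(1-s^2)\in[1,\infty)$. Then $g(s)\le 2t^2e^{-t}\le 8e^{-2}$, which is a uniform constant. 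Outside the support, $\psi_i$ vanishes identically and is smooth there, so the gradient is zero.

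\textbf{Step 3: combining with compatibility, then passing to $\phi_i$.} Compatibility, Definition~\ref{def:compat-shape}(ii), gives $\|\Gradient\shape_i^{(k)}(x)\|\le c_1/(\eps\,\DF(x))$. Together with Step 2, each summand is $O(1/(\eps\,\DF(x)))$, and therefore so is $\|\Gradient\psi_i(x)\|$.

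For $\phi_i$, use
\[
    \Gradient\phi_i = \frac{1}{\Psi}\Bigl(\Gradient\psi_i - \phi_i\sum_j\Gradient\psi_j\Bigr).
\]
Three facts control this expression:
\begin{itemize}
    \item $\Psi(x)\ge c_\Psi>0$, by Lemma~\ref{positive_Psi.lem} (as a standing assumption of the cover);
    \item $0\le\phi_i\le1$;
    \item the sum over $j$ has at most $\depth=O(1)$ nonzero terms, by Lemma~\ref{PU_depth.lem}, and each is bounded by the previous paragraph.
\end{itemize}
Hence $\|\Gradient\phi_i(x)\|\le (1+\depth)\cdot O(1/(\eps\,\DF(x)))/c_\Psi = O(1/(\eps\,\DF(x)))$.
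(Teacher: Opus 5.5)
Your proof is correct and follows the paper's route for $\psi_i$: absorb the singular factor $\shape/(\shape^2-1)^2$ into the bump $\mu$, then apply compatibility to each of the at most $\kappa_\Pi$ terms. Your substitution $t=1/(1-s^2)$ makes explicit the uniform constant that the paper only asserts. For $\phi_i$ the paper just states that the same bound holds, and your quotient-rule argument using $\Psi\ge c_\Psi$, $0\le\phi_i\le 1$, and $\depth=O(1)$ supplies exactly the step the paper leaves out.
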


\begin{proof}
We begin by simplifying Eq.~\eqref{eq:grad_wi}. We observe that for any $k \in k_i$,
\begin{align}
    \left| \frac{\psi_i(x)\cdot \shape_i^{(k)}(x)}{(\shape_i^{(k)}(x)^2-1)^2} \right| 
        ~ \leq ~ \mu(0)\cdot\left| \frac{\mu(\shape_i^{(k)}(x))\cdot \shape_i^{(k)}(x)}{(\shape_i^{(k)}(x)^2-1)^2} \right| 
        ~ \leq ~ \frac{1}{e}\cdot\frac{1}{2} 
        ~ <    ~ \frac{1}{2}, \label{eq:grad_bound}
\end{align}
which holds by construction as $|\shape_i^{(k)}(x)| < 1$ whenever $x \in \patch_i$.

By the triangle inequality and the properties of compatible shape functions, we have
\begin{equation} \label{nabla_psi_2.eq}
    \|\Gradient \psi_i(x)\| 
        ~ \leq ~ \sum_{k \in k_i} \left\|\Gradient \shape_i^{(k)}(x) \right\| 
        ~ \leq ~ \kappa_\Pi \cdot \max_k \left\|\Gradient \shape_i^{(k)}(x) \right\| 
        ~ =    ~ O\left(\dfrac{1}{\eps\cdot\DF(x)}\right),
\end{equation}
which completes the proof.
\end{proof}

\subsubsection{Smooth Distance Gradients.} \label{sec:sDFGradient}
In this section, we establish an explicit expression on the gradient of the smooth distance approximation, which can be easily evaluated in the course of answering distance-function queries, along with an upper bound on its gradient establishing the following lemma.

\sDFGradient*

\begin{proof}
Recalling that $\sDF(x) = \sum_{i \in I} \phi_i(x)\localF_i(x)$, where $\{\phi_i\}$ is the partition of unity and $\{\localF_i\}$ are the local approximations, we proceed to derive the gradient of $\sDF(x)$ as follows.
\begin{align*}
    \Gradient \sDFSq(x) 
        & ~ = ~ \Gradient \left(\sum_{i \in I} \phi_i(x)\cdot \localF_i(x)\right)
          ~ = ~ \sum_i \Gradient\phi_i(x)\cdot \localF_i(x) + \sum_i \phi_i(x)\cdot\Gradient \localF_i(x) \\
        & ~ = ~ \sum_i \left(\frac{\Gradient \psi_i(x)}{\Psi(x)} - \sum_{j \in I}\frac{\psi_i(x)}{\Psi(x)^2}\cdot\Gradient\psi_j(x) \right)\cdot \localF_i(x) + \sum_i \phi_i(x)\cdot\Gradient \localF_i(x) \\
        & ~ = ~ \sum_i \frac{\localF_i(x)}{\Psi(x)}\cdot\Gradient \psi_i(x) - \sum_j \left(\sum_i \frac{\psi_i(x)}{\Psi(x)}\cdot \localF_i(x)\right)\cdot\frac{\Gradient\psi_j(x)}{\Psi(x)} + \sum_i \phi_i(x)\cdot\Gradient \localF_i(x) \\
        & ~ = ~ \sum_i \frac{\localF_i(x)}{\Psi(x)}\cdot\Gradient \psi_i(x) - \sum_i \frac{\sDFSq(x)}{\Psi(x)}\cdot\Gradient\psi_i(x) + \sum_i \phi_i(x)\cdot\Gradient \localF_i(x) \\
        & ~ = ~ \sum_i \frac{\localF_i(x)-\sDFSq(x)}{\Psi(x)}\cdot\Gradient\psi_i(x) + \sum_i \phi_i(x)\cdot\Gradient \localF_i(x) \\
        & ~ = ~ \sum_i \frac{\localF_i(x)-\sDFSq(x)}{\Psi(x)}\cdot\Gradient\psi_i(x) + \sum_i 2\cdot\phi_i(x)\cdot(x - \rep(\patch_i)) 
\end{align*}
Substituting for $\Gradient\psi_i(x)$ from Eq.~\eqref{eq:grad_wi}, we obtain the desired form in Eq.~\eqref{eq:grad_sDF_exp}.

Next, we proceed to bound the magnitude of the gradient $\|\Gradient \sDFSq\|$. A key observation we make use  of is the following consequence of the partition of unity construction
\[
    \sum_i \phi_i(x) 
        ~ = ~ 1 
    ~\implies~ 
    \sum_i \Gradient \phi_i(x) 
        ~ = ~ 0.
\]
In addition, we note that the required bound on each local approximation ensures that
\[
    \localF_i(x) 
        ~ = ~ \bigg( 1 + \epsilon_i(x) \bigg) \cdot \DFSq(x), \qquad \text{ where } 0 \leq \epsilon_i(x) \leq \eps, \text{ for all $i \in I$}.
\]
Recalling the definition of the smooth distance approximation from Eq.~\eqref{eq:pu}
\[
    \sDFSq(x) 
        ~ = ~ \sum_i \phi_i(x) \cdot \localF_i(x),
\]
summing over at most $\depth$ non-zero terms. We proceed to compute the gradient as
\begin{align*}
    \Gradient \sDFSq(x) 
        & ~ = ~ \sum_i \localF_i(x) \cdot \Gradient \phi_i(x) + \sum_i \phi_i(x) \cdot \Gradient \localF_i(x) \\
        & ~ = ~ \sum_i \bigg(1 + \epsilon_i(x)\bigg)\cdot\DFSq(x) \cdot \Gradient \phi_i(x) + \sum_i \phi_i(x) \cdot \Gradient \localF_i(x) \\
        & ~ = ~ \DFSq(x) \cdot \bigg(\cancel{\sum_i \Gradient \phi_i(x)} + \sum_i\epsilon_i(x)\cdot\Gradient \phi_i(x)\bigg) + \sum_i \phi_i(x) \cdot \Gradient \localF_i(x) \\
        & ~ = ~ \DFSq(x) \cdot \sum_i\epsilon_i(x)\cdot\Gradient \phi_i(x) + \sum_i \phi_i(x) \cdot \Gradient \localF_i(x).
\end{align*}
We can then bound the magnitude of the gradient as
\begin{align*}
    \|\Gradient \sDFSq(x)\| 
        & ~ \leq ~ \DFSq(x) \cdot \sum_i\epsilon_i(x)\cdot \|\Gradient \phi_i(x)\| + \sum_i \phi_i(x) \cdot \|\Gradient \localF_i(x)\| \\
        & ~ \leq ~ \eps\cdot\DFSq(x)\cdot\depth\cdot\max_i \|\Gradient \phi_i(x)\| + \sum_i \phi_i(x) \cdot \max_i \|\Gradient \localF_i(x)\| \\
        & ~ \leq ~ \eps\cdot\DF^2(x)\cdot\depth\cdot O\bigg(\frac{1}{\eps\cdot\DF(x)}\bigg) + \sum_i \phi_i(x) \cdot O\left(\DF(x)\right) \\
        & ~ =    ~ O\left(\DF(x)\right).
\end{align*}
Deferring the analysis of the coefficients $\zeta_i^{(k)}$ to Lemma~\ref{coefficient_gradient_bound.lem}, this concludes the proof.
\end{proof}

\subsubsection{Smooth Distance Hessian.} \label{sec:sDFHessianBound}
In this section, we establish an upper bound on the norm of the Hessian for the smooth distance approximation as stated in the following lemma.

\sDFHessianBound*

\begin{proof}
Recall that
\[
    \| \Hess \sDFSq(x) \| 
        ~ = ~ \max_{\|v\| = \|u\| = 1} | \Gradient_v \Gradient_u \sDFSq(x)|,
\]
where $\Gradient_u$ and $\Gradient_v$ denote the directional derivatives in the directions of $u$ and $v$, respectively. Recalling that $\Gradient_u \sDFSq(x) = \langle u, \Gradient \sDFSq(x) \rangle$ We begin by rewriting the above inequality as
\begin{align*}
    \| \Hess \sDFSq(x) \| 
        & ~ =    ~ \max_{\|v\| = 1} \max_{\|u\| = 1} | \Gradient_v \langle u, \Gradient \sDFSq(x) \rangle| \\
        & ~ =    ~ \max_{\|v\| = 1}\max_{\|u\|=1}\lim_{\delta \to 0} \frac{|\langle u,  \Gradient\sDFSq(x + \delta v) - \Gradient\sDFSq(x)\rangle|}{\delta} \\
        & ~ \leq ~ \max_{\|v\| = 1}\lim_{\delta \to 0} \frac{\|\Gradient\sDFSq(x + \delta v) - \Gradient\sDFSq(x) \|}{\delta}
\end{align*}
To further simplify the notation, we use $y = x + \delta v$ with $y \to x$ when convenient.
\begin{align*}
    \| \Hess \sDFSq(x) \| 
        & ~ \leq ~ \max_{\|v\| = 1} \lim_{\delta \to 0} \frac{\|\Gradient\sDFSq(x + \delta v) - \Gradient\sDFSq(x) \|}{\delta} \\
        & ~ \leq ~ \max_{\|v\| = 1} 2\cdot\sum_i \lim_{y \to x} \frac{\|\phi_i(y)\cdot(y - \rep(\patch_i)) - \phi_i(x)\cdot(x - \rep(\patch_i))\|}{\|y - x\|}  \\
        & \qquad + \sum_i\sum_k \lim_{y \to x} \frac{\|\zeta_i^{(k)}(y)\cdot\Gradient\shape_i^{(k)}(y) - \zeta_i^{(k)}(x)\cdot\Gradient\shape_i^{(k)}(x)\|}{\|y - x\|}\\
        & ~ \leq ~ \max_{\|v\| = 1} 2\cdot\sum_i \lim_{y \to x} \frac{\|(\phi_i(y) - \phi_i(x))\cdot(x - \rep(\patch_i))\| + \|\phi_i(y)\cdot(y - x)\|}{\|y - x\|}\\
        &\qquad + \sum_i\sum_k \lim_{y \to x} \frac{\|(\zeta_i^{(k)}(y) - \zeta_i^{(k)}(x))\cdot\Gradient\shape_i^{(k)}(x)\| + \|\zeta_i^{(k)}(y)\cdot(\Gradient\shape_i^{(k)}(y) - \Gradient\shape_i^{(k)}(x))\|}{\|y - x\|} \\
        & ~ \leq ~ \max_{\|v\| = 1} 2\cdot\sum_i \|x - \rep(\patch_i)\|\cdot\lim_{\delta \to 0} \frac{|\phi_i(x + \delta v) - \phi_i(x)|}{\delta} + |\phi_i(x)|\cdot\lim_{\delta\to 0}\frac{\|\delta v\|}{\delta} \\
        & \qquad + \sum_i\sum_k \|\Gradient\shape_i^{(k)}(x)\|\cdot\lim_{\delta \to 0} \frac{|\zeta_i^{(k)}(x + \delta v) - \zeta_i^{(k)}(x)|}{\delta} \\
        & \qquad\qquad\qquad\qquad + |\zeta_i^{(k)}(x)|\cdot\lim_{\delta\to 0}\frac{\|\Gradient\shape_i^{(k)}(x + \delta v) - \Gradient\shape_i^{(k)}(x)\|}{\delta}\\
        & ~ \leq ~ 2\cdot\sum_i \DF(x)\cdot \|\Gradient \phi_i(x)\| + 1\cdot1 \\
        & \qquad + \sum_i\sum_k O\left(\dfrac{1}{\eps\cdot\DF(x)}\right)\cdot\|\Gradient \zeta_i^{(k)}(x)\| + \left(\eps\cdot\DF^2(x)\right)\cdot\|\Gradient^2 \shape_i^{(k)}(x)\| \\ 
        & ~ \leq ~ 2\cdot\depth\cdot \DF(x)\cdot O\bigg(\frac{1}{\eps\cdot\DF(x)}\bigg) + 1 \\
        & \qquad + \depth\cdot\kappa_\Pi\cdot \bigg[O\bigg(\frac{1}{\eps\cdot\DF(x)}\bigg) \cdot O(\DF(x)) + O\bigg(\eps\cdot\DF^2(x)\bigg) \cdot O\bigg(\frac{1}{\eps^2\cdot\DF^2(x)}\bigg)\bigg] \\
        & ~ \leq ~ O(1/\eps). 
\end{align*}
which completes the proof.
\end{proof}

To finish the analysis above, recall the coefficients $\zeta_i^{(k)}$ from the gradient analysis of the shape function $\shape_i^{(k)}$ in Lemma~\ref{sDF_gradient.lem}. The following lemma bounds their gradients.

\begin{restatable}{lemma}{coefficientGradientBound}\label{coefficient_gradient_bound.lem}
For any compatible shape function $\shape_i^{(k)}$, for $i \in I$ and $k \in k_i$, we have $\|\Gradient \zeta_i^{(k)}(x)\| = O(\DF(x))$.
\end{restatable}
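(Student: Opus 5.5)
We bound $\Gradient \zeta_i^{(k)}$ by writing $\zeta_i^{(k)}$ as a product of three factors and applying the product rule:
\[
    \zeta_i^{(k)}(x) ~ = ~ 2\cdot A(x)\cdot B(x)\cdot C(x), \qquad
    A = \frac{1}{\Psi}, \quad
    B = \frac{\psi_i\,\shape_i^{(k)}}{\big((\shape_i^{(k)})^2-1\big)^2}, \quad
    C = \sDFSq - \localF_i .
\]
This gives
\[
    \|\Gradient \zeta_i^{(k)}\| ~ \leq ~ 2\big(\|\Gradient A\|\,|B|\,|C| + |A|\,\|\Gradient B\|\,|C| + |A|\,|B|\,\|\Gradient C\|\big).
\]
The target is to show that every term is $O(\DF(x))$. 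The scaling works out because the undifferentiated factors are small, while each derivative costs at most a factor $1/(\eps\DF)$.

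\textbf{Undifferentiated factors.} Lemma~\ref{positive_Psi.lem} (or generally $\Psi \geq c_\Psi$) gives $|A| \leq 1/c_\Psi = O(1)$. Eq.~\eqref{eq:grad_bound} gives $|B| < 1/2$. Both $\sDFSq$ and $\localF_i$ lie in $[\DFSq, (1+\eps)\DFSq]$, so $|C| \leq \eps\,\DF^2(x)$.

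\textbf{Derivative factors.}
\begin{itemize}
    \item $\Gradient A = -\Gradient\Psi/\Psi^2$. By Lemma~\ref{blending_gradient_bound.lem}, $\|\Gradient\Psi\| \leq \depth \max_j\|\Gradient\psi_j\| = O(1/(\eps\DF))$, so $\|\Gradient A\| = O(1/(\eps\DF))$.
    \item $\Gradient C = \Gradient\sDFSq - 2(x-\rep(\patch_i))$. By Lemma~\ref{sDF_gradient.lem}, $\|\Gradient\sDFSq\| = O(\DF)$. We also have $\|x-\rep(\patch_i)\| \leq (1+\eps)\DF$. Hence $\|\Gradient C\| = O(\DF)$.
\end{itemize}
Combining, the first term is $O(1/(\eps\DF))\cdot O(1)\cdot\eps\DF^2 = O(\DF)$, and the third term is $O(1)\cdot O(1)\cdot O(\DF) = O(\DF)$.

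\textbf{Main obstacle: $\Gradient B$.} Write $s = \shape_i^{(k)}$ and $h(s) = s/(s^2-1)^2$, so that
\[
    \Gradient B ~ = ~ h(s)\,\Gradient\psi_i + \psi_i\, h'(s)\,\Gradient s .
\]
The functions $h$ and $h'$ blow up like $(1-s^2)^{-3}$ as $|s|\to 1$, so this factor needs care.

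The key observation is that $\psi_i$ contains the factor $\mu(s) = \exp(1/(s^2-1))$, which decays faster than any power of $(1-s^2)$. Hence $\mu(s)\,|h(s)|$, $\mu(s)\,|h'(s)|$ and $\mu(s)\,h(s)^2$ are all bounded by absolute constants on $(-1,1)$. I would verify this by substituting $t = 1/(1-s^2)$ and using $\sup_{t>0} t^m e^{-t} < \infty$.

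The remaining factors of $\psi_i$ lie in $[0, e^{-1}]$, so $\psi_i\,|h'(s)| = O(1)$. For the first term, expand $\Gradient\psi_i$ via Eq.~\eqref{eq:grad_wi} and regroup so that each term is $\psi_i$ times a product of at most two powers of $h$-type factors times some $\Gradient\shape_i^{(k')}$. Each such term is again $O(1)\cdot\|\Gradient\shape_i^{(k')}\|$ by the decay of the corresponding $\mu$ factor. Compatibility (ii) then gives
\[
    \|\Gradient B\| ~ = ~ O\big(\kappa_\Pi/(\eps\DF)\big) ~ = ~ O\big(1/(\eps\DF)\big).
\]
The middle term is therefore $O(1)\cdot O(1/(\eps\DF))\cdot\eps\DF^2 = O(\DF)$.

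Summing the three terms gives $\|\Gradient\zeta_i^{(k)}(x)\| = O(\DF(x))$. Outside $\supp\psi_i$, $\zeta_i^{(k)}$ vanishes identically, so the bound holds trivially there.
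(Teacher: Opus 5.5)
Your proof is correct and follows essentially the paper's route. Like the paper, you apply the product rule to the three factors $2/\Psi$, $\psi_i\shape_i^{(k)}/((\shape_i^{(k)})^2-1)^2$ and $\sDFSq-\localF_i$, and bound the undifferentiated factors by $O(1)$, $1/2$ and $\eps\DF^2(x)$. The derivative factors are controlled by Lemma~\ref{blending_gradient_bound.lem}, Lemma~\ref{sDF_gradient.lem}, compatibility, and the boundedness of $\mu(s)$ times rational functions of $s$ with poles at $|s|=1$. Your treatment of $\Gradient B$ is a cleaner version of the paper's displayed estimate: you expand $\Gradient\psi_i$ via Eq.~\eqref{eq:grad_wi} so that every term is a $\mu$-damped product of $h$-factors times some $\Gradient\shape_i^{(k')}$.
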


\begin{proof}
Recalling the definition from Eq.~\eqref{eq:shorthand_coefficient}
\[
    \zeta_i^{(k)}(x) 
        ~ = ~ \frac{2}{\Psi(x)}\cdot\frac{\psi_i(x)\cdot \shape_i^{(k)}(x)}{(\shape_i^{(k)}(x)^2-1)^2}\cdot\left(\sDFSq(x) - \localF_i(x)\right),
\]
we proceed to compute and bound the gradient as follows.
\begin{align*}
    \Gradient \zeta_i^{(k)}(x)
        & ~ = ~ \bigg(\frac{-2}{\Psi^2(x)}\cdot\sum_j \Gradient \psi_j(x) \bigg)\cdot\frac{\psi(x)\cdot \shape_i^{(k)}(x)}{(\shape_i^{(k)}(x)^2 - 1)^2}\cdot\bigg(\sDFSq(x) - \localF_i(x)\bigg) \\
        &\quad +\frac{\sDFSq(x) - \localF_i(x)}{\Psi(x)}\cdot\bigg[\frac{\psi(x)\cdot (1+3\shape_i^{(k)}(x)^2)}{(1 - \shape_i^{(k)}(x)^2)^3}\cdot\Gradient \shape_i^{(k)}(x)  \\
        & \qquad\qquad\qquad\qquad\qquad\qquad~~ + \frac{\shape_i^{(k)}(x)}{(\shape_i^{(k)}(x)^2 - 1)^2}\cdot\Gradient\psi(x)\bigg] \\
        &\quad +\frac{2}{\Psi(x)}\cdot\frac{\psi(x)\cdot \shape_i^{(k)}(x)}{(\shape_i^{(k)}(x)^2 - 1)^2}\cdot\bigg(\Gradient\sDFSq(x) - 2\cdot(x - \rep(\Omega))\bigg). \\
    \|\Gradient \zeta_i^{(k)}(x) \|
        & ~ \leq ~ O(\max_j \|\Gradient \psi_j(x)\|)\cdot \frac{1}{2}(\eps\cdot\DFSq(x)) + \frac{1}{2} O(\DF(x)) \\
        &\quad + O(\eps\cdot\DFSq(x))\cdot\bigg[5\cdot\max_k\|\Gradient \shape_i^{(k)}(x)\| + \bigg|\frac{\psi(x)\cdot \shape_i^{(k)}(x)^2}{(\shape_i^{(k)}(x)^2-1)^2}\bigg| \\
        &\qquad\qquad\qquad\qquad\qquad\qquad~~ + \max_{k'}O\bigg(\bigg|\frac{\psi(x)\cdot \shape_i^{(k)}(x)\cdot \shape_i^{(k')}(x)}{(\shape_i^{(k)}(x)^2-1)^2\cdot(\shape_i^{(k')}(x)-1)^2}\bigg|\bigg)\bigg] \\
        & ~ \leq ~ O(\DF(x)) + O\bigg(\eps\cdot\DFSq(x)\cdot\|\Gradient \shape_i^{(k)}(x)\|\bigg) \\
        & ~ \leq ~ O(\DF(x)), 
\end{align*}
where we used $\shape_i^{(k')}$ to indicate terms with mixed shape functions. In addition, we utilized the following simplifying bounds similar to Eq.~\eqref{eq:grad_bound}.
\begin{align*}
    \left| \frac{\psi(x)\cdot (1+3\cdot \shape_i^{(k)}(x)^2)}{(1 - \shape_i^{(k)}(x)^2)^3} \right| 
        & ~ \leq ~ \left| \frac{\mu(\shape_i^{(k)}(x))\cdot (1+3\cdot \shape_i^{(k)}(x)^2)}{(1 - \shape_i^{(k)}(x)^2)^3} \right| 
          ~ < ~ 5. \\
    \left| \frac{\psi(x)\cdot \shape_i^{(k)}(x)^2}{(\shape_i^{(k)}(x)^2-1)^4} \right| 
        & ~ \leq ~ \left| \frac{\mu(\shape_i^{(k)}(x))\cdot \shape_i^{(k)}(x)^2}{(\shape_i^{(k)}(x)^2-1)^4} \right| 
          ~ < ~ 4. \\
    \bigg|\frac{\psi(x)\cdot \shape_i^{(k)}(x)\cdot \shape_i^{(k')}(x)}{(\shape_i^{(k)}(x)^2-1)^2\cdot(\shape_i^{(k')}(x)-1)^2}\bigg|
        & ~ \leq ~ \bigg|\frac{\mu(\shape_i^{(k)}(x))\cdot \shape_i^{(k)}(x)}{(\shape_i^{(k)}(x)^2-1)^2}\bigg| \cdot \bigg|\frac{\mu(\shape_i^{(k')}(x))\cdot \shape_i^{(k')}(x)}{(\shape_i^{(k')}(x)-1)^2}\bigg| 
          ~ \leq ~ \frac{1}{4}.
\end{align*}
which completes the proof.
\end{proof}

\subsection{Expansion-Containment for Voronoi-based Macbeath Regions.} \label{sec:vor-exp-con}

The purpose of this section is to prove Lemma~\ref{vor-exp-con.lem}, which generalizes expansion-containment to Voronoi-based Macbeath regions. An important property is that the cells of the Voronoi diagram are the projections of facets of a convex polyhedron in $\RE^{d+1}$ that arises through a lifting process. This allows us to relate the Macbeath regions of expanded Voronoi cells to Macbeath regions in this convex polyhedron. (See Clarkson \cite{Cla94} for another approach to approximating Voronoi diagrams through the lifting transformation.) 

We begin by recalling a few properties of the well-known \emph{lifting transformation}~\cite{Boy98, BCK10}. We will represent points in $\RE^{d+1}$ as a pair $(x,w)$, where $x \in \RE^d$ and $w \in \RE$. Define $\Psi$ to be the paraboloid in $\RE^d$ consisting of the points $(x,w)$ where $w = \|x\|^2 = \inner{x}{x}$ (see Figure~\ref{lifting.fig}(a)). Let us think of the $w$-axis as being directed vertically upwards. For any $(x,w) \in \RE^{d+1}$ define $(x,w)^{\downarrow} = x$. Given a set $X \subseteq \RE^{d+1}$, define $X^{\downarrow}$ as the vertical projection of this set onto $\RE^d$. 

For any $p \in \RE^d$, define $h(p) = \{(x,w) : w = \ang{2 x - p, p}\}$. It is straightforward to verify that $h(p)$ is a hyperplane that is tangent to $\Psi$ at the vertical projection of $p$ onto $\Psi$ (see Figure~\ref{lifting.fig}(a)). The relevance of these hyperplanes to nearest-neighbor searching is that the vertical distance between any point $(x,w) \in h(p)$ and $\Psi$ is equal to the squared Euclidean distance between $x$ and $p$.%
\footnote{To see this, observe that the squared distance between $x$ and $p$ is $\ang{x - p, x - p} = \ang{x,x} - 2 \ang{x,p} + \ang{p,p} = \|x\|^2 - \ang{2 x - p, p}$, which is just the vertical distance between $(x,w) = (x, \ang{2 x - p, p})$ and $\Psi$.}
Define $\mathcal{E}(P) = \mathcal{E}$ to be the intersection of the closed upper halfspaces bounded by $h(p)$, for all $p \in P$ (see Figure~\ref{lifting.fig}(a)). Clearly, $\mathcal{E}$ is an (unbounded) convex polyhedron that has $n$ facets, one per site. It follows from our earlier remark about squared distances that the vertical projection of the closed facet of $\mathcal{E}$ bounded by $h(p)$ onto $\RE^d$ is just the Voronoi cell $V(p)$. 

\begin{figure}[htbp]
 \centerline{\includegraphics[scale=0.40]{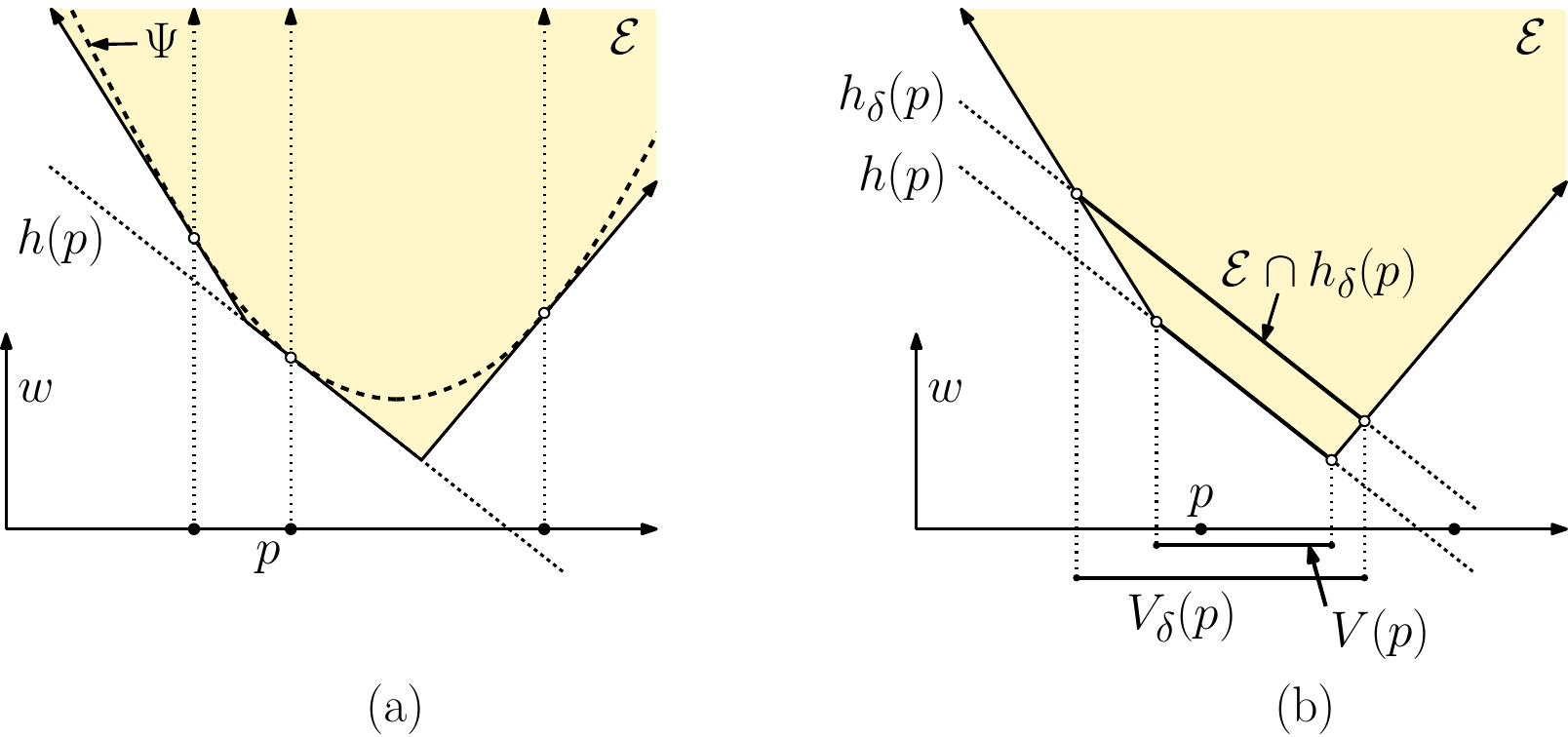}}
 \caption{(a) The lifting transformation and (b) expanded Voronoi cells.} \label{lifting.fig}
\end{figure}

For $\delta \geq 0$, define $h_{\delta}(p)$ to be the translate of $h(p)$ vertically upwards by a distance of $\delta^2$. Clearly, if a point $(x,w)$ lies at the intersection of $h_{\delta}(p)$ and $h_0(p')$, then $\|x - p\|^2 - \delta^2 = \|x - p'\|^2$, or equivalently  $\|x - p\|^2 = \|x - p'\|^2 + \delta^2$. Thus, $x$ lies on the shifted bisector that defines the $\delta$-expanded Voronoi cell $V_{\delta}(p)$. We have the following as a consequence (see Figure~\ref{lifting.fig}(b)).

\begin{lemma} \label{exp-vor-lift.lem}
Given a finite set $P \subset \RE^d$ and scalar $\delta \geq 0$, $V_{\delta}(p) = (\mathcal{E} \cap h_{\delta}(p))^{\downarrow}$.
\end{lemma}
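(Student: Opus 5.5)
The plan is to show both inclusions at once by characterizing membership pointwise. Fix $x \in \RE^d$ and consider the vertical line over $x$. This line meets the hyperplane $h_{\delta}(p)$ in exactly one point, namely $(x, w_x)$ with $w_x = \ang{2x - p, p} + \delta^2$. Hence $x \in (\mathcal{E} \cap h_{\delta}(p))^{\downarrow}$ if and only if this point lies in $\mathcal{E}$. By definition of $\mathcal{E}$, that happens if and only if it lies in the closed upper halfspace of $h(p')$ for every $p' \in P$, that is,
\[
    \ang{2x - p, p} + \delta^2 ~ \geq ~ \ang{2x - p', p'} \qquad \text{for all } p' \in P.
\]

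Next we rewrite each side in terms of squared distances. We use the identity from the earlier footnote, $\ang{2x - q, q} = \|x\|^2 - \|x - q\|^2$ for any $q \in \RE^d$. Substituting it on both sides and cancelling the common $\|x\|^2$, the inequality becomes $-\|x - p\|^2 + \delta^2 \geq -\|x - p'\|^2$. Equivalently, $\|x - p\|^2 \leq \|x - p'\|^2 + \delta^2$. Requiring this for all $p' \in P$ is exactly the definition of $x \in V_{\delta}(p)$.

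Every step above is an equivalence, so the two sets coincide. The only point requiring any care is the first step. It uses the fact that $h_{\delta}(p)$ is non-vertical: it is the graph of an affine function of $x$. As a result, the vertical projection restricted to $h_{\delta}(p)$ is a bijection onto $\RE^d$. This is what makes it legitimate to test membership in the projection by testing the unique lifted point. I do not expect any real obstacle: the statement is essentially an unwinding of definitions, and this bijectivity remark is the only thing that needs to be made explicit.
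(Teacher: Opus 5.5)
Your proof is correct and takes the paper's approach. The paper only notes, via the same identity $\ang{2x-q,q} = \|x\|^2 - \|x-q\|^2$, that $h_{\delta}(p)$ meets $h(p')$ exactly above the shifted bisector $\|x-p\|^2 = \|x-p'\|^2 + \delta^2$. Your pointwise argument compares the full halfspace inequalities and makes explicit that vertical projection restricted to $h_{\delta}(p)$ is a bijection, so it is a more careful write-up of the same reasoning.
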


In order to relate the Macbeath regions arising from different Voronoi cells, we lift their defining points onto $\mathcal{E}$ and consider the relation of the associated Macbeath regions in $\RE^{d+1}$. To make this formal, given $x \in \RE^d$, let $p = \nn_P(x)$ and define $x^{\delta} \in \RE^{d+1}$ to be the vertical projection of $x$ onto the shifted hyperplane $h_{\delta}(p)$ (see Figure~\ref{macbeath-related.fig}(a)). The following lemma relates the $d$-dimensional Voronoi-based Macbeath region $M_{\delta}(x)$ to the $(d+1)$-dimensional Macbeath region $M_{\mathcal{E}}(x^{\delta})$. The first part shows that if we slice $M_{\mathcal{E}}(x^{\delta})$ by the hyperplane $h_{\delta}(p)$ (which passes through $x^{\delta}$) its vertical projection is equal to $M_{\delta}(x)$ (see Figure~\ref{macbeath-related.fig}(a)). The second part shows that the vertical projection of the $M_{\mathcal{E}}(x^{\delta})$ is nested between $M_{\delta}(x)$ and its factor-2 expansion.

\begin{figure}[htbp]
 \centerline{\includegraphics[scale=0.40]{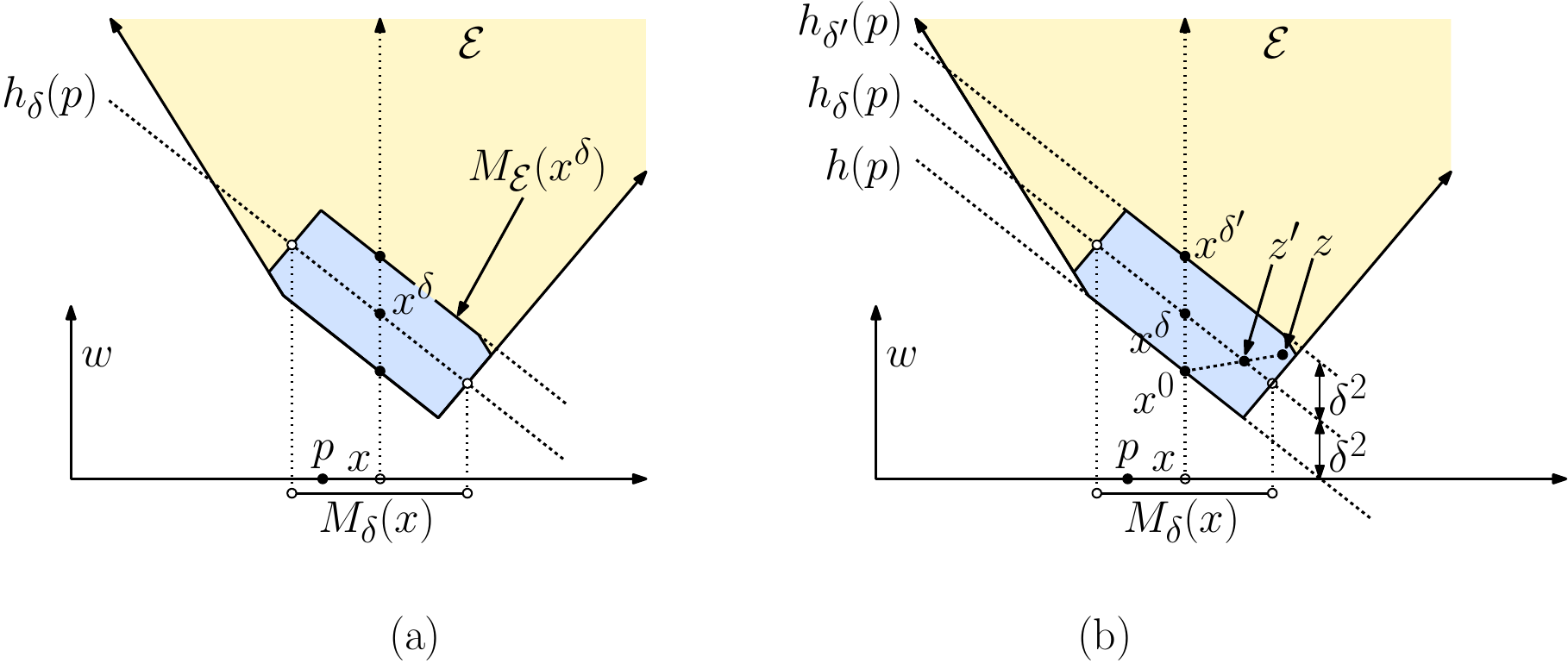}}
 \caption{The relationship between Voronoi-based Macbeath regions and Macbeath regions of $\mathcal{E}$.} \label{macbeath-related.fig}
\end{figure}

\begin{lemma} \label{macbeath-related.lem}
Consider a finite set $P \subset \RE^d$, scalar $\delta \geq 0$, and $x \in \RE^d$. Letting $p = \nn_P(x)$, we have: 
\begin{enumerate}
\item[$(a)$] $(M_{\mathcal{E}}(x^{\delta}) \cap h_{\delta}(p))^{\downarrow} ~=~ M_{\delta}(x)$,

\item[$(b)$] $M_{\delta}(x) ~\subseteq~ M_{\mathcal{E}}(x^{\delta})^{\downarrow} ~\subseteq~ M_{\delta}^2(x)$.
\end{enumerate}
\end{lemma}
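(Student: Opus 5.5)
The plan is to work entirely in $\RE^{d+1}$, writing $H = h_{\delta}(p)$. I use two facts about the projection $\downarrow$ restricted to $H$. First, since $H$ is non-vertical, this restriction is an affine bijection $H \to \RE^d$. Second, by Lemma~\ref{exp-vor-lift.lem} it maps $\mathcal{E} \cap H$ onto $V_{\delta}(p)$. I also note that $x^{\delta} \in \mathcal{E} \cap H$: since $x \in V(p)$, the lift of $x$ onto $h(p)$ lies on $\bd \mathcal{E}$. Raising it by $\delta^2$ keeps it in $\mathcal{E}$, because $\mathcal{E}$ is an intersection of upper halfspaces and is therefore closed under moving upward.

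For part (a), I first show that slicing a Macbeath region through its own center gives the Macbeath region of the slice. Since $x^{\delta} \in H$, the point reflection $y \mapsto 2x^{\delta} - y$ maps $H$ to itself. Hence $(2x^{\delta} - \mathcal{E}) \cap H = 2x^{\delta} - (\mathcal{E} \cap H)$, and so $M_{\mathcal{E}}(x^{\delta}) \cap H = M_{\mathcal{E}\cap H}(x^{\delta})$. Next, Macbeath regions commute with affine bijections, because such maps preserve intersections and point reflections. Applying $\downarrow|_H$ therefore yields $M_{V_{\delta}(p)}(x) = M_{\delta}(x)$, using $(x^{\delta})^{\downarrow} = x$ and $\nn(x) = p$. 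This proves (a), and the first inclusion of (b) follows immediately, since the slice is a subset of the whole region.

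For the second inclusion of (b), I take any $z \in M_{\mathcal{E}}(x^{\delta})$. I write $v = z^{\downarrow} - x$ and let $t$ be the signed vertical offset of $z$ above $H$. Then $z' = 2x^{\delta} - z \in M_{\mathcal{E}}(x^{\delta})$ has displacement $-v$ and offset $-t$, and both $z, z' \in \mathcal{E}$. It suffices to show that the two points of $H$ lying above $x \pm v/2$ both belong to $\mathcal{E}$. Those points are symmetric about $x^{\delta}$, so both would then lie in $M_{\mathcal{E}\cap H}(x^{\delta})$. By (a), $x + v/2 \in M_{\delta}(x)$, which means $z^{\downarrow} = x + v \in M^2_{\delta}(x)$.

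To produce these points I combine upward closure with convexity, and I may assume $t \ge 0$ by swapping $z$ and $z'$. The raised point $z + 2t e_w$ lies in $\mathcal{E}$, and the segment from $z'$ to it lies in $\mathcal{E}$ by convexity. Along this segment the offset goes from $-t$ to $3t$, so it crosses $H$ at parameter $1/4$. There the displacement is $-v + \tfrac14(2v) = -v/2$, which gives the point of $H$ over $x - v/2$. Symmetrically, the segment from $z$ (offset $t$) to $z' + 2t e_w$ (offset $t$) is not the right choice. Instead I use the segment from $z'$ itself, or equivalently reflect the construction: the segment from $z' + 4t e_w$ (offset $3t$) to $z$ does not cross $H$ either, so the reflected point needs a separate argument.

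I expect this second point to be the main obstacle. On that side the point $z$ lies above $H$, so the natural segments do not reach $H$ at displacement $+v/2$. My first attempt would be the segment from $z'$ to $z + s e_w$, tuning $s$ so that it crosses $H$ at displacement $+v/2$. The crossing parameter is $\lambda = t/(2t+s)$, which requires $\lambda = 3/4$, and that is impossible for $s \ge 0$. If that fails, I would use the fact that the point over $x+v/2$ on $H$ lies vertically between $z$'s projection-line and $h(p)$. More precisely, I would use that $\mathcal{E}$ contains every point of $H$ above $V(p)$, and that $\mathcal{E} \cap H$ is convex and contains both $x^{\delta}$ and the already-found point over $x - v/2$. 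The aim is to show that $x + v/2 \in V_{\delta}(p)$ directly by checking each shifted-bisector halfspace, using the corresponding inequality satisfied at $z$.
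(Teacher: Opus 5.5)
Your part (a) and the first inclusion of (b) are correct and follow the paper's route. The point reflection through $x^{\delta}$ preserves $h_{\delta}(p)$, so the slice of $M_{\mathcal{E}}(x^{\delta})$ is the Macbeath region of $\mathcal{E}\cap h_{\delta}(p)$, which projects onto $M_{\delta}(x)$ by Lemma~\ref{exp-vor-lift.lem}.

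The gap is the second inclusion of (b). You never show that the point of $h_{\delta}(p)$ over $x+v/2$ lies in $\mathcal{E}$, and the ingredients you use cannot show it. Your construction relies only on $z$, $z'=2x^{\delta}-z$, and the convexity and upward closure of $\mathcal{E}$. It uses $p=\nn(x)$ only to place $x^{\delta}$ in $\mathcal{E}$. All of this still holds when merely $x\in V_{\delta}(p)$, and then the inclusion is false. Take $d=1$ and $P=\{p,p'\}$ with $p'$ slightly closer to $x$ than $p$, so that $h(p')$ passes a vertical distance $\eta<\delta^2$ below $x^{\delta}$. Then $M_{\mathcal{E}}(x^{\delta})$ is a parallelogram whose projection is wider than its slice by the factor $1+\delta^2/\eta>2$. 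Your halfspace fallback fails for the same reason. Write $g_{p'}(y)=\|y-p'\|^2-\|y-p\|^2+\delta^2$, which is affine in $y$. Membership of $z$ and $z'$ in $\mathcal{E}$ gives $g_{p'}(x+v)\ge -t$ and $g_{p'}(x-v)\ge t$, and these only yield $g_{p'}(x+v/2)\ge -t/2$.

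The missing idea is to use $x^0$, the point of $h(p)$ directly below $x$.
\begin{itemize}
\item $x^0$ lies in $\mathcal{E}$ precisely because $x\in V(p)$. Its reflection through $x^{\delta}$ lies $\delta^2$ above $h_{\delta}(p)$, hence in $\mathcal{E}$ by upward closure. So $x^0\in M_{\mathcal{E}}(x^{\delta})$.
\item $h(p)$ bounds $\mathcal{E}$ and $M_{\mathcal{E}}(x^{\delta})$ is symmetric about $x^{\delta}$. Hence every point of $M_{\mathcal{E}}(x^{\delta})$ has offset in $[-\delta^2,\delta^2]$ relative to $h_{\delta}(p)$, so $0\le t\le\delta^2$. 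If $\delta=0$ then $t=0$ and there is nothing to prove.
\item The segment from $x^0$ (offset $-\delta^2$) to $z$ (offset $t$) lies in the convex set $M_{\mathcal{E}}(x^{\delta})$. It meets $h_{\delta}(p)$ over $x+\rho v$ with $\rho=\delta^2/(\delta^2+t)\ge 1/2$.
\item By (a), $x+\rho v\in M_{\delta}(x)$. Since $M_{\delta}(x)$ is convex and contains $x$, also $x+v/2\in M_{\delta}(x)$, i.e., $z^{\downarrow}\in M^2_{\delta}(x)$.
\end{itemize}
This is exactly the paper's argument. It makes your separate $-v/2$ construction unnecessary, because the slice is automatically symmetric about $x^{\delta}$. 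In your algebraic formulation it reads
\[
g_{p'}(x+v/2)=\tfrac12 g_{p'}(x+v)+\tfrac12 g_{p'}(x)\ge\tfrac12(\delta^2-t)\ge 0,
\]
using $g_{p'}(x)\ge\delta^2$ (from $x\in V(p)$) and $t\le\delta^2$ (from the constraint for $p'=p$ at $z'$).
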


\begin{proof}
We start with assertion~(a). By the definition of Macbeath region, 
\[
	M_{\mathcal{E}}(x^{\delta}) \cap h_{\delta}(p)
		~ = ~ [\mathcal{E} \cap (2 x^{\delta} - \mathcal{E})] \cap h_{\delta}(p)
		~ = ~ (\mathcal{E} \cap h_{\delta}(p)) \cap [(2 x^{\delta} - \mathcal{E}) \cap h_{\delta}(p)].
\]
Since $x^{\delta}$ lies on $h_{\delta}(p)$, a point of $\mathcal{E}$ contributes a point to $(2 x^{\delta} - \mathcal{E}) \cap h_{\delta}(p)$ only if this point lies on $h_{\delta}(p)$.%
\footnote{Otherwise, the vector $v$ between $x^{\delta}$ and this point is not parallel to $h_{\delta}(p)$, implying that neither $x^{\delta} + v$ nor $x^{\delta} - v$ lies on $h_{\delta}(p)$.}
Therefore, $(2 x^{\delta} - \mathcal{E}) \cap h_{\delta}(p) = 2 x^{\delta} - (\mathcal{E} \cap h_{\delta}(p))$, and hence
\[
	M_{\mathcal{E}}(x^{\delta}) \cap h_{\delta}(p)
		~ = ~ (\mathcal{E} \cap h_{\delta}(p)) \cap [2 x^{\delta} - (\mathcal{E} \cap h_{\delta}(p))].
\]
By projecting back to $\RE^d$, applying Lemma~\ref{exp-vor-lift.lem}, and observing that $(x^{\delta})^{\downarrow} = x$, we have
\begin{eqnarray*}
	(M_{\mathcal{E}}(x^{\delta}) \cap h_{\delta}(p))^{\downarrow}
		& = & (\mathcal{E} \cap h_{\delta}(p))^{\downarrow} \cap [2 (x^{\delta})^{\downarrow} - (\mathcal{E} \cap h_{\delta}(p))^{\downarrow}] \\
		& = & V_{\delta}(p) \cap (2 x - V_{\delta}(p)),
\end{eqnarray*}
which by the definition of Macbeath regions is $M_{\delta}(x)$, thus establishing (a).

In order to establish (b), let ${\delta'} = \sqrt{2}\delta$, so that ${\delta'}^{\kern+1pt 2} = 2 \delta^2$. The three hyperplanes $h(p) = h_0(p)$, $h_{\delta}(p)$ and $h_{\delta'}(p)$ are parallel to each other and adjacent pairs are separated by a vertical distance of $\delta^2$ (see Figure~\ref{macbeath-related.fig}(b)). The points $x^0$, $x^{\delta}$, and $x^{\delta'}$ lie vertically above $x$ on these hyperplanes, respectively, and are similarly spaced. Because $x^{\delta}$ is midway between $x^0$ and $x^{\delta'}$, we have $x^0 = 2x^{\delta} - x^{\delta'}$.

First, we assert that $M_{\mathcal{E}}(x^{\delta})$ lies entirely within the slab bounded by $h(p)$ and $h_{\delta'}(p)$. This is because $M_{\mathcal{E}}(x^{\delta})$ is centrally symmetric about $x^{\delta}$, which lies midway between these two hyperplanes, and no point of $M_{\mathcal{E}}(x^{\delta})$ can lie below $h(p)$, since this is a bounding hyperplane of $\mathcal{E}$. Second, we assert that $x^0 \in M_{\mathcal{E}}(x^{\delta})$. To see why, observe that $x \in V(p)$, which implies that its vertical projection $x^0$ lies on a facet of $\mathcal{E}$, thus $x^0 \in \mathcal{E}$. Also, the symmetric point about $x^{\delta}$, namely $x^{\delta'}$, also clearly lies within $\mathcal{E}$. Thus, we have $x^0 = 2 x^{\delta} - x^{\delta'} \in 2 x^{\delta} - \mathcal{E}$, which implies that $x_0 \in \mathcal{E} \cap (2 x^{\delta} - \mathcal{E}) = M_{\mathcal{E}}(x^{\delta})$.

Let $z$ be any point of $M_{\mathcal{E}}(x^{\delta})$. We will show that the vertical projection of $z$ onto $\RE^d$ lies within $M_{\delta}^2(x)$. There are two cases, depending on whether $z$ lies above or below $h_{\delta}(p)$. We consider just the former case, because the other one follows by the symmetry of $M_{\mathcal{E}}(x^{\delta})$ about $x^{\delta} \in h_{\delta}(p)$. By our earlier assertion, $z$ lies on or below $h_{\delta'}(p)$. Let $z'$ denote the point where the line segment $z x^0$ intersects $h_{\delta}(p)$ (see Figure~\ref{macbeath-related.fig}(b)). By convexity, $z'$ lies within $M_{\mathcal{E}}(x^{\delta}) \cap h_{\delta}(p)$, which by assertion~(a) projects vertically onto $M_{\delta}(x)$. Clearly, $\|z - x^0\|/\|z' - x^0\| \leq 2$, which implies that $z$ projects vertically onto a point that lies within the factor-2 scaling of $M_{\delta}(x)$ about $x$. By definition, this is $M_{\delta}^2(x)$, thus establishing~(b) and completing the proof.
\end{proof}

\begin{figure}[htbp]
 \centerline{\includegraphics[scale=0.40]{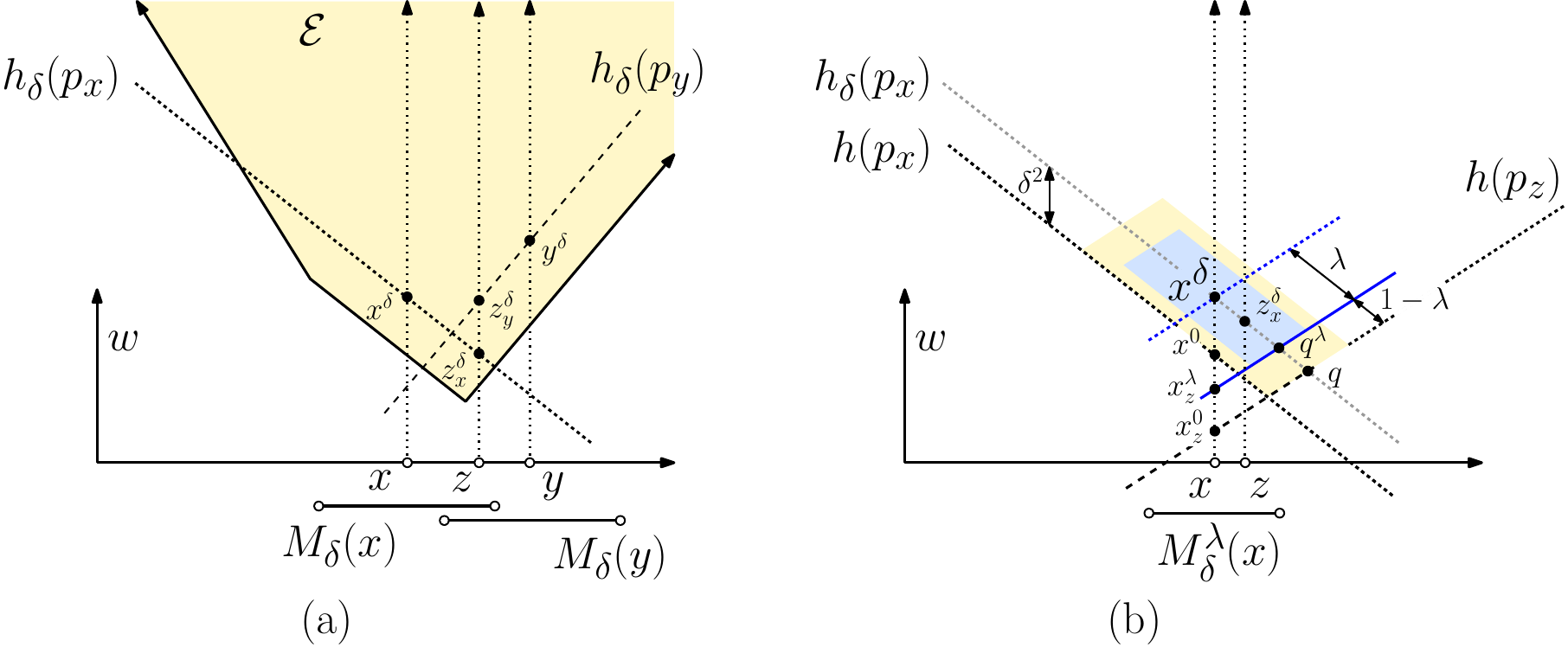}}
 \caption{Expansion-containment for distance-based Macbeath regions via the lifted polytope $\mathcal{E}$.} \label{macbeath-exp-no-bisector.fig}
\end{figure}

We can now present the proof of Lemma~\ref{vor-exp-con.lem}, which we restate below.

\VorExpCon*

\begin{proof}
Fix a point $z \in M_{\delta}^{\lambda}(x) \cap M_{\delta}^{\lambda}(y)$. Let $p_x = \nn(x)$, $p_y = \nn(y)$, and $p_z = \nn(z)$. If $p_x = p_y$, then the result follows directly from the expansion-containment properties of standard Macbeath regions~\cite{AFM17a}. Otherwise, let $z^0$, $z^\delta_x$ and $z^\delta_y$ be the vertical projections of $z$ onto $h(p_z)$, $h_{\delta}(p_x)$, and $h_{\delta}(p_y)$, respectively (see Figure~\ref{macbeath-exp-no-bisector.fig}(a)). By construction, both $z^\delta_x$ and $z^\delta_y$ lie above $z^0$ with a vertical separation of at most $\delta^2$ and at least $(1 - \lambda)\delta^2$.

To see this, let $x^0_z$ be the vertical lifting of $x$ onto $h(p_z)$ and $q = \overrightarrow{x^\delta z^\delta_x} \cap h(p_z)$ (see Figure~\ref{macbeath-exp-no-bisector.fig}(b)). By definition, $x^\delta$ lies above $x^0$ at a distance of $\delta^2$, while $x^0_z$ cannot lie above $x^0$ (for otherwise, $p_z$ would be closer to $x$ than $p_x$). It follows that $\|x^\delta - x^0_z\| \geq \delta^2$. Now, scaling by a factor of $\lambda$ with $x^\delta$ as the center, the point $x^0_z$ is mapped to a point $x^\lambda_0$, which lies above $x^0_z$ at a distance at least $(1-\lambda)\delta^2$. Similarly, the point $q$ is mapped to a point $q^\lambda$. Observing that $\overline{x^0_zq} \parallel \overline{x^\lambda_zq^\lambda}$, it follows that $q^\lambda$ lies above $h(p_z)$ at a distance at least $(1-\lambda)\delta^2$. The same bound on the vertical separation above $h(p_z)$ also applies to the point $z^\delta_x \in \overline{x^\delta q^\lambda}$. A similar argument holds for the vertical separation of $z^\delta_y$ above $h(p_z)$.

Let $z^\ast$ be the point above $z^0$ at a distance of $(1 - \lambda/2)\delta^2$, that is, the midpoint of the interval above $z^0$ containing both $z^\delta_x, z^\delta_y$. By construction, $z^0, z^\delta_x, z^\delta_y \in M_\mathcal{E}(z^\ast)$, since the lifted polytope $\mathcal{E}$ is unbounded above $z^0$. We seek a scaling factor $\varrho\lambda$ about $z^\ast$ such that $z^0$ is mapped to a point $z^{\varrho\lambda}$ with $\|z^\ast - z^{\varrho\lambda}\| \geq \lambda/2\cdot\delta^2$, i.e., the radius of the interval, guaranteeing that $z^\delta_x, z^\delta_y \in M_\mathcal{E}^{\varrho\lambda}(z^\ast)$. Writing $\lambda/2\cdot\delta^2 \leq \varrho\lambda\cdot(1 - \lambda)\delta^2$, we have $\varrho \geq \frac{1}{2(1-\lambda)}$.

Hence, $M^{\varrho\lambda}_\mathcal{E}(x^\delta) \cap M^{\varrho\lambda}_\mathcal{E}(z^\delta_\ast) \neq \emptyset$ and $M^{\varrho\lambda}_\mathcal{E}(y^\delta) \cap M^{\varrho\lambda}_\mathcal{E}(z^\delta_\ast) \neq \emptyset$. Applying standard expansion-containment, we obtain
\[
	M_{\delta}^{\alpha\lambda}(y) 
		~ \subseteq ~ M_{\mathcal{E}}^{\alpha\lambda}(y^{\delta})^{\downarrow} 
		~ \subseteq ~ M_{\mathcal{E}}^{\alpha\varrho\lambda}(y^{\delta})^{\downarrow} 
		~ \subseteq ~ M_{\mathcal{E}}^{\beta'\lambda}(z^{\ast})^{\downarrow} 
		~ \subseteq ~ M_{\mathcal{E}}^{\beta\lambda/2}(x)^{\downarrow}
		~ \subseteq ~ M_{\delta}^{\beta\lambda}(x).
\]
where $\varrho = \frac{1}{2(1 - \lambda)}$, $\beta' = \frac{2 + \alpha\varrho(1+\lambda)}{1 - \lambda}$ and $\beta = 2 \frac{2 + \beta'(1+\lambda)}{1 - \lambda} = \frac{4}{(1-\lambda)^2} + \frac{(1+\lambda)^2}{2(1-\lambda)^3}\alpha$. This establishes Lemma~\ref{vor-exp-con.lem}.
\end{proof}

\subsection{Bounding the Number of Distance-based Macbeath Regions.} \label{sec:vor-ecc}

In this section, we present a proof of Lemma~\ref{vor-ecc.lem}, which restate below.

\rsVorECC*

We begin by recalling a useful lemma, proved by Arya {\etal}~\cite[Lemma~{3.2}]{AFM17c}, which bounds the number of Macbeath regions close to the boundary of the convex body. Before presenting the lemma, we recall some basic concepts. For $0 < \kappa \leq 1$, we say that $K$ is in \emph{$\kappa$-canonical form} if $K$ can be sandwiched between two Euclidean balls centered at the origin, one of diameter $\kappa$ and the other of diameter $1$ (see Figure~\ref{cap.fig}(a)). Given a convex body $K$, a \emph{cap} $C$ is defined to be the nonempty intersection of the convex body $K$ with a halfspace (see Figure~\ref{cap.fig}(b)). Letting $h$ denote the hyperplane bounding this halfspace, define $C$'s \emph{base} to be $h \cap K$, and define $C$'s \emph{width} to be the distance between $h$ and this supporting hyperplane.

\begin{lemma}[Arya {\etal}~\cite{AFM17c}] \label{ecc.lem}
Let $K \subset \RE^d$ be a convex body in $\kappa$-canonical form. Let $\Delta \leq \kappa/12 d$ be a real parameter, and let $\lambda \leq 1/5$ be a constant. Let $\mathcal{C}$ be a set of caps, whose widths lie between $\Delta$ and $2\Delta$, such that the Macbeath regions $M^{\lambda}_K(x)$ centered at the centroids $x$ of the bases of these caps are disjoint. Then $|\mathcal{C}| = O(1/\Delta^{(d-1)/2})$ (see Figure~\ref{cap.fig}(c)).
\end{lemma}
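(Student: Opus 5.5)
The plan is a volume argument carried out simultaneously in $K$ and in its polar body, combined through Cauchy--Schwarz. This is the standard route for economical cap coverings. A plain volume count inside $K$ cannot work. Disjoint regions near the boundary lie in a shell of volume $O(\Delta)$, but a cap of width $\Delta$ can have volume as small as $\Theta(\Delta^d)$, for instance near a vertex. Such a count would only give $O(1/\Delta^{d-1})$. Duality compensates: caps that are thin in $K$ correspond to fat caps in the polar.

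\textbf{Step 1: each Macbeath region is comparable to its cap.} For each cap $C \in \mathcal{C}$ with base centroid $x$, I would first show that $M^{\lambda}_K(x)$ has volume $\Theta(\vol(C))$, with constants depending only on $d$. This follows from the standard facts that the Macbeath region at the base centroid contains a constant-factor shrinking of the cap's "inner half". I would also show that $M^{\lambda}_K(x)$ lies in the constant-factor expansion $C^{c}$ of $C$. Because $K$ is in $\kappa$-canonical form and $\Delta \le \kappa/12d$, every such expanded cap has width $O(\Delta)$. So all these Macbeath regions sit in a boundary shell of $K$ of width $O(\Delta)$, whose volume is $O(\Delta)$ since $K$ lies in a unit ball. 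Disjointness then gives
\[
\sum_{C \in \mathcal{C}} \vol(C) ~=~ O(\Delta).
\]

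\textbf{Step 2: the dual bound.} Let $K^*$ be the polar of $K$; it is again in canonical form, with $\kappa$ changing by constants. To each cap $C$, bounded by hyperplane $h$, associate the dual cap $C^*$ of $K^*$. This is the cap cut off near the point dual to $h$, and it also has width $\Theta(\Delta)$. The key inequality, a local Mahler-type bound, is
\[
\vol(C)\cdot\vol(C^*) ~=~ \Omega(\Delta^{d+1}).
\]
Next I would argue that the Macbeath regions of $K^*$ at the base centroids of the dual caps are also pairwise disjoint, or at least have bounded overlap, when $\lambda$ is small enough; the hypothesis $\lambda \le 1/5$ is used here. This step relies on expansion-containment: overlapping dual regions would force the primal regions to overlap after a constant expansion, and that contradicts disjointness up to a constant-factor packing loss. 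The shell argument of Step~1, applied in $K^*$, then yields
\[
\sum_{C \in \mathcal{C}} \vol(C^*) ~=~ O(\Delta).
\]

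\textbf{Step 3: combine.} Apply Cauchy--Schwarz:
\[
|\mathcal{C}|\cdot\Delta^{(d+1)/2} ~\lesssim~ \sum_{C} \sqrt{\vol(C)\vol(C^*)} ~\le~ \Big(\sum_{C}\vol(C)\Big)^{1/2}\Big(\sum_{C}\vol(C^*)\Big)^{1/2} ~=~ O(\Delta).
\]
This gives $|\mathcal{C}| = O(1/\Delta^{(d-1)/2})$. I expect the main obstacle to be Step~2, specifically two things: proving the product lower bound $\vol(C)\vol(C^*) = \Omega(\Delta^{d+1})$, and transferring disjointness to the dual Macbeath regions. My first attempt at the product bound would be to sandwich $C$ between a cone and a cylinder over its base (by convexity and the canonical-form ball). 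The polar of the base section then controls the dual cap, and the Blaschke--Santal\'o/Mahler inequality in dimension $d-1$ on the bases gives the bound.
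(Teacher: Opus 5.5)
The paper does not prove this lemma; it quotes it from Arya, da Fonseca and Mount, so there is no in-paper argument to compare against. Your overall plan is the right one, and it matches the duality technique behind the cited result, which treats Macbeath regions of $K$ and of $K^*$ as approximate polar pairs and then invokes Mahler-volume bounds. The plan has four parts: a shell-volume bound for the Macbeath regions in $K$, the same bound in the polar $K^*$, a Mahler-type lower bound $\vol(C)\vol(C^*) = \Omega(\Delta^{d+1})$, and Cauchy--Schwarz. Steps~1 and~3 are sound. The containment $C \subseteq M^{O(d)}_K(x)$ for the base centroid $x$ is exactly where the hypotheses $\Delta \le \kappa/12d$ and ``centroid of the base'' are used, and the arithmetic in Step~3 gives the claimed bound.

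The gap is Step~2, which carries essentially all of the content.

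\emph{The dual cap is undefined.} You never pin down $C^*$, yet both the product inequality and the bound $\sum_C \vol(C^*) = O(\Delta)$ depend on that choice. For example, tie the dual cap to a touching point of the parallel supporting hyperplane. Then many primal caps, say facet caps around a high-degree vertex, can be sent to dual caps spanning the same dual facet, and $\sum_C \vol(C^*)$ is no longer $O(\Delta)$.

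\emph{Disjointness does not transfer by expansion-containment.} That property compares two Macbeath regions of the \emph{same} body. It says nothing about how a region of $K^*$ relates to a region of $K$.

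What is missing is a cross-body lemma. Let $x$ be the base centroid of a cap of width $\Delta$ and $x^*$ the base centroid of its dual cap. The lemma you need says that, after the natural normalization, $M_K(x)$ and $M_{K^*}(x^*)$ are polar to each other up to constant factors. Such a lemma gives you two things:
\begin{itemize}
\item the product bound for the Macbeath regions, and hence for the caps by Step~1;
\item the fact that overlapping dual regions force the corresponding primal regions into constant expansions of one another.
\end{itemize}
Bounded overlap in $K^*$, and hence $\sum_C \vol(C^*) = O(\Delta)$, then follows from expansion-containment and packing in $K$. Without this lemma, Step~2 is an assertion.

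Two smaller corrections to your plan for the product bound.
\begin{itemize}
\item In canonical form, the outer body for $C$ should be the frustum $\{tb : b \in B,\ 1 \le t \le 1 + O(\Delta/\kappa)\}$, obtained by projecting centrally from the origin onto the base $B$. A cylinder over $B$ does not work, because parallel cross-sections of a cap can drift sideways. The frustum still gives $\vol(C) = \Theta(\Delta \cdot \mathrm{vol}_{d-1}(B))$.
\item The inequality you need on the bases is a \emph{lower} bound on $\mathrm{vol}_{d-1}(B)\cdot\mathrm{vol}_{d-1}((B-x)^\circ)$. In fixed dimension this follows from John's theorem, because $x$ is the centroid of $B$. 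Blaschke--Santal\'o bounds this product from above, so it is not the inequality you want.
\end{itemize}
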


\begin{figure}[htbp]
  \centerline{\includegraphics[scale=0.40]{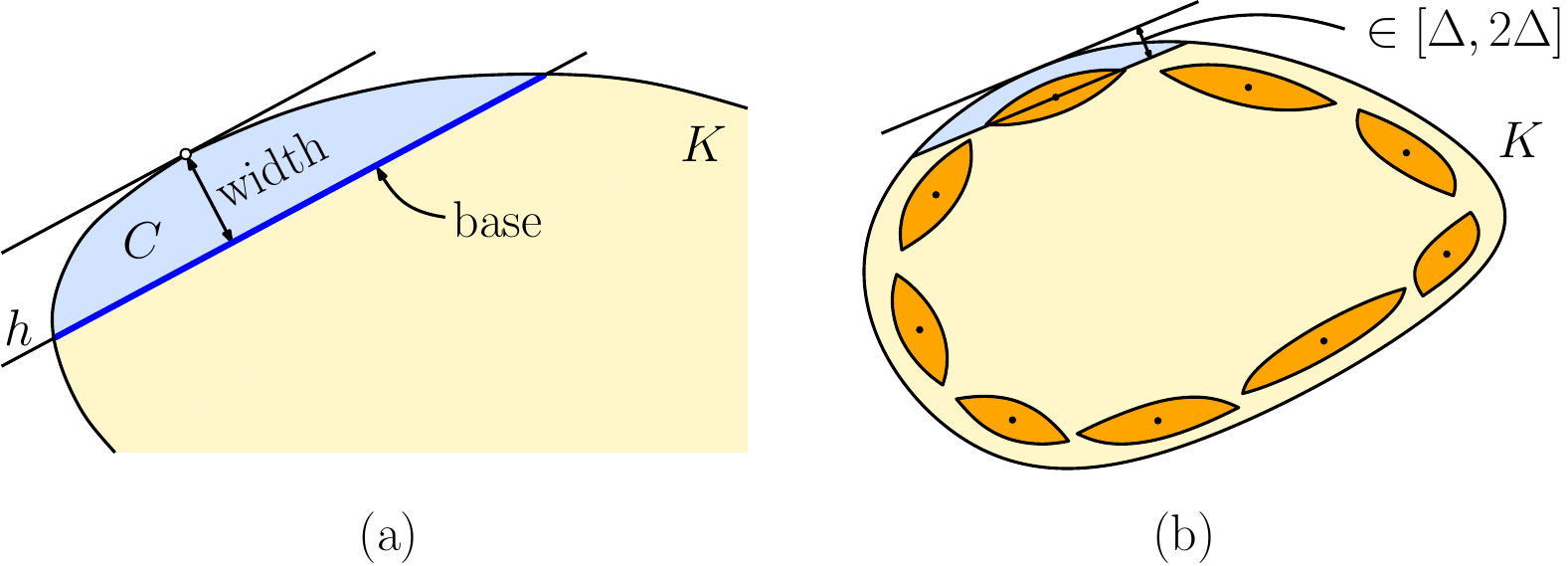}}
  \caption{(a) Canonical form, (b) cap definitions, and (c) the cap cover.} \label{cap.fig}
\end{figure}

To prove Lemma~\ref{vor-ecc.lem}, we consider two cases, depending on which set is the inner set and which is the outer set. We may assume that $\lambda$ is smaller than any fixed constant, since decreasing the value of $\lambda$ can only increase the worst-case size of the point set. Let $r$ denote the radius of ball $b$ containing the inner set. Without loss of generality, let us translate space so this ball is centered at the origin.

First, consider the case where $P$ is the inner set and $w$ the outer set. That is, $P$ is contained within a ball $b$ and $w$ lies outside of $2 b$ but the points of $X$ lie within $w \cap \gamma b$ (recall Figure~\ref{evd-setup.fig}(b)). We apply the lifting transformation (recall Section~\ref{sec:vor-exp-con}) to the points of $P$ resulting in the polyhedron $\mathcal{E} = \mathcal{E}(P)$. Let us restrict this unbounded polyhedron by intersecting it with a vertical cylinder whose horizontal cross-section is $2 \gamma b$. (For the purposes of containing the points of $X$, it suffices that the cross-section be just $\gamma b$, but we increase the radius to allow the Macbeath regions room to grow.) Let $\mathcal{E}_{\gamma}$ denote the resulting convex body. Clearly, $\mathcal{E}_{\gamma}$ has a horizontal diameter of at most $4 \gamma r$. Since each point $p \in P$ is in the inner set, and hence lies within distance $r$ of the origin, it follows that the absolute coordinates of the gradient of the lifted tangent hyperplane $h(p)$ are at most $2 r$. This implies that $\mathcal{E}_{\gamma}$ has a vertical extent of at most $2 \sqrt{d} r (4 \gamma r) = O(\gamma r^2)$. 

The Voronoi-based Macbeath regions are equivalent up to constant factors to the Macbeath regions of $\mathcal{E}$ that lie at distance $\Theta(\delta^2)$ from the lower portion of the boundary $\mathcal{E}$. More formally, define $\lambda' = \lambda/2\sqrt{d}$. For any $x \in X$, Eq.~\eqref{eq:john-thm} implies that $E^{\lambda}_{\delta}(x) \supseteq M^{\lambda/\sqrt{d}}_{\delta}(x)$. Also, by Lemma~\ref{macbeath-related.lem}(b), we have $M^{\lambda/\sqrt{d}}_{\delta}(x) \supseteq M^{\lambda/2\sqrt{d}}_{\mathcal{E}}(x^{\delta})^{\downarrow}$. Therefore,
\[
	E^{\lambda}_{\delta}(x)
		~ \supseteq ~ M^{\lambda/\sqrt{d}}_{\delta}(x)
		~ \supseteq ~ M^{\lambda/2\sqrt{d}}_{\mathcal{E}}(x^{\delta})^{\downarrow}
		~     =     ~ M^{\lambda'}_{\mathcal{E}}(x^{\delta})^{\downarrow}.
\]
Because the vertical projections are disjoint, it follows that the Macbeath regions $M^{\lambda'}_{\mathcal{E}}(x^{\delta})$ which give rise to these projections are also disjoint. In summary, each point $x \in X$ is in one-to-one correspondence with a lifted point $x^{\delta}$ at vertical distance $\delta^2$ from the lower boundary of $\mathcal{E}$ such that the $\lambda'$-scaled Macbeath regions at these points are pairwise disjoint. Because all the points of $X$ lie within $\gamma b$, it follows that the lifted points $x^{\delta}$ all lie within $\mathcal{E}_{\gamma}$. 

In order to apply Lemma~\ref{ecc.lem}, we need the body to be in canonical form. For any suitably small constant $\kappa$ (depending on the dimension), we can map $\mathcal{E}_{\gamma}$ into $\kappa$-canonical form by scaling each of its horizontal components by $\Theta(1/\gamma r)$ and each of its vertical components by $\Theta(1/\gamma r^2)$. Let $\mathcal{E}^*_{\gamma}$ denote the resulting body. The vertical displacement of each hyperplane $h_{\delta}(p)$ prior to this scaling was $\delta^2$, and after scaling it is $\Delta = \Theta(\delta^2/(\gamma r^2)) = \Theta(\gamma \eps)$. After scaling, the absolute coordinates of the gradients are reduced to $O(r (\gamma r)/(\gamma r^2)) = O(1)$, and hence orthogonal displacements and vertical displacements from the lower boundary of $\mathcal{E}_{\gamma}$ are equivalent to within constant factors.

The linear transformation that scales $\mathcal{E}_{\gamma}$ into $\mathcal{E}^*_{\gamma}$ does not affect disjointness, but now all the points are at distance $\Delta  = \Theta(\gamma \eps)$ from the boundary of $\mathcal{E}^*_{\gamma}$. By applying Lemma~\ref{ecc.lem} in dimension $d+1$ (due to lifting), where $\mathcal{E}^*_{\gamma}$ plays the role of $K$ and $\lambda'$ plays the role of $\lambda$, we conclude that $|X| = O(1/\Delta^{((d+1)-1)/2} = O(1/(\gamma\eps))^{d/2})$, as desired.

Next, consider the case where $w$ is the inner set and $P$ the outer set (see Figure~\ref{evd-setup.fig}(a)). Recall that we assume that the ball $b$ of radius $r$ containing $w$ is centered at the origin. Separation implies that all the points of $P$ lie outside of $2 b$. Again, we apply the lifting transformation to the points of $P$, letting $\mathcal{E} = \mathcal{E}(P)$ denote the resulting polyhedron. This time we restrict this polyhedron to the vertical cylinder whose horizontal cross-section is $2 b$. (As before, the points of $X$ lie within $b$, but we provide additional space for the Macbeath regions to grow into.) Let $\mathcal{E}_{\gamma}$ denote the resulting body. Clearly, it has a horizontal extent $4 r$. Let $p = \nn(x)$ denote the nearest point of $P$ to the origin $O$. We may assume that $\|p\| = \|p - O\| = O(\gamma r)$, since the bound only improves as the points are farther away. Since all the query points lie within distance $r \sqrt{d}$ of the origin, by the triangle inequality it follows that for any point to serve as the nearest neighbor of some query point in $w$, its distance from the origin can be at most $\|p\| + r \sqrt{d}$. Since each relevant point $P$ lies within distance $O(\gamma r)$ of the origin, it follows that the absolute coordinates of the gradient of the lifted tangent hyperplane $h(p)$ are $O(\gamma r)$, and since $\mathcal{E}_{\gamma}$ covers a region of diameter $O(r)$ about the origin, it follows that $\mathcal{E}_{\gamma}$ has a vertical extent $O(\gamma r^2)$. 

As in the previous case, we can map $\mathcal{E}$ into $\kappa$-canonical form by scaling each of its horizontal components by $O(1/r)$ and its vertical component by $O(1/\gamma r^2)$. The vertical displacement of each hyperplane $h_{\delta}(p)$ prior to this scaling was $\delta^2$. After scaling, it is $O(\delta^2/\gamma r^2)) = O(\gamma \eps r)$. After scaling, the absolute coordinates of the gradients are reduced to $O((\gamma r) r/(\gamma r^2)) = O(1)$. The remainder of the analysis follows in the same manner as the previous case.

\subsection{EVD Height.} \label{sec:evd-height}

In this section, we present a proof of Lemma~\ref{root-level.lem}, which shows that the number of levels in the EVD structure grows as the logarithm of $1/\eps$, irrespective of the size of $P$. This follows from the fact that the parameter $\delta_i$, which controls the expansion of the Voronoi cells, increases at a rate that is exponentially larger than the diameter of the region of interest.

\rsRootLevel*

\begin{proof}
We consider two cases, depending on which is the inner set and which is the outer set. First, suppose that $w$ is the inner set, and $P$ is the outer set. Let $x$ be $w$'s center, let $p = \nn(x)$, and let $r = \radius(b)$ where $b$ is the ball containing $P$. Let $i$ be the smallest integer such that $\|p - x\| \geq \gamma_i r$. By concentric separation, $\|p - x\| \geq 2 \gamma_0 r = \gamma_1 r$, and so $i \geq 1$. By the triangle inequality, any other point $p'$ that is the nearest neighbor of a point $q \in w$ satisfies $\|p' - x\| \leq \|p - x\| + \diam(w) \leq 2 \gamma_i r$. Therefore, $\|p' - p\| \leq 4 \gamma_i r$. Recalling the remarks at the start of Section~\ref{sec:approx-voronoi}, at level $i$ of the $\eps$-EVD this bisector is shifted by distance $\delta_i^2 / (2 \|p' - p\|)$. By applying the definition of $\delta_i$, we find that any bisecting hyperplane that might contribute to $V_{\delta_i}(p)$ in $w$ is shifted away by a distance of at least
\[
    \frac{\delta_i^2}{2 \|p' - p\|}
		~  =  ~ \frac{(\gamma_i \sqrt{\eps} r/2)^2}{2 \|p' - p\|}
        ~ \geq ~ \frac{\gamma^2_i \eps r^2}{32 \gamma_i r}
        ~  =  ~ \frac{2^i \eps}{32} \cdot \radius(b).
\]
By setting $i = \ell = O(\log (1/\eps))$, it is possible to make the shift distance larger than $b$'s radius by any constant factor. It follows that the Macbeath ellipsoid $E'_{\ell}(x)$ fully contains $b$ (and hence contains $w$). This implies that $|X_{\ell}| = 1$, and so the construction stops at this level.

Second, suppose that $P$ is the inner set and $w$ is the outer set. As observed earlier, we need only consider the portion of $w$ that lies within $(3/\eps)b$. Let $x$ be any point within $w \cap (3/\eps)b$, and let $p = \nn(x)$. For any point $p' \in P \setminus \{p\}$, we have $\|p' - p\| \leq 2 \gamma_0 r = 2 r$. Therefore, at level $i$ of the $\eps$-EVD, the bisector between $p$ and $p'$ is shifted away from the true bisector by a distance of at least
\[
    \frac{\delta_i^2}{2 \|p' - p\|} 
        ~ \geq ~ \frac{(\gamma_i \sqrt{\eps} r/2)^2}{4 r}
        ~  =  ~ \frac{\gamma^2_i \eps r^2}{16 r}
        ~  =  ~ \frac{4^i \eps}{16} \cdot \radius(b).
\]
By setting $i = \ell = O(\log (1/\eps))$, it is possible to make the shift distance larger than $(3/\eps) \cdot \radius(b)$ by any constant factor. It follows that the Macbeath ellipsoid $E'_{\ell}(x)$ contains $(3/\eps)b$, implying that $|X_{\ell}| = 1$, and the construction stops at this level.
\end{proof}

\subsection{Storage Bounds and the Number of Children in the EVD.} \label{sec:ann-child-bound}

In this section, we first provide a proof of Lemma~\ref{ann-child-bound.lem}. The proof makes use of the following technical result, which states that if the expansion factor $\delta$ of a convex body is doubled, the sizes of the Macbeath regions scale in the expected manner. We omit the proof, but it is a straightforward consequence of the linear behavior under scaling per the definition of Macbeath regions.

\begin{lemma} \label{delta-containment.lem}
Given a convex body $K$, $x \in K$, and any positive reals $\delta$ and $\lambda_0$:
\begin{enumerate} \setlength{\itemsep}{3pt}\setlength{\parsep}{2pt}%
\item[$(a)$] $M^{\lambda}_{K_{\delta}}(x) \subseteq M^{\lambda}_{K_{2\delta}}(x) \subseteq M^{2 \lambda}_{K_{\delta}}(x)$,

\item[$(b)$] $\vol(M^{\lambda}_{K_{\delta}}(x)) \geq \vol(M^{\lambda}_{K_{2\delta}}(x))/2^d$.
\end{enumerate}
\end{lemma}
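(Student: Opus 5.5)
The plan is to reduce everything to a halfspace-by-halfspace comparison. I read $K_{\delta}$ as the body obtained from $K = \bigcap_j \{y : \langle a_j, y\rangle \le b_j\}$ by pushing each bounding halfspace outward by an amount $s_j(\delta)\ge 0$ that is \emph{linear} in the expansion parameter, so $s_j(2\delta) = 2 s_j(\delta)$. This is the regime the statement needs to get the factor $2$. If the shifts were quadratic in $\delta$, as for the bisector shifts $\delta^2/2\|p'-p\|$, then doubling $\delta$ would quadruple them. In that case I would apply the same argument with the parameter reparametrized by $\delta^2$, or accept a factor-$4$ version. I take $x \in K \subseteq K_\delta$, so that $\langle a_j, x\rangle \le b_j$ for all $j$.

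\emph{First inclusion of (a).} Since $s_j(2\delta)\ge s_j(\delta)$, we have $K_\delta \subseteq K_{2\delta}$. Hence $K_\delta - x \subseteq K_{2\delta}-x$ and $x - K_\delta \subseteq x - K_{2\delta}$. The centred Macbeath region $(K-x)\cap(x-K)$ is monotone in $K$, and monotonicity survives scaling by $\lambda$ about $x$. This gives $M^{\lambda}_{K_\delta}(x)\subseteq M^{\lambda}_{K_{2\delta}}(x)$.

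\emph{Second inclusion of (a).} This is the key step. I will show $K_{2\delta} - x \subseteq 2(K_\delta - x)$. It suffices to check it one constraint at a time, since both sides are intersections over $j$ of the corresponding halfspaces; this is where linearity of the representation matters. A point $x+v$ lies in the $j$-th halfspace of $K_{2\delta}$ iff $\langle a_j, v\rangle \le b_j + 2s_j - \langle a_j,x\rangle$. The corresponding halfspace of $x + 2(K_\delta - x)$ is $\langle a_j, v\rangle \le 2(b_j + s_j - \langle a_j,x\rangle)$. The first right-hand side is at most the second exactly when $\langle a_j,x\rangle \le b_j$, which holds because $x\in K$.

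Negating gives $x - K_{2\delta} \subseteq 2(x - K_\delta)$. Intersecting the two inclusions yields $(K_{2\delta}-x)\cap(x-K_{2\delta}) \subseteq 2\big((K_\delta-x)\cap(x-K_\delta)\big)$. Scaling by $\lambda$ gives $M^{\lambda}_{K_{2\delta}}(x)\subseteq M^{2\lambda}_{K_\delta}(x)$.

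\emph{Part (b).} This follows at once from the second inclusion and homogeneity of volume in $\RE^d$:
\[
\vol\big(M^{\lambda}_{K_{2\delta}}(x)\big) \le \vol\big(M^{2\lambda}_{K_\delta}(x)\big) = 2^d \vol\big(M^{\lambda}_{K_\delta}(x)\big).
\]

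The only real obstacle is pinning down how $K_\delta$ depends on $\delta$. The containment argument is exact for shifts that scale linearly, and it degrades to a factor equal to the ratio $s_j(2\delta)/s_j(\delta)$ otherwise. For the intended application to levels $\delta_i = 2^i\sqrt{\eps}\,\radius(b)$, I would check which scaling the EVD levels actually use, and adjust the constant in Lemma~\ref{ann-child-bound.lem} accordingly if it is quadratic.
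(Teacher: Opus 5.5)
Your proof is correct and is the same argument the paper has in mind. The paper omits the proof, saying only that it follows from ``the linear behavior under scaling per the definition of Macbeath regions''; your constraint-by-constraint check that $K_{2\delta}-x\subseteq 2(K_\delta-x)$ whenever $x\in K$, followed by symmetrizing and using homogeneity of volume, is the rigorous form of that remark.

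Your caveat is also well founded. In the paper's actual application, $K_\delta=V_\delta(p)$, whose bisector shifts $\delta^2/(2\|p'-p\|)$ are quadratic in $\delta$, so the factor-$2$ statement fails there. For example, with $P=\{0,2\}\subset\RE$ and $x=1$ one gets $M_{2\delta}(1)=[1-\delta^2,1+\delta^2]=M^{4}_{\delta}(1)$. Your argument then yields the lemma only with $4\lambda$ and $4^d$, which merely worsens the $O(1)$ constant in Lemma~\ref{ann-child-bound.lem}.
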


\rsChildBound*

\begin{proof}
Consider any node $x \in X_i$ at level $i \geq 1$ of the DAG. By definition, its children are associated with the points $y \in X_{i-1}$ such that $E'_{\delta_i}(x) \cap E'_{\delta_{i-1}}(y) \neq \emptyset$. By containment, it follows that $M'_{\delta_i}(x) \cap M'_{\delta_{i-1}}(y) \neq \emptyset$. By the first inclusion of Lemma~\ref{delta-containment.lem}, $M'_{\delta_{i-1}}(y) \subseteq M'_{\delta_i}(y)$, and we have $M'_{\delta_i}(x) \cap M'_{\delta_i}(y) \neq \emptyset$. Next, by applying Lemma~\ref{vor-exp-con.lem} we obtain $M'_{\delta_i}(x) \subseteq M^7_{\delta_i}(y)$.

Recalling that the ellipsoids $E''_{\delta_{i-1}}(y)$ for $y \in X_{i-1}$ are disjoint, by invoking John's Theorem and the corollary to Lemma~\ref{delta-containment.lem}.
\begin{align*}
	\vol(E''_{\delta_{i-1}}(y))
        &~ \geq ~ \vol\big( M^{1/\sqrt{d}}_{\delta_{i-1}}(y) \big)
		 ~ \geq ~ \left(\frac{1}{2}\right)^d \vol\big(M^{1/\sqrt{d}}_{\delta_i}(y)\big)
		 ~ = ~ \left(\frac{1}{14\sqrt{d}}\right)^d \vol(M^{7}_{\delta_i}(y)) \\
		&~ \geq ~ \left(\frac{1}{14\sqrt{d}}\right)^d \vol(M'_{\delta_i}(x))
		 ~ \geq ~ \left(\frac{1}{14\sqrt{d}}\right)^d \vol(E'_{i}(x)).
\end{align*}
Thus, by a packing argument the number of children is at most $\big(14\sqrt{d}\big)^d = O(1)$.
\end{proof}

Next, we bound the total space by presenting a proof of Lemma~\ref{ann-space-bound.lem}. This is a consequence of Lemma~\ref{vor-ecc.lem} and the fact that the space is dominated by the number of nodes at the leaf level.

\rsAnnSpaceBound*

\begin{proof}
By Lemma~\ref{ann-child-bound.lem}, the out-degree of each node of the EVD is bounded, and hence it suffices to bound the number of nodes in the tree, or equivalently, $\sum_i |X_i|$. Let $b$ be the ball containing the inner set. Recall that each set $X_i$ is a maximal set of points lying within $w$ such that the ellipsoids $E^{\lambda_1}_{\delta_i}(x)$ are pairwise disjoint, where $\lambda_1 = 1/(16 \sqrt{d} + 1)$ and $\delta_i = \gamma_i \sqrt{\eps} \cdot \radius(b) /2$.

Suppose first that $P$ is the inner set and $w$ is the outer set. We have $\delta_i = 2^{i-1} \sqrt{\eps} \cdot \radius(b)$. By Lemma~\ref{vor-ecc.lem}, we have $|X_i| \leq c /(2^i \eps)^{d/2}$, for some constant $c$. Therefore, for some constant $c$ the total number of points of all levels is 
\begin{align*}
    \sum_{i=1}^{\ell} |X_i|
        & \leq ~ c \sum_{i=1}^{\ell} \left( \frac{1}{2^i \eps} \right)^{d/2}
        ~ \leq ~ c \left( \frac{1}{\eps} \right)^{d/2} \sum_{i=1}^{\ell} \left( \frac{1}{2^i} \right)^{d/2}
        ~ \leq ~ 2 c \left( \frac{1}{\eps} \right)^{d/2}.
\end{align*}
which is asymptotically $O(1/\eps^{d/2})$, as desired. The other case is similar.
\end{proof}

\subsection{AVD Construction and Bounds for Blending.} \label{sec:avd-blending}

We start by proving the basic separation properties of the AVD cells~\cite{AMM09a} based on the balanced box decomposition (BBD) tree~\cite{ArM00}

\separationProps*

Our proof will make use of the concept of a \emph{well-separated pair decomposition} (WSPD) of a point set. For definition and properties, see Callahan and Kosaraju~\cite{CaK95} or Har-Peled~\cite[Chapter 3]{Har11a}. Before presenting the proof, we begin by establishing a few useful inequalities involving WSPDs. Given a positive parameter $\sigma$, a pair of point sets $X$ and $Y$ is said to be \emph{$\sigma$-well-separated} if each set can be enclosed in a ball of some radius $r$ such that these balls are at distance at least $\sigma r$ from each other. Given a well-separated pair $(X,Y)$ consider the two balls enclosing these sets and the line segment joining their centers. Define the \emph{length} and \emph{midpoint} of the well-separated pair to be length of this line segment and its midpoint, respectively.

\begin{lemma} \label{wspd-props.lem}
Given $\sigma > 4$, consider a $\sigma$-well-separated pair $(X,Y)$ of length $\ell$ and center $z$. For any $x \in X$ and $y \in Y$, $\ell/2 < \|y - x\| < 3\ell/2$, and $\|x - z\| < 3\ell/4$.
\end{lemma}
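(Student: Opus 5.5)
The plan is to fix the configuration with enclosing balls and then apply the triangle inequality twice. Let $b_X$ and $b_Y$ be the balls of common radius $r$ enclosing $X$ and $Y$, with centers $c_X$ and $c_Y$, so that $\ell = \|c_X - c_Y\|$ and $z = (c_X+c_Y)/2$. The well-separation condition states that the distance between the two balls is at least $\sigma r$. The gap between two balls of radius $r$ equals $\ell - 2r$, so $\ell - 2r \geq \sigma r$, which gives $\ell \geq (\sigma+2) r > 6r$. In other words, $r < \ell/6$.

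For the first claim, take any $x \in X$ and $y \in Y$. Then $\|x - c_X\| \leq r$ and $\|y - c_Y\| \leq r$. The triangle inequality gives
\[
    \ell - 2r \;\leq\; \|y - x\| \;\leq\; \ell + 2r .
\]
Using $r < \ell/6$, the lower bound becomes $\|y-x\| > \ell - \ell/3 = 2\ell/3 > \ell/2$. The upper bound becomes $\|y - x\| < \ell + \ell/3 = 4\ell/3 < 3\ell/2$. The resulting bounds are stronger than required, which is fine, and strictness comes from $\sigma > 4$.

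For the second claim, we have $\|x - z\| \leq \|x - c_X\| + \|c_X - z\| \leq r + \ell/2$. Since $r < \ell/6$, this gives $\|x - z\| < \ell/6 + \ell/2 = 2\ell/3 < 3\ell/4$. The same argument applies symmetrically to points of $Y$.

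The only step needing care is the interpretation of the definition. The two enclosing balls must share the same radius $r$ (taken as the larger one if needed), and the ``distance between balls'' means the gap $\ell - 2r$, not the distance between centers. If instead the definition measured center distance, $\ell \geq \sigma r > 4r$ would still give $r < \ell/4$. Then $\|y-x\| > \ell/2$ and $\|y - x\| < 3\ell/2$ hold, and $\|x - z\| < 3\ell/4$ holds exactly. This explains the constants in the statement, so I would present the argument under the weaker center-distance reading, where the bound $r < \ell/4$ suffices for all three inequalities.
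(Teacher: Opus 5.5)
Your argument is correct. The paper states this lemma without proof, so there is no argument to compare against, and your triangle-inequality derivation supplies what is missing.

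Under the paper's own definition, the two radius-$r$ balls are at distance at least $\sigma r$, i.e., $\ell - 2r \ge \sigma r$. This gives you the stronger bounds $2\ell/3 < \|y-x\| < 4\ell/3$ and $\|x-z\| < 2\ell/3$. Your observation about the weaker center-distance condition $\ell \ge \sigma r > 4r$ is also right. It is implied by the gap condition, and it yields exactly the stated constants, so presenting the proof under it covers both readings.

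The only technicality is the degenerate case $r = 0$. There $\|y - x\| = \ell$, and strictness comes from $\ell > 0$, because the two sets of a pair are disjoint, rather than from $\sigma > 4$.
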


Our BBD tree involves a quadtree-like subdivision of the unit hypercube $[0,1]^d$. We first scale and translate the point set $P$ so it lies within a ball of sufficiently small ($\Theta(\eps)$) radius centered within this hypercube so that for any query point lying outside the hypercube, any point of $P$ may be taken as its $\eps$-ANN. Next, we compute a $\sigma$-WSPD for $P$ for any $\sigma > 4$. As shown by Callahan and Kosaraju~\cite{CaK95}, this can be done in $O(n \log n)$ time, resulting in $O(n)$ pairs. 

For each well-separated pair $(X,Y)$ of the WSPD, let $\ell$ and $z$ denote its length and center, respectively. Let $c_1$ and $c_2$ be two constants to be specified later. We generate the set of all quadtree boxes of diameter $\ell/c_1$ (rounded down to the next valid quadtree box size) that overlap a ball of radius $c_2 \ell$ centered at $z$. Observe that for each such box, there exists a point of $X \cup Y$ whose distance from the box is at most a constant factor larger than the box's diameter. The number of boxes generated per well-separated pair is $O((c_1 c_2)^d) = O(1)$, and hence there are altogether $O(n)$ boxes. We compute a BBD tree whose leaf-level subdivision is a refinement of these boxes. Arya {\etal} \cite{AMM09a} showed that such a tree can be constructed in $O(n \log n)$ time, and it is possible to determine the leaf cell containing a query point in $O(\log n)$ time. 

We will show that every leaf cell $w$ of the resulting BBD tree satisfies at least one of the following separation properties, where the values of the constant parameters $\alpha \geq 3$ and $\beta \geq 2$ are to be specified later:
\begin{enumerate}
\item[$(a)$] $P \cap \alpha b_w = \emptyset$, implying that $w$ is concentrically $\alpha$-separated from $P$ (see Figure~\ref{separation.fig}(a)).

\item[$(b)$] $\alpha b_w$ contains exactly one point of $P$, which lies within $w$. This point is the nearest neighbor of any query point in $w$ (see Figure~\ref{separation.fig}(b)).

\item[$(c)$] There exists a ball $b'_w$ contained within $\alpha b_w$ that is concentrically $\beta$-separated from $w$, and the nearest neighbor of any point within $w$ lies within $P \cap b'_w$ (see Figure~\ref{separation.fig}(c)).
\end{enumerate}

\noindent
Let $x \in P$ be the nearest neighbor of $w$'s center. We consider three cases:
\begin{itemize}
\item If $x \notin \alpha b_w$, then (a) holds.

\item Suppose that $x \in \alpha b_w$, but $x$ is the only point of $P$ lying within $(\alpha+2) b_w$. If $x \in w$, then (b) holds. Otherwise, we will show that (c) holds. This is because a single point lying outside of $w$ is trivially concentrically $\beta$-separated from $w$, and the distance of any query point in $w$ to $x$ is at most $\alpha + 1$ times the radius of $b_w$ while the distance to its next closest point of $P$, which lies outside of $(\alpha + 2) b_w$, is strictly greater than $(\alpha + 2 - 1) = \alpha + 1$ times this radius.

\item Otherwise, there is at least one more point of $P$, other than $x$, within $(\alpha + 2) b_w$. We will show that property (c) holds. Let $y$ be the farthest such point from $x$. Let $b'_w$ be the ball of radius $\|y - x\|$ centered at $x$. Clearly, all the points of $P$ that lie within $(\alpha + 2) b_w$ lie within $b'_w$. Consider the pair $(X,Y)$ of the WSPD such that $x \in X$ and $y \in Y$. Let $\ell$ and $z$ denote its length and center, respectively. Since both $x$ and $y$ lie within $(\alpha + 2) b_w$, $\|y - x\| \leq 2\cdot(\alpha + 2) r_w$. We assert that by selecting $c_1 > 2\cdot(\alpha + 2)$, then $\dist(z,w) \geq c_2 \ell$. If not, $w$ overlaps the ball of radius $c_2 \ell$ centered at $z$, and by the above construction and Lemma~\ref{wspd-props.lem}(a) it follows that 
\[
	r_w ~  =   ~ \frac{1}{2} \diam(b_w)
		~ \leq ~ \frac{1}{2} \frac{\ell}{c_1}
		~ \leq ~ \frac{\|y - x\|}{c_1}
		~  <   ~ \frac{2\cdot(\alpha + 2)\cdot r_w}{c_1}
		~  <   ~ r_w,
\]
yielding a contradiction. By combining this bound on $\dist(z,w)$ with the triangle inequality and Lemma~\ref{wspd-props.lem}, we have 
\[
	\dist(x,w) 
		~ \geq ~ \dist(z,w) - \|x - z\| 
		~  >   ~ c_2 \ell - \frac{3 \ell}{4}
		~  >   ~ (c_2 - 1) \ell 
		~  >   ~ \frac{2(c_2 - 1)}{3} \|y - x\|.
\]
Since $\|y - x\|$ is the radius of $b'_w$, it follows that by setting $c_2 \geq \frac{3}{2}\beta + 1$, $b'_w$ is concentrically $\beta$-separated from $w$, as desired. As observed before, with $x \in \alpha b_w$, the nearest neighbor of any query point in $w$ cannot lie outside $(\alpha + 2) b_w$, and must therefore lie within $b'_w$.
\end{itemize} 

Next, we bound the number of AVD cells we need to consider for blending. Upon receiving the query point $q$, the standard AVD locates the leaf cell $w^\ast$ containing $q$ in $O(\log n)$ time. In contrast, blending computations require that we allow AVD cells $\{w\}$ to overlap by working instead with their expanded enclosing balls $\{2 b_w\}$ instead. Hence, in order to process the query $q$, we need to enumerate all AVD cells $w$ whenever $q \in 2b_w$, which is a standard quadtree operation. In what follows, we derive an upper bound on the number of such cells, establishing the following lemma.

\avdLookup*

We take the above analysis a step further, and consider well-separated pairs (WSPs) involving points \emph{outside} $\alpha b_w$. We are particularly interested in AVD cells generated by such WSPs and overlapping $\gamma b_w$, for a constant $\gamma \in (1, \alpha)$. Fix one such WSP, and let $z$ and $\ell$ denote its center and length, respectively. Now, the ball of radius $c_2 \ell$ centered at $z$ cannot completely lie within $\alpha b_w$, as it necessarily includes the points outside $\alpha b_w$. Recalling that such a ball generates a set of AVD cells as quadtree boxes of diameter $\ell/c_1$, we have
\[
    \frac{\ell}{c_1} + 2\cdot c_2 \cdot \ell 
        ~ \geq ~ (\alpha - \gamma) r_w 
        ~ \implies ~ 
    \ell 
        ~ \geq ~ \frac{\alpha - \gamma}{1/c_1 + 2\cdot c_2}r_w .\]

\begin{figure}[htbp]
  \centerline{\includegraphics[scale=0.40]{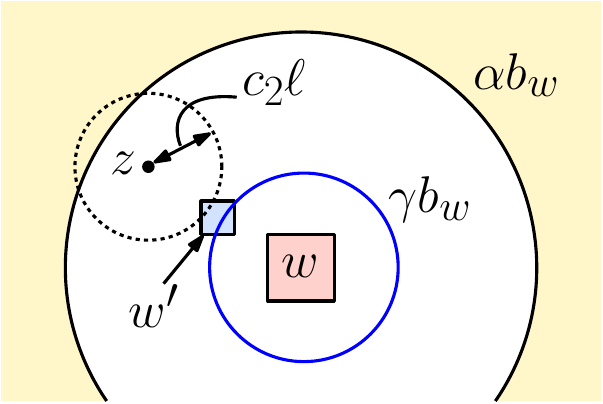}}
  \caption{Bounding AVD refinement due to outer points in the neighborhood of a leaf cell $w$.}
  \label{blend_bound.fig}
\end{figure}

It follows that the smallest such AVD cell $w'$ has diameter
\[
    \diam(w') 
        ~ =    ~ \frac{\ell}{c_1} 
        ~ \geq ~ \frac{\alpha - \gamma}{1 + 2\cdot c_1 \cdot c_2}r_w 
        ~ =    ~ \frac{\alpha - \gamma}{2 + 4\cdot c_1 \cdot c_2}\diam(w),
\]
see Figure~\ref{blend_bound.fig}. Hence,
\[
    \frac{\diam(w)}{\diam(w')} 
        ~ \leq ~ \frac{2 + 4\cdot c_1 \cdot c_2}{\alpha - \gamma} 
        ~ := ~ \Delta.
\]
Observe that whenever $\diam(w_1) \leq \diam(w_2)$, then
\[
    w_2 \cap \gamma b_{w_1} 
        ~ \neq ~ \emptyset 
        ~ \implies ~
    w_1 \cap \gamma b_{w_2} 
        ~ \neq ~ \emptyset.
\]
Hence, for any pair of AVD cells $w_1$ and $w_2$ such that
\[
    w_2 \cap \gamma b_{w_1} 
        ~ \neq ~ \emptyset \text{~~or~~} 
    w_1 \cap \gamma b_{w_2} 
        ~ \neq ~ \emptyset, 
\]
where the smaller cell was not generated by WSPs only involving the points of an inner cluster with respect to the second, we have by symmetry that
\[
    \frac{1}{\Delta} 
        ~ \leq ~ \frac{\diam(w)}{\diam(w')} 
        ~ \leq ~ \Delta. 
\]
For example, by setting $c_2 = 4$, $\gamma = 2$, $\alpha = 5$, and $c_1 = 15$, we have $\Delta < 41$.

It remains to account for blending between AVD cells and the smaller cells generated by points within their inner clusters. Fix an AVD cell $w$ with an inner cluster within the ball $b'_w$, and let $w'$ denote an AVD cell generated by a WSP with points in $b'_w$. The blending region for $w$ is chosen to keep the number of such cells small. Instead of blending over the entire $\gamma b_w$, we use a composite mollifier that effectively nullifies the contribution of $w$ within a suitable expansion of $\varrho b'_w$ with $\varrho \geq \beta$ (see Figure~\ref{inner_cluseter_blending.fig}). Hence, for AVD cells lying completely within $\varrho b'_w$, they need not enumerate over cells like $w$. Similarly, if the blending ball for $w'$ lies completely within $\varrho b'_w$, then it cannot overlap $w$ and no query point in $w$ needs to blend in the approximations computed for $w'$.

\begin{figure}[htbp]
  \centerline{\includegraphics[scale=0.40]{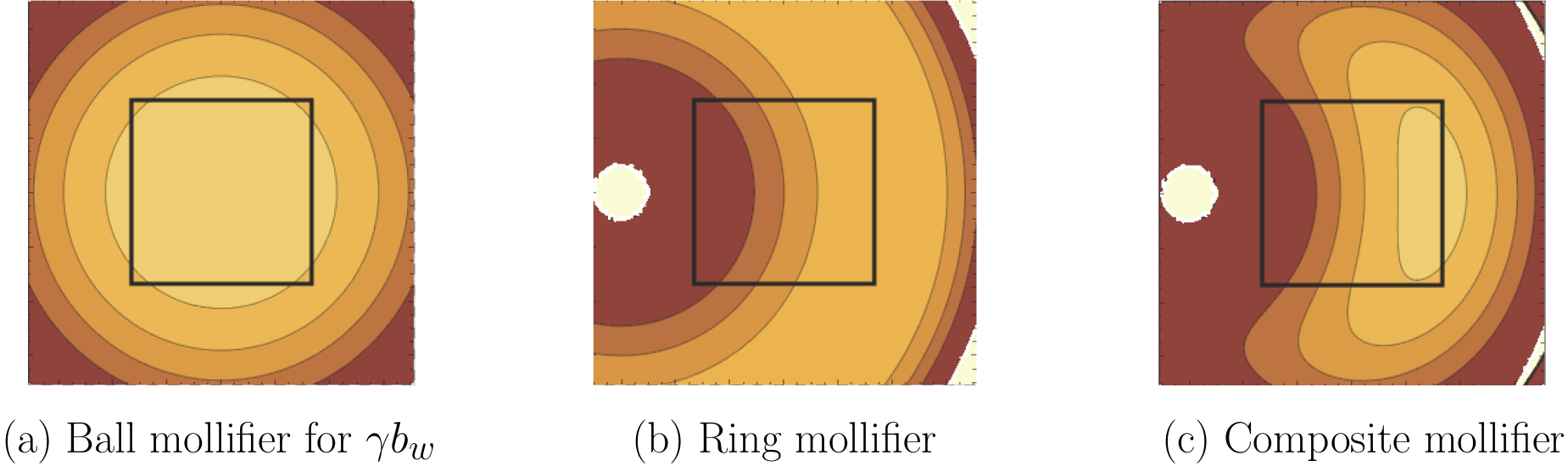}}
  \caption{Defining the blending region for an AVD cell $w$.}
  \label{inner_cluseter_blending.fig}
\end{figure}

With the composite mollifier, there remain two cases to consider: (a) The blending ball for $w'$, $\gamma b_{w'}$, does not lie completely within $\varrho b'_w$, and (b) $w'$ itself does not lie completely within $\varrho b'_w$. In the first case, some query points in $w$ may need to blend in the approximations computed for $w'$, while in the second case it is the other way around. Note that Case(b) implies Case(a), but the blending required for query points in $w$ and $w'$ may be different.

\begin{figure}[htbp]
  \centerline{\includegraphics[scale=0.40]{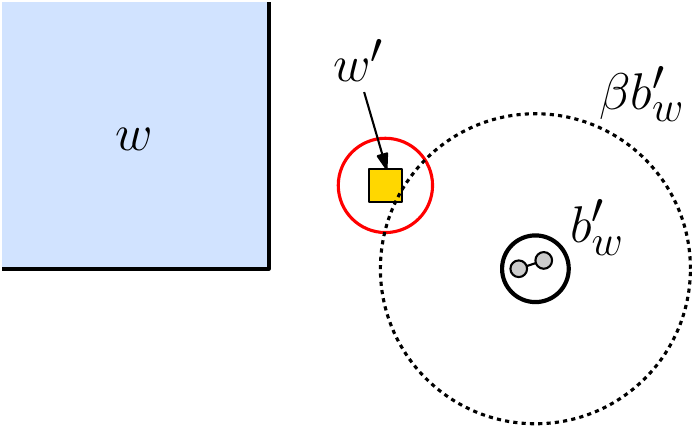}}
  \caption{Case(a): Bounding the diameter of AVD cells, generated by points in the inner cluster within $b'_w$, whose blending balls extend beyond the separation ball $\beta b'_w$.}
  \label{inner_blending.fig}
\end{figure}

For query points in $w$, we consider Case(a). Similar to the analysis above, we proceed to bound the possible values of the diameter of $w'$ to a small range of values. Only this time we do it relative to the radius of $b'_w$, which we denote by $r$. Let $z$ and $\ell$ the center and length, respectively, of the WSP within $b'_w$ which generated $w'$. It follows that $\ell \leq 2r$. Let $x$ denote the center of $w'$, and $y$ the point in $\gamma b_{w'}$ farthest from $z$, and observe that $z \in b'_w$, while $y \notin \beta b'_w$. It follows that for Case(a), we have:
\begin{align*}
    (\beta - 1) r 
        & ~ \leq ~ \dist(y, z) 
          ~ \leq ~ \dist(y, x) + \frac{\diam(w')}{2} + \dist(x, z) 
          ~ \leq ~ \gamma \frac{\ell}{2 c_1} + \frac{\ell}{2c_1} + c_2 \ell  \\
        & ~ =    ~ \frac{\gamma + 1 + 2 c_1 c_2}{2c_1}\ell,
\end{align*}
implying that $\ell \geq 2 c_1(\beta - 1)r/(\gamma + 1 + 2 c_1 c_2)$. For our choice of parameters, we get that $r/5 \leq \ell \leq 2 r$. Hence, $\diam(w)$, which is $\ell/c_1$ rounded down to a power of $2$, ranges over a set of constant size. By standard packing arguments, the number of all such boxes that overlap the ball $\beta b'_w$ is $O(1)$.

For query points in AVD cells like $w'$, we need to bound the number of larger cells such as $w$ whose blending regions may overlap $w'$. Towards this bound, we define $\varrho b'_w$ as the expansion of $b'_w$ that goes halfway between $\beta b'_w$ and $w$ (see Figure~\ref{inner_blending_2.fig}).

\begin{figure}[htbp]
  \centerline{\includegraphics[scale=0.40]{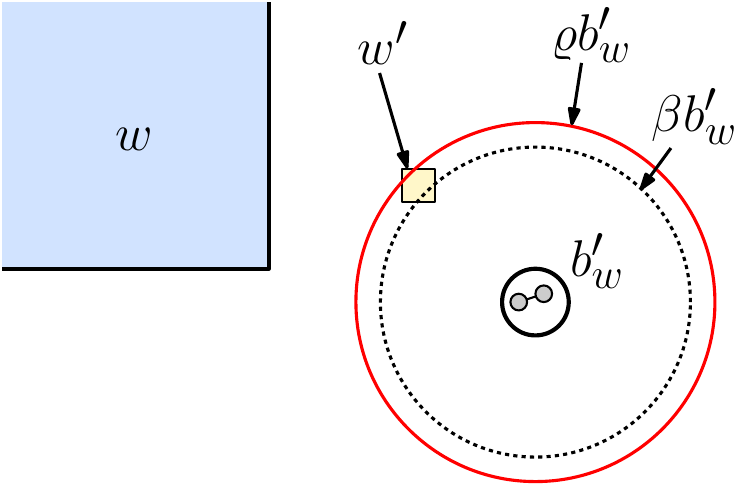}}
  \caption{Case(b): Bounding the number of AVD cells $w$ sharing an inner cluster in $b'w$ such that smaller AVD cells $w'$ at the periphery of $\beta b'_w$ may need to blend in $w$'s approximations.}
  \label{inner_blending_2.fig}
\end{figure}

Let $x$ denote the center of $b'_w$ and $y$ denote the point in $w'$ farthest from $x$. We also denote by $z$ and $\ell$ the center and length of the WSP that generated $w'$, and denote by $r$ the radius of $b'_w$; as before, we have $\ell \leq 2r$. Thus,
\[
    \dist(y, x) 
        ~ \leq ~ \diam(w') + \dist(w', z) + \dist(z, c) 
        ~ \leq ~ \frac{\ell}{c_1} + (\beta + 1) r + r 
        ~ \leq ~ \left( \frac{1}{c_1} + \beta + 2 \right) r.
\]
Hence, for $w'$ to extend outside $\varrho b'_w$, we must have:
\[
    \dist(w, \beta b'_w)/2 
        ~ \leq ~ \dist(y, c) - \beta r 
        ~ \leq ~ (1/c_1 + 2)r 
    \implies 
    \dist(w, c) 
        ~ \leq ~ (2/c_1+4+\beta)r.
\]
Towards a packing argument, we note that as $b'_w$ is an inner cluster for $w$, we have:
\[
    (\beta-1)r 
        ~ \leq ~ \alpha\cdot\diam(w)/2 
    \implies 
    \diam(w) 
        ~ \geq ~ \frac{2(\beta-1)}{\alpha}r.
\]
By our choice of parameters, we get that $\dist(w, c) \leq 7r$, while $\diam(w) \geq 2/5r$.

Hence, for any fixed AVD cell $w$, there can only be $O(1)$ cells, either smaller or larger, to consider for blending upon processing any query point $q \in w$. This completes the proof of Lemma~\ref{avd-lookup-bound.lem}.

\subsection{Compatibility of Shape Functions.} \label{sec:compatibility}

Recall the definitions of the shape functions employed in the Quadtree-EVD data structure per Eqs.~\eqref{eq:shape-b}--\eqref{eq:shape-r}.
\[
    \shape^{[b]}(x) 
        ~ = ~ \frac{1}{r^2}\left\|x - c^{[b]}\right\|^2, \quad 
    \shape^{[e]}(x) 
        ~ = ~ \left( x - c^{[e]} \right)^{\intercal}M\left( x - c^{[e]} \right), \quad 
    \shape^{[r]}(x) 
        ~ = ~ \dfrac{\left\|x - c^{[r]}\right\|^2 - (b^2+a^2)/2}{(b^2-a^2)/2}.
\]
In this section, we show that these are all compatible shape functions. (Recall Definition~\ref{def:compat-shape} from Section~\ref{sec:shape}.)

\subsubsection{Balls and Rings.}

In this section, we present proofs of Lemmas~\ref{quadtree-compatibility.lem} and~\ref{ring-compatibility.lem}, which show respectively that ball and run shape functions are both compatible.

\quadtreeCompatibility*

\begin{proof}
Recall that $r$ is the radius of the cell's minimum enclosing ball $b_w$, and $c^{[b]}$ is the center of this ball. By construction, a leaf-level quadtree cell $w$ is involved with a query point $q$ only if $q \in 2\cdot b_w$. Compatibility condition~(i) follows immediately.

To establish condition~(ii), we proceed to take derivatives as follows:
\begin{align*}
    \Gradient \shape^{[b]}(x) 
        & ~ = ~ \frac{2}{r^2}\left( x - c^{[b]} \right) 
         \implies \|\Gradient \shape^{[b]}(x)\| 
          ~ \leq ~ \frac{4}{r}, \text{~~for $x \in 2\cdot b_w$}, \\
    \Gradient^2 \shape^{[b]}(x) 
        & ~ = ~ \frac{2}{r^2} 
        \implies \|\Gradient^2 \shape^{[b]}(x)\| 
          ~ = ~ \frac{2}{r^2}.
\end{align*}
By Lemma~\ref{sep-props.lem}, $r  = \radius(b_w) \geq c \cdot d_P(x)$ for some constant $c$, which implies that $\|\Gradient \shape^{[b]}(x)\| = O(1/d_P(x))$ and $\|\Gradient^2 \shape^{[b]}(x)\| = O(1/d^2_P(x)) = O(1/(\eps \cdot d^2_P(x)))$, as desired.
\end{proof}

\ringCompatibility*

\begin{proof}
Recall that $a$ and $b$ are the ring's inner and outer radii, and $c^{[r]}$ is their common center. By construction of the revised EVD, a ring $R_i$ is only involved in queries with points $q$ only if $q \in R_i$. Compatibility condition~(i) follows immediately.

To establish condition~(ii), we proceed to take derivatives as follows:
\begin{align*}
    \Gradient \shape^{[r]}(x) 
        & ~ = ~ \frac{4}{b^2 - a^2}\left( x - c^{[r]} \right) 
          \implies \|\Gradient \shape^{[r]}(x)\| 
          ~ \leq ~ \frac{b}{b^2 - a^2}, \text{~~for $x \in R_i$}, \\
    \Gradient^2 \shape^{[r]}(x) 
        & ~ = ~ \frac{4}{b^2 - a^2} 
          \implies \|\Gradient^2 \shape^{[r]}(x)\| 
          ~ = ~ \frac{4}{b^2 - a^2}.
\end{align*}
By Lemma~\ref{sep-props.lem}, $\radius(b_w) \geq c \cdot d_P(x)$ for some constant $c$. This implies that $b > 2 a = \Omega(d_P(x))$, and hence $ \|\Gradient \shape^{[r]}(x)\| = O(1/d_P(x))$ and $\|\Gradient^2 \shape^{[r]}(x)\| = O(1/d^2_P(x)) = O(1/(\eps \cdot d^2_P(x)))$, as desired.
\end{proof}

Beyond establishing compatibility, we need an upper bound on the value of $\shape^{[r]}(q)$ when $q$ lies in the overlap of multiple rings $\{R_i\}$, as will be needed to establish a lower bound on $c_\Psi$ in Lemma~\ref{positive_Psi.lem}. Recall that the $i^{th}$ ring spans the interval $(3\cdot 2^{i-2}, 5\cdot 2^{i-1})\cdot r$, with $r$ the radius of the ball $b$ containing the inner set defining the rings $\{R_i\}$. Normalizing the scaling from $r$, and factoring out $3\cdot 2^{i-2}$,  we simplify the problem by reducing each pair of consecutive rings to the two staggered intervals $(1, 10/3)$ and $(2, 20/3)$. The relevant range, where the overlap happens, is the subinterval $(2, 10/3)$ where query points fall in both intervals. By substituting in the chosen functional form for $\shape^{[r]}$, we find that the maximum absolute value attained over the relevant subinterval is at most $0.72 < 3/4$, as shown in the plot of Figure~\ref{fa.fig}.

\begin{figure}[htbp]
  \centerline{\includegraphics[scale=0.50]{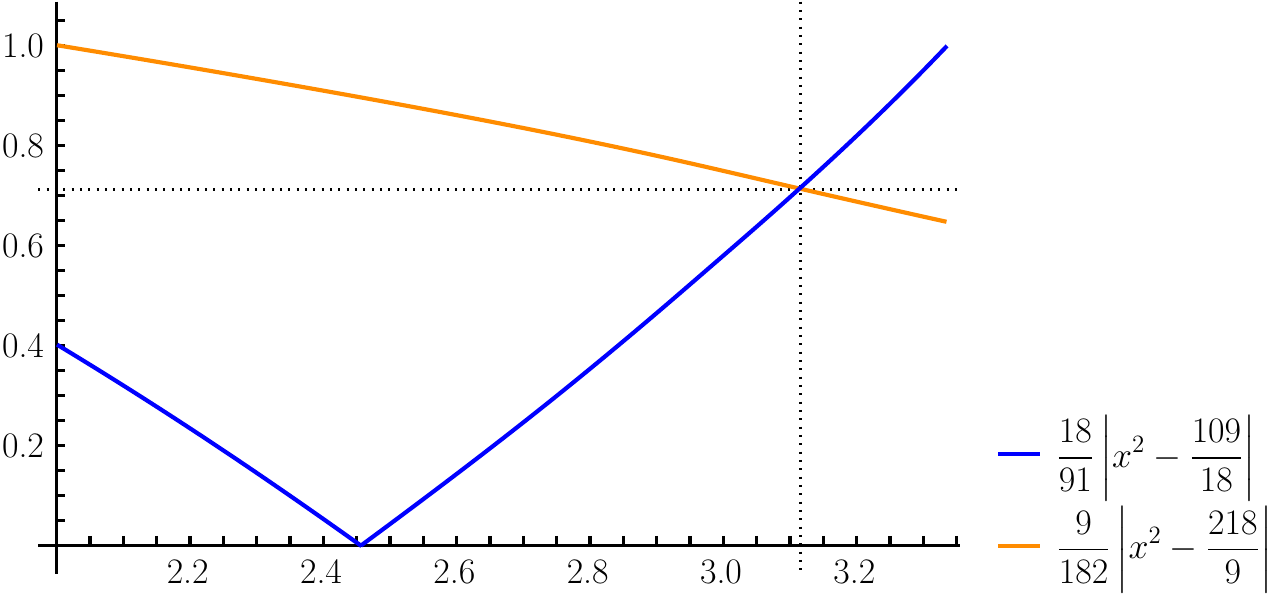}}
  \caption{Bounding the ring function using the staggered radii assignments.} \label{fa.fig}
\end{figure}

\subsubsection{Ellipsoids.}

In this section, we present a proof of Lemma~\ref{ellipsoid-compatibility.lem}, which shows that the ellipsoid shape functions are compatible. The proof relies on an analysis of the minimum eigenvalue of the matrices that define the ellipsoids. Given an ellipsoid defined by the equation $(x - c)^{\intercal}M(x - c)$, let $\alpha_{\min}(M)$ denote its smallest semi-axis and let $\lambda_{\max}(M)$ denote $M$'s maximum eigenvalue. We begin with the following technical result, which follows from basic linear algebra.

\begin{restatable}{lemma}{Mxc} \label{M_x_c.lem}
 Consider any distance-based Macbeath ellipsoid $E'$ centered at $c \in \RE^d$ with associated matrix $M$. Then for all $x \in E'$, $\|M(x - c)\| \leq \sqrt{d}/\alpha_{\min}(M)$.
\end{restatable}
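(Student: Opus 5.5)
We diagonalize $M$ and read off the claim as a statement about an ellipsoid in its principal frame. Write $M = U \Lambda U^{\intercal}$ with $U$ orthogonal and $\Lambda = \mathrm{diag}(\lambda_1,\dots,\lambda_d)$, $\lambda_j > 0$. The ellipsoid $E'$ is $\{x : (x-c)^{\intercal}M(x-c) \leq 1\}$, so its semi-axes are $\alpha_j = \lambda_j^{-1/2}$. Hence $\lambda_{\max}(M) = 1/\alpha_{\min}(M)^2$. Set $y = U^{\intercal}(x-c)$. Since $U$ is orthogonal, $\|M(x-c)\| = \|\Lambda y\|$, and the membership condition becomes $\sum_j \lambda_j y_j^2 \leq 1$.

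The core estimate is Cauchy--Schwarz-style. We have
\[
    \|\Lambda y\|^2 ~=~ \sum_j \lambda_j^2 y_j^2 ~\leq~ \lambda_{\max}(M)\sum_j \lambda_j y_j^2 ~\leq~ \lambda_{\max}(M),
\]
so $\|M(x-c)\| \leq \sqrt{\lambda_{\max}(M)} = 1/\alpha_{\min}(M)$. This already gives the stated bound $\sqrt{d}/\alpha_{\min}(M)$, with room to spare.

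The one point to check is the normalization convention. The shape function $\shape^{[e]}$ in Eq.~\eqref{eq:shape-e} equals $1$ on the boundary of $E'$, so $M$ is normalized so that $E' = \{(x-c)^{\intercal}M(x-c) \leq 1\}$. If instead $M$ were taken to describe the John ellipsoid $E_\delta$, with $E'$ a scaled copy $E^{1/2}_\delta$, the unit-sublevel condition would only hold up to a constant. In that case one can bound coordinatewise, $|y_j| \leq \alpha_j$, which gives $\|\Lambda y\| \leq \big(\sum_j (\lambda_j\alpha_j)^2\big)^{1/2} = \big(\sum_j \alpha_j^{-2}\big)^{1/2} \leq \sqrt{d}/\alpha_{\min}$. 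This coordinatewise argument is presumably where the $\sqrt{d}$ factor in the statement comes from.

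I would present the coordinatewise version, because it yields exactly the stated $\sqrt{d}$ bound. I would also remark that the sharper constant $1$ holds under the unit normalization. Nothing here looks like a real obstacle. The lemma is used afterwards to bound $\|\Gradient \shape^{[e]}\| = 2\|M(x-c)\|$, and for that one will separately need $\alpha_{\min}(M) = \Omega(\eps\,\DF(x))$, but that is outside the present statement.
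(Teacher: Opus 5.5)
Your proof is correct. The paper gives no argument for this lemma beyond the remark that it ``follows from basic linear algebra'', so your diagonalization supplies the missing proof.

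Your first computation is sharper than the statement. It gives $\|M(x-c)\|^2=\sum_j\lambda_j^2y_j^2\le\lambda_{\max}(M)\sum_j\lambda_jy_j^2\le\lambda_{\max}(M)=1/\alpha_{\min}(M)^2$; equivalently, $\|M(x-c)\|\le\|M^{1/2}\|\,\|M^{1/2}(x-c)\|\le 1/\alpha_{\min}(M)$. So the factor $\sqrt{d}$ is unnecessary. Your coordinatewise variant is plausibly what the authors had in mind, which would explain the $\sqrt{d}$ in the statement.

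Your normalization convention matches the paper. The paper requires $\sigma^{[e]}=1$ on $\partial E'$, and it says explicitly that $\alpha_{\min}(M)$ equals $\lambda_{\max}(M)^{-1/2}$.

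One correction to your side remark: the coordinatewise bound does not protect you against a different normalization.
\begin{itemize}
\item The estimate $|y_j|\le\alpha_j$ is itself derived from $\sum_j\lambda_jy_j^2\le 1$, so both versions rest on exactly the same hypothesis.
\item In the scenario you describe ($M$ describing $E_\delta$ and $x\in E_\delta^{1/2}$), the quadratic form is at most $1/4$. The sharp bound then applies directly.
\item If the form were only bounded by some constant $C>1$, the coordinatewise argument would give $\sqrt{Cd}/\alpha_{\min}(M)$, not the stated bound. The sharp argument would give $\sqrt{C}/\alpha_{\min}(M)$, which still implies the statement whenever $C\le d$.
\end{itemize}
So the sharp version is the one with slack to spare. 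It is cleaner to present it and note that it implies the stated bound.
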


For the EVD data structure we use, and the specific modifications described in Section~\ref{sec:blending}, we derive an explicit bound on $\lambda_{\max}(M_i)$ by bounding the smallest semi-axis of the ellipsoid defined by $M_i$, denoted by $\alpha_{\min}(M_i)$, which is equal to the square root of the reciprocal of $\lambda_{\max}(M_i)$.

\begin{restatable}{lemma}{alphaMinBound}\label{alpha_min_bound.lem}
Consider any distance-based Macbeath ellipsoid $E'$ with associated matrix $M$. Then there exists a constant $c$ such that for all $x \in E'$, $\alpha_{\min}(M) \geq c \SP \eps \cdot\DF(x))$.
\end{restatable}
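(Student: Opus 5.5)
Since $E'=E^{1/2}_{\delta_i}(y)$ is the maximum-volume ellipsoid inside $M^{1/2}_{\delta_i}(y)$, and by \eqref{eq:john-thm} $M^{1/2}_{\delta_i}(y)\subseteq E^{\sqrt d/2}_{\delta_i}(y)$, it suffices to lower-bound the inradius of the Macbeath region $M_{\delta_i}(y)$ about its center $y$. By Eq.~\eqref{eq:john-thm}, the smallest semi-axis of $E'$ is then at least a $1/(2\sqrt d)$ fraction of this inradius. Write $p=\nn(y)$.

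Since $M_{\delta_i}(y)=V_{\delta_i}(p)\cap(2y-V_{\delta_i}(p))$ is centrally symmetric about $y$, its inradius about $y$ equals the distance from $y$ to the boundary of $V_{\delta_i}(p)$. That boundary is formed by the bisectors of $(p,p')$ shifted by $\delta_i^2/(2\|p'-p\|)$ away from $p$. Because $y\in V(p)$, $y$ lies on the $p$-side of every true bisector, so its distance to each shifted bisector is at least $\delta_i^2/(2\|p-p'\|)$.

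The key step is to bound $\|p-p'\|$ for the relevant sites $p'$, namely those whose shifted bisector actually meets the region of interest. I use the level geometry in the two cases of Lemma~\ref{root-level.lem}. Let $r=\radius(b)$ and $\delta_i=\Theta(2^i\sqrt\eps\,r)$.

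\textbf{Case $P$ inner.} Every pair of sites satisfies $\|p-p'\|\le 2r$, so the distance to the boundary is $\Omega(4^i\eps r)$. Points $x\in E'$ lie in level $i$'s region, roughly $b_{i+1}$, possibly restricted to the ring $R_i$. Hence $d_P(x)\le O(2^i r)$, which gives $4^i\eps r \ge \eps\, 2^i\cdot\Omega(d_P(x))$. This is at least $c\,\eps\, d_P(x)$ since $2^i\ge1$.

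\textbf{Case $w$ inner.} The relevant sites (possible nearest neighbors of points in the level-$i$ region) satisfy $\|p-p'\|=O(\gamma r)$, where $\gamma r\approx\|p-x_0\|$ is comparable to $d_P(x)$ for $x\in w$ by separation. The level $i$ at which cells are used satisfies $2^i r=O(d_P(x))$, and by the stopping rule also $2^i r=\Omega(\ldots)$. This yields a shift of $\Omega(4^i\eps r^2/d_P(x))$.

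The main obstacle is precisely this second case. Here I need $2^i r=\Omega(d_P(x))$ at the levels where $x\in E'$ is actually used. Otherwise the bound degrades to $\eps r^2/d_P$, which is much smaller than $\eps\, d_P$ when $r\ll d_P$. I would first try to use the fact that the search at the leaf of descent ends at a level with $q\notin b_i$, i.e., $2^i r\lesssim\|p-q\|$. I would combine this with the fact that ellipsoids at lower levels have the cruder bound, possibly truncating levels below $\log(d_P/r)$. Failing that, I would restrict attention to $E'$ containing $x$ of the leaf level and invoke the separation constant of Lemma~\ref{sep-props.lem}, namely $\radius(b_w)\ge c\,d_P(x)$, to get $r=\Omega(d_P(x))$ directly. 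Combining with John's theorem then gives $\alpha_{\min}(M)\ge c\,\eps\,d_P(x)$.

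A further detail is to ensure the bisectors of sites nonrelevant to $y$ cannot cut closer. Such a bisector is farther from $y$ than the nearest-site cluster, and by the triangle inequality its distance from $y$ is at least $\Omega(d_P(y))$, which exceeds $\eps\, d_P$.
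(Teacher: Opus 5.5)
Your argument is correct once you commit to the fallback for the case where $w$ is the inner set, but it is organized differently from the paper's proof.

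The paper centers everything at the query point $x$. It bounds the inradius of $M_{\delta_\ell}(x)$ by the shift $\delta_\ell^2/(2\|p-p'\|)$, using $\|p-p'\|\le 3d_P(x)$ (from $p'$ being an $\eps$-ANN of $x$) and the asserted level bound $d_P(x)\le 2^\ell r$, to get $\alpha_{\min}(E_{\delta_\ell}(x))\ge \eps\, d_P(x)/6$. It then transfers this to the actual ellipsoid $E^{1/2}_{\delta_\ell}(c^{[e]})$ via three steps: expansion-containment ($M^{1/2}_{\delta_\ell}(x)\subseteq M^{7}_{\delta_\ell}(c^{[e]})$ by Lemma~\ref{vor-exp-con.lem}), John's theorem, and monotonicity of the shortest semi-axis under containment.

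Working at the ellipsoid's own center $y$, as you do, needs only John's theorem and the central symmetry of $M_{\delta_i}(y)$, with no expansion-containment. The price is that you must relate $d_P(x)$ to the scale of $y$'s level. That is what forces your case split: ring/ball localization when $P$ is inner, and the last clause of Lemma~\ref{sep-props.lem} when $w$ is inner.

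Your route is also more explicit about two points the paper glosses over. First, your near/far split on $\|p-p'\|$ handles every bounding bisector, whereas the paper only treats sites $p'$ with $x\in V_{\delta_\ell}(p')$. Second, you correctly note that $2^\ell r=\Omega(d_P(x))$ does not follow from the EVD alone when $w$ is the inner set. With $P$ far away, level-1 ellipsoids can have width $O(\eps r^2/d_P)$. So the paper's ``by the construction of the EVD'' is really an appeal to $\radius(b_w)\ge c\,d_P$.

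Two small points. Drop your first idea for that case: $q\notin b_i$ bounds $2^i r$ from above, which is the wrong direction. And, as in the paper, $d_P(x)=O(2^i r)$ holds only for $x$ in the active part of the patch (inside $R_i$ and the ball), not for every point of $E'$.
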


\begin{proof}
We are given a query point $x$ which lies in the ellipsoid centered at $c^{[e]}$ with associated matrix $M$. Let $\ell$ denote the depth of $\cell(c^{[e]})$ in the EVD with the associated ellipsoid $E'_\ell(c^{[e]})$. By definition, $E'(c^{[e]}) = E^{1/2}_{\delta_\ell}(c^{[e]})$, or the $\frac{1}{2}$-scaled distance-based Macbeath ellipsoid at $c^{[e]}$ with respect to the expanded Voronoi cell $V_{\delta_\ell}(\nn(c^{[e]}))$, with $\delta_\ell = 2^\ell\cdot\sqrt{\eps}\cdot r$, where $r = \radius(b)$ and $b$ is a ball as defined in Section~\ref{sec:pou}.

In order to bound $\alpha_{\min}(M)$, which denotes the length of the shortest semi-axis of the ellipsoid $E'(c^{[e]})$, we go through a hypothetical ellipsoid $E_{\delta_\ell}(x)$ whose shape is easier to relate to $\DF(x)$. Then, we appeal to the properties of the Delone set defining the EVD to relate the shape of $E_{\delta_\ell}(x)$ to the shape of $E'(c^{[e]})$, and specifically $\alpha_{\min}(M)$.

Recall that $E_{\delta_\ell}(x) \subseteq M_{\delta_\ell}(x)$, where $M_{\delta_\ell}(x)$ is the distance-based Macbeath region at $x$ with respect to the expanded Voronoi cell $V_{\delta_\ell}(\nn(x))$. To establish a lower bound on $\alpha_{\min}(E_{\delta_\ell}(x))$, we assume that $x$ lies near the periphery of $V_{\delta_\ell}(\nn(x))$, for otherwise $\alpha_{\min}(E_{\delta_\ell}(x))$ can only be larger. Let $p = \nn(x)$ and $p' \in P$ denote any other point such that $x \in V_{\delta_\ell}(p')$.  Thus, the two shifted Voronoi bisectors defined by the pair $p$ and $p'$ support a pair of facets of the polytope $M_{\delta_\ell}(x)$. In order to bound $\alpha_{\min}(E_{\delta_\ell}(x))$, we establish a lower bound on the amount of shifting applied to the Voronoi bisector between $p$ and any such $p'$. Let $\alpha_\ell(p, p')$ denote the associated shift amount.

By the construction of the EVD, we have
\[
    r 
        ~ \leq ~ \DF(x) 
        ~ \leq ~ 2^\ell r.
\]
In addition, the construction of the EVD together with Lemma~\ref{exp-vor.lem} guarantee that $p'$ is an $\eps$-ANN of $x$, that is,
\[
    \|x - p'\| 
        ~ \leq ~ (1 + \eps) \DF(x),
\]
implying that $\|p - p'\| \leq 3\cdot\DF(x)$, by the triangle inequality and the fact that $\eps < 1$.

Substituting for $\delta_\ell$, and invoking the two bounds listed above, into the expression for $\alpha_\ell(p, p')$ from Section~\ref{sec:approx-voronoi}, we obtain
\[
    \alpha_\ell(p, p') 
        ~ =    ~ \frac{\delta_\ell^2}{2\|p - p'\|} 
        ~ \geq ~ \frac{2^{2\ell}\cdot\eps\cdot r^2}{6\cdot\DF(x)} 
        ~ \geq ~ \frac{\eps\cdot\DF^2(x)}{6\cdot\DF(x)} 
        ~ =    ~ \frac{\eps}{6}\cdot\DF(x),
\]
which implies that $\alpha_{\min}(E_{\delta_\ell}(x)) \geq \dfrac{\eps}{6} \cdot\DF(x)$.

To finish the proof, observe that $x \in E^{1/2}_{\delta_\ell}(c^{[e]}) \subseteq M^{1/2}_{\delta_\ell}(c^{[e]})$. By the expansion-containment property of Lemma~\ref{vor-exp-con.lem}, we have
\[
    x \in M^{1/2}_{\delta_\ell}(c^{[e]}) 
    ~ \implies ~ M^{1/2}_{\delta_\ell}(c^{[e]}) \cap M^{1/2}_{\delta_\ell}(x) \neq \emptyset 
    ~ \implies ~ M^{1/2}_{\delta_\ell}(x) \subseteq M^{7}_{\delta_\ell}(c^{[e]}).
\]
By applying John's theorem~\cite{Bal97} to the centrally-symmetric convex bodies $M^{1/2}_{\delta_\ell}(x)$ and $M^{7}_{\delta_\ell}(c^{[e]})$, we have
\[
    E^{1/2}_{\delta_\ell}(x) 
        ~\subseteq~ M^{1/2}_{\delta_\ell}(x) 
        ~\subseteq~ M^{7}_{\delta_\ell}(c^{[e]}) 
        ~\subseteq~ E^{7\sqrt{d}}_{\delta_\ell}(c^{[e]}),
\]
which implies that $E_{\delta_\ell}(x) \subseteq E^{14\sqrt{d}}_{\delta_\ell}(c^{[e]})$. By elementary results in convex geometry~\cite{BoV04}, it follows that
\[
    E_{\delta_\ell}(x) 
        ~ \subseteq ~ E^{14\sqrt{d}}_{\delta_\ell}(c^{[e]}) 
    ~ \implies ~
    \alpha_{\min}(E^{14\sqrt{d}}_{\delta_\ell}(c^{[e]})) 
        ~ \geq ~ \alpha_{\min}(E_{\delta_\ell}(x)) 
        ~ \geq ~ \frac{1}{6}\cdot\eps\cdot\DF(x).
\]
This shows that $\alpha_{\min}(M) = \alpha_{\min}(E^{1/2}_{\delta_\ell}(c^{[e]})) \geq \frac{1}{168\sqrt{d}}\cdot\eps\cdot\DF(x)$.
\end{proof}

Given this bound, we can now prove Lemma~\ref{ellipsoid-compatibility.lem}.

\ellipsoidCompatibility*

\begin{proof}
By construction, a Macbeath ellipsoid $E'(x)$ is involved with a query point $q$ only if $q \in E'(x)$. Compatibility condition~(i) follows immediately.

To establish condition~(ii), we proceed to take derivatives as follows:
\begin{align*}
    \Gradient \shape^{[e]}(x) 
        & ~ = ~ 2\cdot M(x - c^{[e]}) 
    ~ \implies ~
    \|\Gradient \shape^{[e]}(x)\| 
          ~ = ~ O(\sqrt{\lambda_{max}(M)}), \text{~~for $x \in E'(x)$}, \\
    \Gradient^2 \shape^{[e]}(x) 
        & ~ = ~ 2\cdot M 
    ~ \implies ~ 
    \|\Gradient^2 \shape^{[e]}(x)\| 
         ~ = ~ O(\lambda_{max}(M)).
\end{align*}
The proof follows from Lemmas~\ref{M_x_c.lem} and~\ref{alpha_min_bound.lem}.
\end{proof}


\pdfbookmark[1]{References}{s:ref}

\bibliographystyle{plainurl}
\bibliography{shortcuts,convex,references} 

\end{document}